\documentclass[10pt,a4paper]{article}
\usepackage{lmodern}
\usepackage[T1]{fontenc}
\usepackage{booktabs}
\usepackage{colortbl}
\usepackage{xcolor}
\usepackage{adjustbox}
\usepackage{array}

\definecolor{imposs}{RGB}{200,40,40}
\definecolor{newcell}{RGB}{0,90,180}

\usepackage{authblk}
\usepackage{geometry}
\usepackage{amsmath,amssymb,amsthm}
\usepackage{graphicx}

\usepackage{hyperref}

\usepackage{latexsym}
\usepackage{algorithm}
\usepackage{algpseudocode}
\usepackage{amssymb,amsthm}
\usepackage{amsfonts}
\usepackage{amsmath}
\usepackage{multirow,eepic}
\usepackage{xspace}
\usepackage{tikz}
\usetikzlibrary{positioning}
\usetikzlibrary{automata}
\usetikzlibrary{arrows}
\usetikzlibrary{calc}
\usetikzlibrary{arrows.meta}
\makeatletter
\def\senbun#1(#2)#3({\@senbun(#2)(}
\def\@senbun(#1,#2)(#3,#4){%
   \@tempdima#1\p@ \advance\@tempdima#3\p@
   \divide\@tempdima\tw@
   \@tempdimb#2\p@ \advance\@tempdimb#4\p@
   \divide\@tempdimb\tw@
   \edef\@senbun@temp{\noexpand\qbezier(#1,#2)%
      (\strip@pt\@tempdima,\strip@pt\@tempdimb)(#3,#4)}%
  \@senbun@temp}
\makeatother
\newcommand{\ASY}{\textnormal{\textsc{Asynch}}\xspace}
\newcommand{\FSY}{\textnormal{\textsc{Fsynch}}\xspace}
\newcommand{\SSY}{\textnormal{\textsc{Ssynch}}\xspace}
\newcommand{\RSY}{\textnormal{\textsc{Rsynch}}\xspace}
\newcommand{\RR}{\textnormal{\textsc{RR}}\xspace}

\newcommand{\LU}{\ensuremath{\mathcal{LUMI}}\xspace}
\newcommand{\FS}{\ensuremath{\mathcal{FST\!A}}\xspace}
\newcommand{\FC}{\ensuremath{\mathcal{FCOM}}\xspace}
\newcommand{\OB}{\ensuremath{\mathcal{OBLOT}}\xspace}
\newcommand{\Look}{\ensuremath{\mathit{Look}}\xspace}
\newcommand{\Compute}{\ensuremath{\mathit{Compute}}\xspace}
\newcommand{\Move}{\ensuremath{\mathit{Move}}\xspace}
\newcommand{\LCM}{\ensuremath{\mathit{LCM}}\xspace}

\newcommand{\N}{{\rm I\kern-.22em N}} 
\newcommand{\Z}{{\sf Z\kern-.42em Z}} 
\newcommand{\R}{{\rm I\kern-.22em R}}

\newcommand{\ML}{\ensuremath{\mathcal L}\xspace}

\newcommand{\QS}{quasi-self-stabilizing}
\newcommand{\SSS}{self-stabilizing}

\newcommand{\obs}{\mathit obs}
\newcommand{\C}{\mathcal{C}}
\newcounter{Codeline}
\newcommand{\Newcodeline}{\setcounter{Codeline}{1}}

\newtheorem{theoremduplicate}{Theorem}
\newtheorem{lemmaduplicate}{Lemma}
\usepackage{graphicx}
\usepackage{tikz}
\usetikzlibrary{automata,positioning,arrows.meta,calc}

\newcommand{\DE}{\color{black}}

\newtheorem{lemma}{Lemma}
\newtheorem{theorem}{Theorem}
\newtheorem{corollary}{Corollary}
\newtheorem{proposition}{Proposition}
\newtheorem{remark}{Remark}
\theoremstyle{definition}

\newtheorem{definition}{Definition}

\title{Rendezvous under Variable Disorientation:\\
The Algorithmic Power of Fixed Unit Distance}
\date{}

\author[1]{Toshimitsu Masuzawa}
\author[2]{Yuichi Sudo}
\author[2]{Koichi Wada}

\affil[1]{Notre Dame Seishin University, Okayama, Japan}
\affil[2]{Hosei University, Tokyo, Japan}

\begin{document}
 
\maketitle

\begin{abstract}
We study deterministic Rendezvous of two robots with lights
under weak geometric and synchronization assumptions.
Our focus is on Variable Disorientation (VD), where each robot
may arbitrarily rotate, reflect, and rescale its local coordinate
system at every \Look, and on VD+Fixed-Unit-Distance (FUD),
where each robot's local unit distance remains fixed.
We consider the one-sided visibility models \(\FS\) and \(\FC\),
and schedulers ranging from the energy-restricted
\RSY\ and R-\RSY\ to fully asynchronous executions.

Under VD, we prove color-independent impossibility results for
\(\FS\) under \SSY\ and for \(\FC\) under \ASY, and
establish sharp two-color boundaries under \RSY\ and
R-\RSY.
Under VD+FUD, we introduce \emph{periodic distance classes},
which turn multiplicative changes of the physical distance into
predictable cyclic phase shifts.
This technique yields a two-color self-stabilizing
non-\ML-Rendezvous algorithm for both \(\FS\) and
\(\FC\) under \RSY\ and R-\RSY, even with Non-Rigid
movement.

Under Rigid movement, periodic distance classes further yield
an optimal three-color non-quasi-self-stabilizing algorithm for
\(\FS\) under \SSY, a four-color \(\FS\) algorithm under
$M$-atomic \ASY, and a four-color \(\FC\) algorithm under
CM-atomic \ASY.
By guarding the latter phase mechanism with two additional
colors and replacing its initial swap by passive initialization,
we obtain a six-color \(\FC\) algorithm under full \ASY,
improving the previous twelve-color upper bound by a factor of
two.

These results show that a fixed local metric can serve as
persistent cyclic memory, and clarify how metric stability,
one-sided visibility, and scheduler atomicity jointly determine
Rendezvous solvability and color complexity.
\end{abstract}

\section{Introduction}

\noindent\textbf{Robot Models.}\enspace
The computational issues of autonomous mobile robots
have been extensively studied in distributed computing~\cite{AP,BDT,CFPS,DKLMPW,IBTW,KLOT,SDY}
since the seminal work of Suzuki and Yamashita~\cite{SY}.
In the standard setting, a robot is modeled as a point in the plane,
and robots are anonymous, uniform, and deterministic.
They operate in \Look-\Compute-\Move (\LCM) cycles:
each activated robot observes the current configuration,
computes a destination and a new internal state,
and then moves accordingly.

To overcome the limitations of oblivious robots,
persistent lights have been introduced,
leading to the models $\OB$~\cite{SY}, $\FS$ and $\FC$~\cite{FSVY}, and $\LU$~\cite{DFPSY}.
In this paper we focus on the two one-sided visibility models:
$\FS$, where a robot observes only its own light,
and $\FC$, where it observes only the other robot’s light.
Although these models differ in the direction of visibility,
many of our arguments depend only on the fact that
exactly one color is visible at each \Look.
This motivates our unified use of the one-sided visibility model.

\noindent\textbf{Geometric Assumptions and Schedulers.}\enspace
We study weak geometric settings.
In particular, under \emph{Variable Disorientation} (VD),
the adversary may arbitrarily rotate, reflect,
and rescale each robot’s local coordinate system at every \Look phase.
We also consider its refinement
\emph{VD+Fixed-Unit-Distance} (FUD),
where the local unit distance remains fixed.
These assumptions capture situations in which
robots do not share a stable geometric reference frame.

We consider both the energy-restricted schedulers \RSY\ and
R-\RSY~\cite{BuFKPSW22} and increasingly asynchronous settings:
\SSY, \(CM\)-atomic and \(M\)-atomic \ASY, and full \ASY~\cite{FSSW23}.
This allows us to study both phase-structured activation and
the gradual loss of atomicity between \Compute and \Move.


\noindent\textbf{Rendezvous and \ML-Algorithms.}\enspace
Rendezvous is the two-robot version of Gathering:
two initially separated robots must meet at an a priori unknown
location in finite time.
It has been extensively studied under various geometric and
scheduling assumptions~\cite{DFPSY,DePP20,FSW19,IKIW,ISKIDWY,LMA,OWD,SDY}.
A central class of Rendezvous algorithms is formed by \ML-algorithms.
In such algorithms, each robot always computes a destination
on the line connecting the two robots, and both the movement
parameter and the light update are determined solely by the
observed color, independently of the distance between the
robots.
Under the one-sided visibility model considered in this paper,
this means that the robot reacts only to the single observed color
at each \Look.
Because \ML-algorithms abstract away from metric information,
they provide a natural reference class
for studying Rendezvous under weak geometric assumptions.


\noindent\textbf{Related Work and Landscape.}\enspace
Previous work has established several sharp possibility
and impossibility results for \ML-algorithms
under geometrically stable settings.
In particular, a number of classical lower and upper bounds
were obtained under fixed disorientation
and fixed unit distance~\cite{FSVY,Vig}.
The present paper revisits this landscape
from the viewpoint of VD and VD+FUD.
Under VD, some of the known impossibility results
for \ML-algorithms remain valid,
and in the scheduler/model combinations considered here
they can even be strengthened beyond the \ML\ setting.
Under VD+FUD, by contrast,
the recovered metric stability enables new positive results.
In this sense, our results provide
a geometric reclassification of the known
\ML-Rendezvous landscape.
Tables~\ref{tab:VD} and~\ref{tab:FUD}
make this relationship explicit. In these tables, SS, QS, and NS mean self-stabilizing,
quasi-self-stabilizing, and non-quasi-self-stabilizing
algorithms, respectively.
A single number \(k\) indicates that the exact color complexity
is \(k\).
An entry \(k\text{--}\ell\) indicates a lower bound of \(k\)
and an upper bound of \(\ell\), whereas \(k\text{--}\mathord{?}\)
indicates that the current lower bound is \(k\) but no matching
finite upper bound is known.
The term \emph{impossible} means that no deterministic
algorithm exists in that setting, regardless of the number of
colors.

For the intermediate atomic variants of \ASY, which are
not included in Table~\ref{tab:FUD}, we additionally obtain
four-color non-quasi-self-stabilizing algorithms for
\(\FS\) under M-atomic \ASY\ and for
\(\FC\) under CM-atomic \ASY, both under VD+FUD and
Rigid movement.

\begin{table}[!t]
  \centering
  \caption{Known and new results under Variable Disorientation (VD).
  \textcolor{red}{Red entries} indicate the results established in this paper. (*) This entry is the only non-\ML-algorithm~\cite{HDT}; all other algorithmic entries are \ML-algorithms~\cite{FSVY}.}
  \includegraphics[width=0.76\linewidth,trim=6 110 10 84,clip]{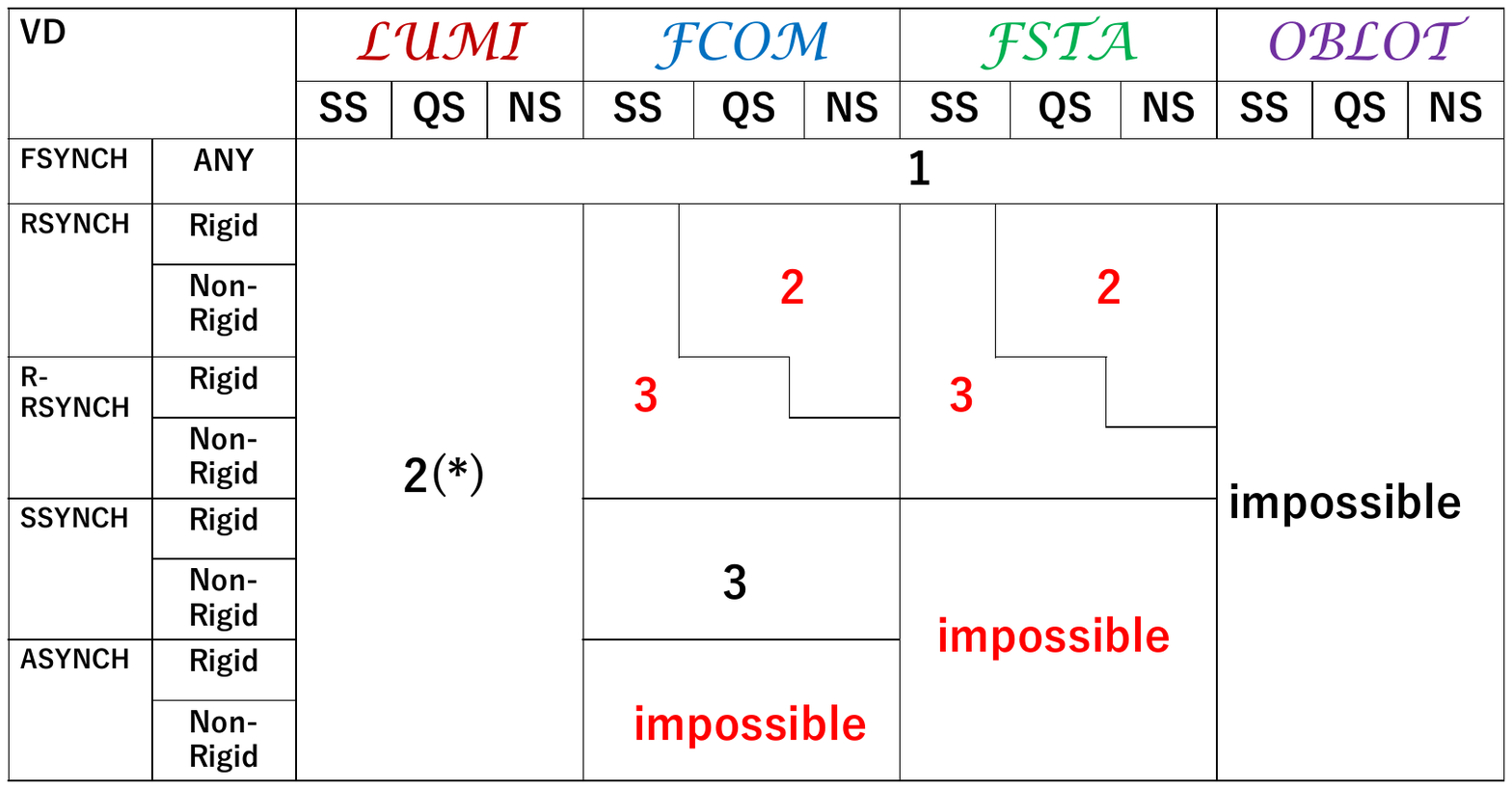}
    \label{tab:VD}
\end{table}

\begin{table}[!t]
  \centering
  \vspace{-0.1em}
  \caption{Known and new results under VD+Fixed-Unit-Distance
(FUD) for the standard schedulers.
Red numbers indicate exact color complexities or new upper
bounds established in this paper.
In the Rigid \(\FC\)-\ASY-NS setting, our six-color
algorithm improves the previous twelve-color upper
bound~\cite{FSVY}.}
  \includegraphics[width=0.76\linewidth,trim=5 80 10 82,clip]{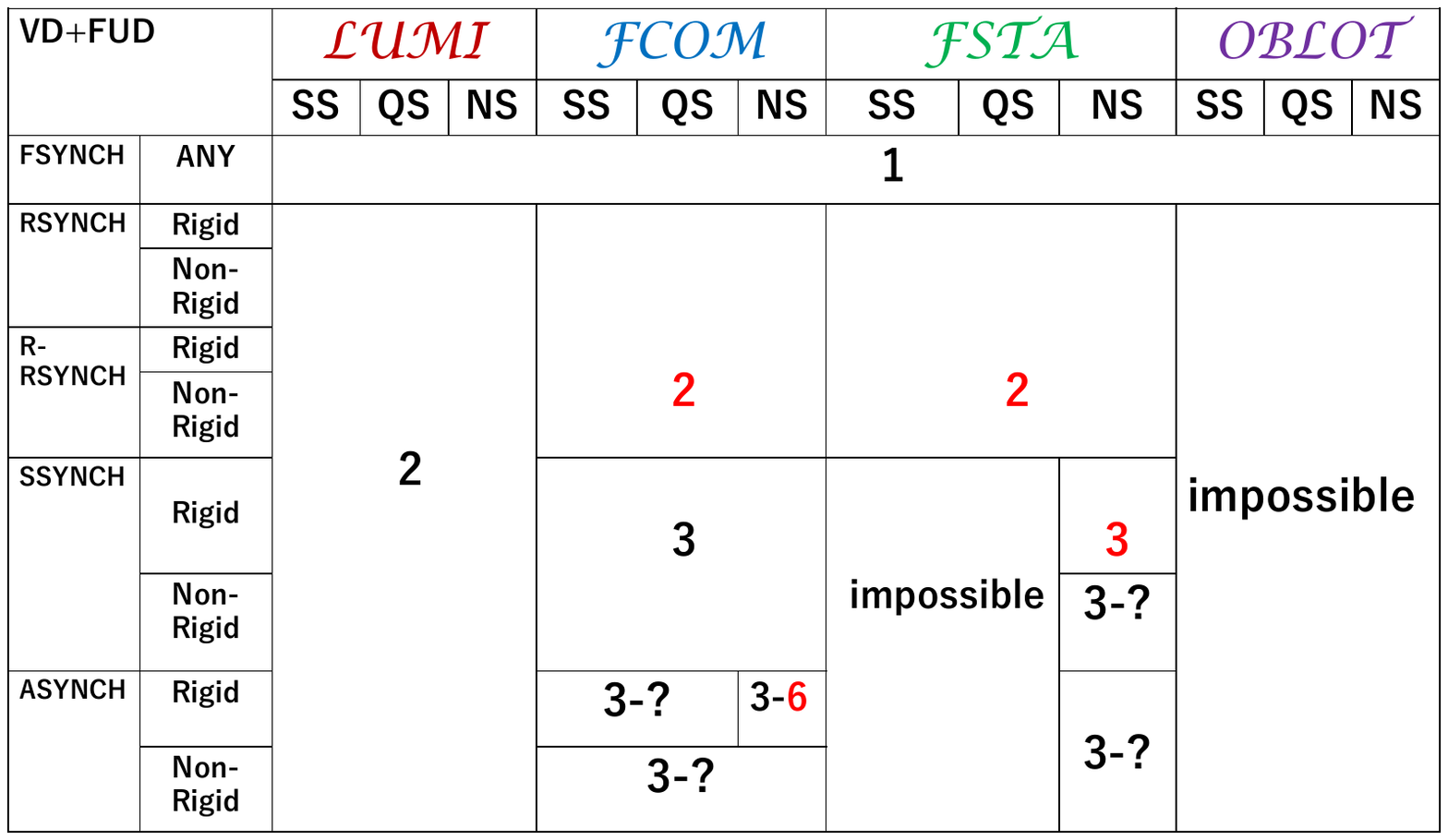}
    \label{tab:FUD}
\end{table}

\noindent\textbf{Our Contribution.}\enspace
Our results are organized around the contrast between VD and
VD+FUD, and around the use of a fixed local unit as a source of
persistent cyclic metric information.

First, under VD, we establish impossibility results for
\(\FS\) under \SSY\ and for \(\FC\) under \ASY,
without restricting the number of colors or the algorithm to
the \ML-class.
We also determine the role of two colors under \RSY\ and
R-\RSY\ in the one-sided visibility model.
Under \RSY, two colors suffice for quasi-self-stabilizing
\ML-Rendezvous but not for self-stabilization.
Under R-\RSY, no deterministic two-color
quasi-self-stabilizing algorithm exists, whereas under Rigid
movement a two-color non-quasi-self-stabilizing \(L\)-algorithm
does exist.

Second, under VD+FUD, we introduce periodic
\((b,m)\)-distance classes: logarithmic distance bands whose
successive boundaries differ by a factor \(b\), labeled
cyclically with \(m\) classes.
Consequently, multiplying or dividing the physical distance by
\(b\) shifts the class by one modulo \(m\), allowing a robot's
fixed local unit to serve as finite cyclic memory.
A related use of metric periodicity appears in a three-robot
Distinct Gathering algorithm under FUD, where a
golden-ratio-based periodic partition is employed~\cite{FreiWada26}.


The \((4,2)\)-classes break the mixed-color cycle and yield a
two-color self-stabilizing non-\ML-Rendezvous
algorithm for both \(\FS\) and \(\FC\) under \RSY\ and
R-\RSY, even with Non-Rigid movement.
Under Rigid movement, the \((2,2)\)- and \((2,3)\)-classes
yield an optimal three-color \(\FS\) algorithm under \SSY,
four-color algorithms for \(\FS\) under \(M\)-atomic \ASY\
and for \(\FC\) under \(CM\)-atomic \ASY, and a six-color
\(\FC\) algorithm under full \ASY, improving the previous
twelve-color upper bound to six.

\section{Robot Model}

\subsection{Robots and Configurations}

We consider two anonymous robots $r$ and $s$
operating in $\mathbb{R}^2$.
Robots execute deterministic \\
\Look–\Compute–\Move (\LCM) cycles~\cite{FPS}.
Robots have no identifiers and execute the same algorithm.
At Look, a robot records the peer's position and the light
information visible in its model.
At Compute, it determines its light update and movement;
at Move, it follows the computed trajectory.
Only light values persist between LCM cycles.
All algorithms presented below use straight-line movements.
Movements of robots are said to be {\em Rigid} if
the robots always reach the computed destination. 
They are said to be {\em Non-Rigid} if there exists a constant $\delta>0$ such that
whenever the computed destination is at distance at least $\delta$ from the
current position, the adversary may stop the robot before the destination
but only after it has moved at least $\delta$; if the destination is closer
than $\delta$, the robot reaches it.


Each robot has its own local coordinate system
in which it sees itself at the origin.
The local coordinate systems (i.e., the direction, the orientation, and the unit of distance) of different robots need not be consistent.
The absence of this consistency is called {\em disorientation}.
Under \emph{Variable Disorientation (VD)},
at every \Look phase the adversary may independently
apply an arbitrary rotation, reflection,
and scaling to each robot’s local coordinate system.
Hence,
directions are not globally consistent,
 orientation may flip, and
 distances are not metrically reliable.
Under \emph{VD+Fixed-Unit-Distance (FUD)},
the scaling factor of each robot’s local coordinate system
remains fixed throughout the execution.
Thus, while orientations may still vary,
each robot has a consistent notion of unit distance.
This additional geometric stability
strictly increases computational power.

Robots may be equipped with persistent lights whose joint
values are taken from a finite set \(\Gamma\) of colors.
A light update is committed atomically at the end of Compute,
before the corresponding Move begins.
The standard $\LU$ model allows a robot to observe both its own
light and the other robot's light, whereas $\OB$ robots have
no lights.
These two models appear in
Tables~\ref{tab:VD} and~\ref{tab:FUD} only to place our results
in the known landscape.

In this paper, we focus on the two one-sided visibility models:
\[
\begin{array}{ll}
\FS: & \text{a robot observes only its own light},\\
\FC: & \text{a robot observes only the other robot's light}.
\end{array}
\]
Thus, at each \Look, robot \(x\) observes exactly one color,
denoted by
\[
    \obs_x(t)=
    \begin{cases}
        c_x(t),&\text{in \FS},\\
        c_{\bar{x}}(t),&\text{in \FC},
    \end{cases}
\]
where $c_x(t)$ is the color of robot $x$ at time $t$ and $\bar{x}$ denotes the other robot.
Let \(p_x(t)\) be the position of robot \(x\) in a global
coordinate system unknown to the robots, and let
\(p_{\bar{x}}^x(t)\) denote the position of the other robot in
the local coordinate system of \(x\).
Since \(x\) sees itself at the origin, its relevant snapshot in
the one-sided visibility model can be represented compactly by
$
    SS_x(t)
    =
    \bigl(p_{\bar{x}}^x(t),\obs_x(t)\bigr).
$

A configuration at time \(t\) is
$
    \mathcal C(t)
    =
    \left\langle
        (p_r(t),c_r(t)),
        (p_s(t),c_s(t))
    \right\rangle.
$
Whenever an argument depends only on the fact that exactly one
color is visible, it applies uniformly to both $\FS$ and $\FC$.
In the general \(\FC\) model, a robot may update any subset of
its light components; all other components retain their current
values.
In both \(\FS\) and \(\FC\), \(\mathsf{Keep}\) denotes the
empty update and leaves the entire light unchanged.
All explicit algorithms in this paper use a single light
component, so \(\mathsf{Keep}\) is the only partial update they
use.

\subsection{Schedulers}

Executions are sequences of activations
chosen by an adversarial fair scheduler.

\textbf{\FSY:}
both robots are activated simultaneously in every round.

\textbf{\sc{Round Robin} (RR):}
exactly one robot is activated in each round, and the two
robots are activated alternately.

\textbf{\RSY:}
an execution is either a pure \FSY\ execution or consists of a
finite, possibly empty, prefix of \FSY\ rounds followed by an
infinite \RR suffix.

\textbf{R-\RSY\footnote{R means reverse.}:}
an execution is either a pure \RR\ execution or consists of a
finite, possibly empty, prefix of \RR\ rounds followed by an
infinite \FSY\ suffix.

\textbf{\SSY:}
in each round, a nonempty subset of robots is activated.

Thus, both \RSY\ and R-\RSY\ include pure \FSY\ and pure
\RR\ executions and are restricted subclasses of \SSY.
The two schedulers \RSY\ and R-\RSY\ are incomparable.

The following asynchronous schedulers differ in their
atomicity constraints.

\textbf{\(CM\)-atomic \ASY:}
once a robot begins its \Compute operation,
no other robot performs a \Look until that robot completes
the corresponding \Move operation.

\textbf{\(M\)-atomic \ASY:}
no robot performs a \Look while another robot is executing
its \Move operation.

\textbf{\ASY:}
robots execute LCM cycles asynchronously, with no global notion
of rounds and no atomicity constraint between their operations.

Every \(CM\)-atomic execution is \(M\)-atomic, and every
\(M\)-atomic execution is an \ASY\ execution.
For \FS, \(CM\)-atomic and \(M\)-atomic \ASY\ are
observationally equivalent, so the same algorithm with the same
number of colors works under both.
For \FC, they are observationally distinguishable but have the
same computational power; the known simulation may require
additional colors~\cite{FSSW23}.\footnote{The additional observable
configuration under \(M\)-atomicity occurs after the peer's
\Compute operation and before its \Move operation.
It is indistinguishable in \FS, because the peer's color is not
visible, but may reveal a new peer color in $\FC$.}

\subsection{Rendezvous and Stabilization}
\emph{Rendezvous} is achieved
if there is some time $t$ such that for any $t' \geq t$,
$p_r(t') = p_s(t')$.

Let \(\mathcal A\) be a Rendezvous algorithm.
It is \emph{self-stabilizing} if the two robots conforming to
\(\mathcal A\) solve Rendezvous from any configuration,
regardless of their initial colors.
It is \emph{quasi-self-stabilizing} if they solve Rendezvous
from any configuration in which the two robots have the same color.
It is \emph{non-quasi-self-stabilizing} if there exists a color
\(c\) such that they solve Rendezvous from any configuration
in which both robots have color \(c\).

A specific category of algorithms, \(\ML\)-algorithms,
requires each robot to determine both the movement parameter
\(\lambda\) and its light update solely from the colors
observed in the snapshot.
Neither may depend on the distance between the robots.
For the one-sided visibility model considered here, both
depend only on \(\obs_x(t)\).

\subsection{Algorithmic Notation and Periodic Distance Classes}
\label{subsec:algorithmic-notation}

This subsection introduces notation and geometric concepts used
in the descriptions and analyses of the algorithms presented later.
In particular, we define a unified representation of algorithms in
the one-sided visibility model, standard movement primitives, and
periodic distance classes induced by a robot's fixed unit distance.

\subsubsection{Algorithmic graphs and standard destinations}

Recall that, in the one-sided visibility model, each robot \(x\)
observes exactly one color, denoted by \(\obs_x(t)\).
In the \(\FS\) interpretation, \(\obs_x(t)\) is the color of \(x\)
itself, whereas in the \(\FC\) interpretation, it is the color of
the other robot \(\overline{x}\).

An algorithm in the one-sided visibility model can be represented
by a labeled directed graph
$
    G=(V,E,\ell),
$
where \(V\) is the set of colors, \(E\subseteq V\times V\) is the
set of color transitions, and the label of an edge specifies a
condition on the current snapshot and the corresponding destination.
For the single-light-component algorithms considered below, a rule
has the form
\(
    (a,\chi)
    \longmapsto
    (u,d),
\)
where \(a\) is the observed color, \(\chi\) is a condition on
the current snapshot,
\(
    u\in\Gamma\cup\{\mathsf{Keep}\}
\)
is the light update, and \(d\) is the destination.
If \(u\in\Gamma\), the robot writes \(u\) to its own light at
the end of \Compute.
If \(u=\mathsf{Keep}\), it performs no light update.

For a robot \(x\), a time \(t\), and a movement parameter
\(\lambda\in\mathbb{R}\), define
$
    \operatorname{dest}_x(t,\lambda)
    =
    (1-\lambda)p_x(t)+\lambda p_{\overline{x}}(t).
$
The algorithms presented later use the following three standard
destinations:
\[
\begin{aligned}
    \mathsf{Stay}_x(t)
        &=\operatorname{dest}_x(t,0)
          =p_x(t),\\
    \mathsf{Midpoint}_x(t)
        &=\operatorname{dest}_x\left(t,\frac12\right)
          =\frac{p_x(t)+p_{\overline{x}}(t)}{2},\\
    \mathsf{Other}_x(t)
        &=\operatorname{dest}_x(t,1)
          =p_{\overline{x}}(t).
\end{aligned}
\]
When \(x\) and \(t\) are clear from the context, we omit the
subscripts and arguments and simply write
\(\mathsf{Stay}\), \(\mathsf{Midpoint}\), and \(\mathsf{Other}\).
We also abbreviate these three destinations (or the corresponding
actions) by \(S\), \(H\), and \(O\), respectively, where
\(H\) stands for a \emph{halfway move}.

We use the full destination names in formal algorithm
descriptions and pseudocode, and use \(S,H,O\) in graph labels,
compact transition summaries, and geometric action pairs.

Under the \(\FS\) interpretation, the algorithmic graph is the
state-transition graph of each robot.  Under the \(\FC\)
interpretation, it is a reaction graph with respect to the color
observed on the other robot; in particular, it is not the
state-transition graph of the observing robot.

For synchronous executions, if a configuration \(C\) changes to
\(C'\) by activating the set
$
    a\in\bigl\{\{r\},\{s\},\{r,s\}\bigr\}
$
and letting every activated robot execute one complete cycle of an
algorithm \(\mathcal A\), we write
$
    C
    \xrightarrow[\mathcal A]{a}
    C'.
$
When the algorithm is clear from the context, the subscript
\(\mathcal A\) is omitted.  

\subsubsection{Physical and locally measured distances}
\label{subsubsec:locally_measured_distance}
Let
$
    dis(t)=\|p_r(t)-p_s(t)\|
$
denote the physical distance between the two robots at time \(t\).
For each robot \(x\), let \(u_x>0\) denote the physical length
corresponding to one unit in its local coordinate system.
The distance from \(x\) to the other robot measured in the local
coordinate system of \(x\) is denoted by
$    d_x(t)=\frac{dis(t)}{u_x}.$
Under FUD, \(u_x\) remains fixed throughout the execution,
although \(u_r\) and \(u_s\) need not be equal.

\subsubsection{Periodic distance classes}\label{subsubsec:periodic_distance_classes}

We next define a common framework for the distance partitions used
in several algorithms.

\begin{definition}[Periodic distance classes]
\label{def:periodic-distance-classes}
For \(b>1\), an integer \(m\geq2\), and a physical distance \(D>0\), define
\(
    \kappa_{x,b}^{(m)}(D)
    =
    \left\lfloor
        \log_b\frac{D}{u_x}
    \right\rfloor
    \pmod m.
\)
The adjective ``periodic'' refers to the fact that multiplying or
dividing a distance by the base \(b\) shifts its class cyclically.
\end{definition}

For a physical distance \(D>0\), we abbreviate
\(
    \kappa_x^{(2)}(D)
        :=\kappa_{x,2}^{(2)}(D),
    \:
    \kappa_x^{(3)}(D)
        :=\kappa_{x,2}^{(3)}(D).
\)
When the argument is a time \(t\), it denotes the class of the
current physical distance:
\(
    \kappa_{x,b}^{(m)}(t)
        :=\kappa_{x,b}^{(m)}(dis(t)).
\)
Whenever \(dis(t)=0\), every algorithm using periodic distance
classes applies the priority rule
\(
    (\mathsf{Keep},\mathsf{Stay})
\)
without evaluating a distance class.

\begin{lemma}[Periodic distance-class shift]
\label{lem:periodic-distance-class-shift}
Under FUD, for every integer \(h\geq0\),
\[
    \kappa_{x,b}^{(m)}(D/b^h)
    =
    \kappa_{x,b}^{(m)}(D)-h
    \pmod m.
\]
\end{lemma}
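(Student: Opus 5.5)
The plan is to unfold Definition~\ref{def:periodic-distance-classes} and use two elementary facts: the logarithm turns division by \(b^h\) into subtraction of \(h\), and the floor function commutes with integer shifts. FUD is used only to guarantee that \(u_x\) is a single fixed constant. Consequently, the same \(u_x\) appears in both \(\kappa_{x,b}^{(m)}(D)\) and \(\kappa_{x,b}^{(m)}(D/b^h)\), even if the two distances are measured at different times. Without FUD, the adversary could rescale between the two evaluations, and the identity would fail. This is the only place where the hypothesis enters.

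The key steps, in order, are as follows.

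First, fix \(D>0\) and an integer \(h\ge 0\), and write
\[
\log_b\frac{D/b^h}{u_x}=\log_b\frac{D}{u_x}-h .
\]
This uses \(\log_b(b^{-h})=-h\), which holds because \(b>1\).

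Second, apply the identity \(\lfloor y-h\rfloor=\lfloor y\rfloor-h\), valid for every real \(y\) and every integer \(h\). It holds because \(\lfloor y\rfloor\le y<\lfloor y\rfloor+1\) implies \(\lfloor y\rfloor-h\le y-h<\lfloor y\rfloor-h+1\), and \(\lfloor y\rfloor-h\) is an integer.

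Third, reduce both sides modulo \(m\). Congruence modulo \(m\) is compatible with subtraction of integers, so
\[
\kappa_{x,b}^{(m)}(D/b^h)\equiv \lfloor\log_b(D/u_x)\rfloor-h\equiv \kappa_{x,b}^{(m)}(D)-h \pmod m .
\]

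There is no real obstacle here; the statement is a routine calculation. The only point needing care is bookkeeping. The right-hand side must be read as a residue in \(\{0,\dots,m-1\}\) taken after the subtraction. This matters because \(\lfloor\log_b(D/u_x)\rfloor\) may be negative when \(D<u_x\), so the floor value should be reduced with a nonnegative representative mod \(m\).

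I would also note that the same argument works verbatim for negative \(h\). Hence multiplication by \(b^h\) shifts the class by \(+h\), which is the ``dividing or multiplying'' remark following Definition~\ref{def:periodic-distance-classes}.
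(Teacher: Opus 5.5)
Your proof is correct and takes the same approach as the paper. The paper only says the identity follows immediately from the definition and from the fact that \(u_x\) is fixed under FUD; you have written out the logarithm, floor-shift, and reduction-mod-\(m\) steps explicitly.
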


\begin{proof}
This follows immediately from the definition and the fact that
\(u_x\) is fixed under FUD.
\end{proof}

\begin{lemma}[Phase complementarity]
\label{lem:phase-complementarity}
Let the robots be at the same physical distance \(D>0\), and let
$
    i=\kappa_r^{(3)}(D),
    \:
    j=\kappa_s^{(3)}(D).
$
If robot \(r\) observes a phase color \(C_j\) and robot \(s\)
observes a phase color \(C_i\), then their phase offsets satisfy
$
    \Delta_r=i-j,
    \:
    \Delta_s=j-i=-\Delta_r
    \pmod3.
$
Consequently,
$
(\Delta_r,\Delta_s)
\in
\{(0,0),(1,2),(2,1)\}.
$
\end{lemma}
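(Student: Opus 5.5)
The plan is to unfold the definition of the phase offset and then use that \(\mathbb{Z}_3\) is an additive group. The statement does not define \(\Delta_x\) explicitly. I read it, as the later algorithms presumably intend, as the difference (mod \(3\)) between the robot's own current distance class and the index of the phase color it observes:
\[
    \Delta_x=\kappa_x^{(3)}(D)-k\pmod 3\qquad\text{when }\obs_x=C_k .
\]
The first step is to fix this reading. Both robots evaluate their class at the same physical distance \(D=dis(t)\). This is automatic because \(dis(t)=\|p_r(t)-p_s(t)\|\) is symmetric in the two robots, so \(i\) and \(j\) are classes of a single quantity measured against two different fixed units \(u_r,u_s\). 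Under FUD these classes are well defined, and Definition~\ref{def:periodic-distance-classes} places them in \(\{0,1,2\}\).

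The second step substitutes the hypotheses. Robot \(r\) has class \(i\) and observes \(C_j\), so \(\Delta_r=i-j\). Robot \(s\) has class \(j\) and observes \(C_i\), so \(\Delta_s=j-i\). Adding the two gives \(\Delta_r+\Delta_s\equiv0\pmod3\), that is, \(\Delta_s=-\Delta_r\).

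The final step enumerates the cases. \(\Delta_r\) ranges over \(\{0,1,2\}\), and the negatives mod \(3\) are \(0\mapsto0\), \(1\mapsto2\), \(2\mapsto1\). Hence \((\Delta_r,\Delta_s)\) lies in \(\{(0,0),(1,2),(2,1)\}\).

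The only real obstacle is definitional. I have to make sure the sign convention for \(\Delta_x\) (own class minus observed index, or the reverse) matches the paper's later usage. Either convention yields the same complementarity, since negating both offsets maps the set \(\{(0,0),(1,2),(2,1)\}\) to itself. I would remark this so the lemma is robust to the convention. Lemma~\ref{lem:periodic-distance-class-shift} is not needed here, because both classes are taken at the same instant and the same distance. It becomes relevant only later, when halving the distance shifts both \(i\) and \(j\) by the same amount and so leaves \(i-j\), and hence the offsets, invariant.
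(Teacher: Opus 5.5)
Your proof is correct and is the same direct computation the paper relies on. The paper gives no separate proof: substituting the definition $\Delta_x(t,j)=\kappa_x^{(3)}(t)-j \pmod 3$ (introduced later with the four-color algorithms, using the sign convention you chose) gives $\Delta_r=i-j$ and $\Delta_s=j-i=-\Delta_r$, and hence the three listed pairs.

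One caution concerns only your closing aside. Halving the distance leaves the offsets unchanged only if the displayed phase indices are refreshed too. The algorithms use Lemma~\ref{lem:periodic-distance-class-shift} in the opposite situation: a robot observes a stale color $C_j$ at distance $D/2$, so its offset drops by $1$ (e.g., from $0$ to $2$).
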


\section{Impossibility Results for Rendezvous}



\subsection{Impossibility under VD}

We first show that under VD,
even with rigid movements,
Rendezvous is impossible in certain scheduler/model combinations,
regardless of the number of colors
and without restricting to \ML-algorithms.

\begin{lemma}[VD Normalization]
\label{lem:vd-normalization}
Let $\C$ be any configuration in which the two robots occupy
distinct positions.
Under VD, at every \Look operation,
the adversary can ensure that each robot observes
the other at position $(0,1)$
in its local coordinate system.
\end{lemma}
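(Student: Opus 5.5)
The plan is to construct, at each \Look, the required local coordinate system explicitly as a similarity transformation, using exactly the three freedoms VD grants the adversary: rotation, reflection, and scaling. Fix a \Look of robot \(x\) at time \(t\). Since the positions are distinct, the vector \(v=p_{\bar{x}}(t)-p_x(t)\) is nonzero and \(dis(t)=\|v\|>0\). The local frame of \(x\) is determined by its origin, fixed at \(p_x(t)\) because the robot sees itself at the origin, together with an orthonormal basis and a unit length. Choose the unit length \(u_x(t)=dis(t)\). Choose the second basis vector \(e_2=v/\|v\|\), and let \(e_1\) be either of the two unit vectors orthogonal to \(e_2\); the choice between them corresponds to the reflection freedom and can be made arbitrarily. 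In this frame the local coordinates of \(p_{\bar{x}}(t)\) are \(\bigl(\langle v,e_1\rangle,\langle v,e_2\rangle\bigr)/u_x(t)=(0,\|v\|)/\|v\|=(0,1)\).

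Next I would check that this frame is admissible under VD. Every local frame arises from the global one by a similarity transformation: a translation to \(p_x(t)\), an orthogonal map (rotation, possibly composed with a reflection), and a positive scaling. Because VD lets the adversary apply an arbitrary such rotation, reflection, and scaling independently at every \Look, the frame built above is always available to it. The choice may depend on the current configuration, and that is enough for the lemma.

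Finally I would note that the choices for the two robots, and for different \Look operations, are made independently. So the construction applies at every \Look of both robots, including \Look operations that occur while the peer is moving under \ASY, provided the positions are distinct at that instant. The only subtle point is nondegeneracy: the hypothesis \(p_r\neq p_s\) is what guarantees \(\|v\|>0\), so that both the direction \(e_2\) and the scaling are well defined. Apart from this, the argument is routine. Its consequence is that under VD the snapshot \(SS_x(t)\) reduces to \(\bigl((0,1),\obs_x(t)\bigr)\), i.e.\ it carries only color information.
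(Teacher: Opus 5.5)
Your proof is correct and takes the same approach as the paper: at each \Look you use the VD freedoms (rotation, reflection, scaling, with the observer at the origin) to align the direction toward the peer with the positive $y$-axis and rescale the distance to $1$. Your version simply builds the frame explicitly, with $e_2=v/\|v\|$ and unit length $dis(t)$, and correctly identifies $p_r\neq p_s$ as the condition that makes the construction well defined.
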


\begin{proof}
Under VD, at every \Look,
the adversary may independently choose
the local coordinate system of the observing robot,
up to translation, rotation, reflection, and scaling.
Thus the observing robot can be placed at the origin,
the direction toward the other robot aligned with the positive $y$-axis,
and the distance scaled to $1$.
\end{proof}

The normalization lemma removes the essential distinction
between general Rendezvous algorithms and \ML-algorithms for
the purpose of the following impossibility argument.
Indeed, although a general algorithm may use the measured
distance to determine its destination and its light update,
under VD the adversary can normalize every non-Rendezvous
snapshot so that the other robot is always observed at
\((0,1)\).
Thus, as long as the robots remain distinct, the deterministic
behavior of a robot depends only on the observed color.

We can therefore lift the corresponding impossibility argument
for \ML-algorithms~\cite{FSVY} to arbitrary deterministic
algorithms.

\begin{theorem}\label{thm:onesided-impossible}
Under VD, there exists no deterministic
non-quasi-self-stabilizing Rendezvous algorithm, regardless of
the number of colors,
(i) for \(\FS\) under \SSY, or
(ii) for \(\FC\) under \ASY.
\end{theorem}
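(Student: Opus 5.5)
We use Lemma~\ref{lem:vd-normalization} to reduce an arbitrary deterministic algorithm to an \ML-algorithm, and then run the known adversarial schedules against \ML-algorithms~\cite{FSVY}. Fix any algorithm \(\mathcal A\) with color set \(\Gamma\) and a color \(c\) from which Rendezvous is claimed. The adversary always normalizes every \Look\ with distinct positions so that the peer appears at \((0,1)\). Then the output of \(\mathcal A\) depends only on \(\obs_x\in\Gamma\). It is a light update together with a local destination point \(q(a)\in\mathbb R^2\).

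The first step is to handle off-line destinations. We choose the coordinate system so that, in global terms, the point \(q(a)\) lies on the segment line whenever \(q(a)\) is on the local \(y\)-axis. Otherwise we apply a reflection so that the off-line component is cancelled in a symmetric configuration. A simpler route is to observe that the adversary can also rotate between the two robots' frames. Then, in the executions we build, both robots (or the single activated robot) move to points whose projection on the line is \((1-\lambda_a)p_x+\lambda_a p_{\bar x}\) with \(\lambda_a=q(a)_y\). The perpendicular components either cancel (simultaneous activations in mirror frames) or only change the distance, which the adversary renormalizes away anyway. The only property that matters is whether the robots coincide afterward. In the configurations used, coincidence happens exactly as in the induced \ML-algorithm with parameters \(\lambda_a\), so I would formalize a map \(\mathcal A\mapsto \mathcal A_{\ML}\) that preserves the "gather/not gather" outcome of every such step.

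The scheduler argument is the second step. (i) \(\FS\) under \SSY: both robots start with \(c\), so their observations evolve identically under \FSY\ steps. Consider the color sequence \(c=a_0,a_1,\dots\) produced by a lone robot; it is eventually periodic since \(\Gamma\) is finite. From a state where both robots see the same \(a\), the adversary activates both robots if \(\lambda_a\neq 1/2\): both robots moving by \(\lambda_a\) toward each other meet only if \(\lambda_a=1/2\). If \(\lambda_a=1/2\), it activates only one robot, which does not reach the other unless \(\lambda_a=1\). The difficulty is keeping the two robots' colors synchronized, or at least bounding how far apart they can drift, after a single activation. Here I would follow the \FSVY\ argument: it tracks the pair of colors, notes that from each pair at least one of the three activation sets avoids meeting, and uses the fact that \SSY\ fairness can be met by alternating. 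The key point is to check that from any pair \((a,b)\) some choice in \(\{r\},\{s\},\{r,s\}\) keeps the robots distinct: if \(\lambda_a\ne1\), activate \(r\) alone; if \(\lambda_b\ne1\), activate \(s\) alone; if both equal \(1\), activate both, and they swap positions. Fairness needs care, and I would handle it by alternating preferences. (ii) \(\FC\) under \ASY: here each robot sees the peer's color. The adversary interleaves operations so that a robot Looks before the peer's committed move, as in~\cite{FSVY}, using the same case analysis on \(\lambda\in\{1/2,1\}\).

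I expect the main obstacle to be the fairness-preserving case analysis in (i). It is especially delicate when \(\lambda_a=1\) for the current colors yet alternating single activations is required. I would resolve it by showing that whenever only \(\{r,s\}\) is safe, that step swaps the robots, and the adversary can intersperse it without starving either robot. Since both robots are activated in that step, fairness is automatically satisfied.
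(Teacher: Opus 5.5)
\textbf{Part (ii) is missing its key construction.} Saying that a robot \Look s before the peer's committed move has the right flavour, but the naive version fails. In \FC, a plain sequential activation desynchronizes the robots. After $r$ executes the rule for the common color $a$, it displays $U_a(a)$, where $U_a$ is the update prescribed for observed color $a$, so $s$ then observes a different color. The stale-snapshot variant with one peer cycle also fails. Let $r$ only \Look, let $s$ complete one cycle, then release $r$: $s$ is at the midpoint $m$, and $r$'s stale destination is also $m$, so Rendezvous occurs. The missing idea is to let $s$ complete \emph{two} full cycles while $r$'s snapshot is pending. Then $s$ sees $a$ both times, because $r$ has not yet computed, and ends at the quarter point. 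Robot $r$ then goes to the old midpoint, at distance $D/4$ from $s$. The two colors agree only because a light update is a componentwise assignment and hence idempotent: $U_a(U_a(a))=U_a(a)$. Without this, the appeal to the known \ML\ argument is a deferral, not a proof.

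\textbf{In part (i), your fairness fix addresses the harmless case.} When both current colors prescribe $\mathsf{Other}$, activating both is safe and fair. The dangerous case is when exactly one does. Take $c\longmapsto(b,H)$, $b\longmapsto(z,O)$, $z\longmapsto(z,S)$ in \FS. From $(c,c)$, activating $r$ alone gives $(b,c)$. Activating both is then safe, since $1+\tfrac12\neq1$, and gives $(z,b)$. From $(z,b)$ every activation set containing $s$ causes Rendezvous, so $s$ is starved. Hence ``some activation set avoids meeting'' does not give a fair execution; you need a global synchronization invariant, which you flag but never supply. Your alternating rule can be rescued, but only by proving that the two robots stay within one step of each other on the color orbit of $c$.

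The simplest fix is to keep the two colors \emph{equal}. If the common color's destination is not the midpoint, activate both robots; point-symmetric frames make off-line destinations distinct as well. If it is the midpoint, activate $r$ alone and then $s$ alone. In \FS, $s$ still holds $a$ and sees only its own light, so it applies the same rule, and both end with $\tau(a)$ at distance $D/4$. Every such fragment activates both robots, so fairness is immediate.

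A smaller point: your map $\mathcal A\mapsto\mathcal A_{\ML}$ does not preserve outcomes exactly. A lone move to local point $(q_x,1)$ with $q_x\neq0$ does not gather. You can only claim, and only need, that \ML-distinct implies distinct under suitable frame choices.
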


\begin{proof}[Proof sketch]
By Lemma~\ref{lem:vd-normalization}, from every non-Rendezvous
same-color configuration the adversary can make both robots
obtain the same normalized geometric view as in the
corresponding \ML-algorithm impossibility argument.
If the resulting move is not to the midpoint, a simultaneous
activation preserves distinct positions and produces another
same-color configuration.
If it is a midpoint move, the adversary uses sequential
activations under \SSY\ for \(\FS\), and a stale-snapshot
execution under \ASY\ for \(\FC\), to obtain another
non-Rendezvous same-color configuration.
Repeating the construction gives a fair infinite execution
without Rendezvous. The  detailed proof is deferred to Appendix~\ref{ap:thm:onesided-impossible}.
\end{proof}

Under VD, no Rendezvous algorithm exists for $\FC$ robots in \ASY,
even without restricting the algorithm to be in \ML.
In contrast, in the Non-Rigid \SSY\ model,
there exists a self-stabilizing \ML-Rendezvous algorithm
with $3$ colors for $\FC$ robots,
and this algorithm works even under VD~\cite{DFPSY}.



%
The following known lower bounds under \SSY\ hold even under
Fixed Disorientation, common chirality, and a common unit
distance~\cite{OFW}.
\begin{theorem}[{\cite{OFW}}]
\label{thm:ssynch-known-lower-bounds}
Under \SSY\ and Rigid movement, even with
Fixed Disorientation, common chirality, and a common unit
distance:
\begin{enumerate}
\item no two-color non-quasi-self-stabilizing Rendezvous
      algorithm exists in either \(\FS\) or \(\FC\); and
\item no quasi-self-stabilizing Rendezvous algorithm exists
      in \(\FS\), regardless of the number of colors.
\end{enumerate}
\end{theorem}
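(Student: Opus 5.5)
This theorem is quoted from~\cite{OFW}, so the plan is to reconstruct the adversary arguments given there. Both parts can be handled by one adversarial framework, the \emph{symmetric-configuration} technique, adapted to robots with lights. The adversary fixes, once and for all, the two local frames so that they are images of each other under the half-turn about the midpoint of the initial segment. A half-turn preserves chirality and unit distance, so this is allowed under Fixed Disorientation, common chirality and common unit. Call a configuration \emph{symmetric} if the robots occupy distinct points and carry the same color. In a symmetric configuration both robots obtain identical snapshots (same observed color in either $\FS$ or $\FC$, and the same local vector to the peer). By determinism they compute the same parameter $\lambda$ and the same new color, even though the algorithm may depend on the measured distance.

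\textbf{Step 1 (easy case).} If in a symmetric configuration the common rule has $\lambda\neq\frac12$, the adversary activates both robots. The moves are half-turn images, so the new positions are distinct and the new colors equal. The configuration is again symmetric, with the same frames, because the midpoint is preserved.

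\textbf{Step 2 (midpoint case).} If $\lambda=\frac12$, a simultaneous activation would produce Rendezvous, so the adversary activates one robot at a time. After $r$ alone moves to the midpoint, the distance is halved and the colors may differ: $r$ now has $c'$ and $s$ still has $c$. The task is then to steer the execution, using only \SSY\ activations, back to a symmetric configuration or to one with two distinct points from which the argument can be repeated.

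\textbf{Step 2, part (2).} For $\FS$ with arbitrarily many colors, the plan exploits that an $\FS$ robot's color trajectory depends only on its own activations and the distances it sees. Alternate single activations so that the two robots follow the same sequence of (color, distance) states, with $s$ lagging $r$ by one step. Whenever both are in the same state, the configuration is symmetric again. Quasi-self-stabilization forces every same-color configuration to be a valid start, so reaching such a state contradicts correctness unless Rendezvous occurs. Rendezvous can only occur through a simultaneous midpoint move, which the adversary never schedules.

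\textbf{Step 2, part (1).} For two colors in $\FS$ and $\FC$, with only the designated color $c$ as a legal start, perform a finite case analysis over the at most four rules indexed by the observed color. Since behavior may depend on distance, first fix a geometric sequence of distances. By pigeonhole, pick an infinite subsequence on which the color-indexed rule (its $\lambda$ and color update) is constant. Then show in each case that the two-color automaton either keeps some pair of colors forever with distinct positions, or reaches $(c,c)$ again at a distance in the chosen subsequence. This closes an infinite non-Rendezvous loop.

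\textbf{Fairness.} In every case, check that each robot is activated infinitely often. The single activations in Step 2 are alternated, which suffices.

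\textbf{Main obstacle.} The hardest part is Step 2 in the general (non-$\ML$) setting. The robots' rules may depend on the measured distance, which under fixed unit distance really changes from round to round. The lag-one state-matching argument must therefore guarantee that the lagging robot later sees exactly the distance the leading one saw. The first attempt would be to insert the single activations so that each robot always observes the segment just before its own halving. Failing that, the fallback is to restrict to distances along a geometric sequence with ratio $2$, where the distance-dependent rule is constant by the pigeonhole choice.
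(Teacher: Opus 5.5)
The paper gives no proof of this statement; it imports it from~\cite{OFW}. Judged on its own, your proposal proves neither part, and the route you sketch for part~(2) cannot succeed.

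\textbf{Why the adversary fails.} The lag-one matching breaks for the reason you name yourself. The pigeonhole fallback does not repair it, for two reasons. First, the destination parameter ranges over infinitely many values, so no infinite subsequence with a constant rule need exist. Second, the adversary cannot keep the visited distances inside a preselected subsequence: a midpoint move from $d$ lands at $d/2$, where the rule may be anything.

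More fundamentally, your adversary never uses \emph{which} same-color configuration it starts from. If it worked, it would also refute non-quasi-self-stabilizing \FS\ algorithms, and these exist in this setting. Fixed half-turn frames with a common unit are a special case of VD+FUD, so Theorem~\ref{th:NS-FSTA-SSY-3color-optimal} applies. Run your adversary on \textsc{NS-FSTA-3-color} from $(A,A)$:
\begin{itemize}
\item Step~1 swaps the robots into $(B_i,B_i)$.
\item Step~2 moves $r$ alone to the midpoint.
\item At distance $D/2$, every activation of $s$ (color $B_i$, class flipped) is null. Every activation of $r$ (color $A$), solo or joint, reaches $s$.
\end{itemize}
Fairness therefore forces Rendezvous. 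This also refutes your claim that Rendezvous can only come from a simultaneous midpoint move. A solo move-to-other gathers, and such traps are exactly how correct algorithms defeat the adversary.

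\textbf{The missing idea for part~(2)} is to let quasi-self-stabilization choose the start. With your half-turn frames, both robots always see the same vector $v$. Let $w_r,w_s$ be their computed destinations in these identical views. Activating $r$ alone, $s$ alone, or both gathers iff $w_r=v$, $w_s=v$, or $w_r+w_s=v$, respectively, and these three cannot all hold when $v\neq 0$.

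So let the adversary activate both robots whenever this is safe. Otherwise it activates the less recently activated robot among those whose solo activation is safe. This never gathers, and if the execution is fair you are done. If it is not fair, then from some time on only the solo activation of $r$ is safe. This forces $w_s=v$ and $w_r=0$. The vector is then frozen at some $w$, and $r$ runs through colors $x_1,x_2,\dots$ whose rules at $w$ are all null. In \FS\ this behavior does not depend on the color of $s$. Hence the same-color configuration $(x_1,x_1)$ at vector $w$ stays frozen forever under \FSY, contradicting quasi-self-stabilization.

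\textbf{Part~(1)} has no argument beyond an unexecuted case analysis resting on the same invalid pigeonhole. The reduction above shows what would suffice. From $(c,c)$ at every initial distance, a two-color algorithm must be unable to reach such a trap: one color that stays while the other moves to the peer at a common vector. In other words, you need to show that two colors cannot hold both the initial marker and the one-bit distance memory that \textsc{NS-FSTA-3-color} keeps in $B_0,B_1$. You also need the appropriate analogue for \FC, where a robot's rule depends on the peer's color.
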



Consequently, any positive Rendezvous result for \(\FS\)
under \SSY\ in our VD+FUD setting must rely on a designated
initial color.
Combined with
Theorem~\ref{th:NS-FSTA-SSY-3color-optimal},
this yields a sharp separation:
quasi-self-stabilization is impossible with any number of
colors, whereas non-quasi-self-stabilizing Rendezvous is
solvable with exactly three colors.


\section{Two-Color Impossibility under
\RSY\ and
R-\RSY}
\label{sec:two-color-imp}

We establish two-color lower bounds under VD for the one-sided visibility
model with the schedulers  \RSY\ and R-\RSY.
By Lemma~\ref{lem:vd-normalization}, as long as the robots are
at distinct positions, the deterministic behavior of a robot
depends only on its observed color $c$.
We call \(c\) a \emph{midpoint color} if the destination
prescribed upon observing \(c\) is the midpoint, and a
\emph{move-to-other color} if it is the current position of
the peer.


\begin{theorem}
\label{thm:rsynch-onesided}
Under VD and \RSY, there exists no deterministic
two-color self-stabilizing Rendezvous algorithm in either
the \(\FS\) or the \(\FC\) model.
\end{theorem}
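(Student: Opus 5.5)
The plan is to reduce the statement to a finite case analysis over two-color \(\ML\)-algorithms, and in each case to build a fair \RSY\ execution that never reaches Rendezvous.

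\textbf{Reduction.} By Lemma~\ref{lem:vd-normalization}, whenever the robots are at distinct positions the adversary normalizes every snapshot to \(((0,1),\obs_x)\). The behaviour of a robot is then a function of the single observed color \(c\in\{A,B\}\): a movement parameter \(\lambda_c\) and a light update \(\tau(c)\in\{A,B\}\). Here \(\tau(c)\) is the color written; \(\mathsf{Keep}\) is equivalent to writing \(c\) in \(\FS\) and to writing the robot's own unchanged color in \(\FC\). It therefore suffices to defeat every such table. Since we need only self-stabilization to fail, we may choose the initial colors, including mixed ones.

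\textbf{Necessary features of the table.} I would first extract two constraints from the pure executions contained in \RSY.
\begin{enumerate}
\item From a same-color configuration under pure \FSY, both robots always observe the same color. This holds in both \(\FS\) and \(\FC\), because symmetry is preserved. With equal parameters \(\lambda\), the distance becomes \(|1-2\lambda|\) times the old one. Hence some color reached along that symmetric orbit must be a midpoint color (\(\lambda=1/2\)).
\item Under pure \RR, a single activation brings the robots together only if \(\lambda=1\). Hence from every starting configuration, some move-to-other color must eventually be observed by the robot being activated.
\end{enumerate}
Together these force one color to be the midpoint color \(M\) and the other to be the move-to-other color \(O\). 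The remaining freedom is only \(\tau(M),\tau(O)\), giving four cases. These cases must be read with \(\FS\) semantics (observed color = own color) and with \(\FC\) semantics (observed color = peer's color, and the observer updates its own light) separately.

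\textbf{Adversarial executions.} The basic geometric fact is that in a simultaneous step with observed colors \(M\) and \(O\), one robot goes to the midpoint and the other jumps to the peer's old position. The sum of the parameters is \(3/2\neq1\), so the robots stay distinct, at half the previous distance. In each case I would exhibit either a pure \FSY\ execution, or a finite \FSY\ prefix followed by an \RR\ suffix, that avoids Rendezvous forever.
\begin{itemize}
\item If the table keeps mixed configurations mixed (for example \(\tau(M)=M\), \(\tau(O)=O\) in \(\FS\)), the adversary runs pure \FSY\ from a mixed start. The distance halves each round but never becomes \(0\).
\item In the other cases, the adversary uses a short \FSY\ prefix to reach a configuration from which every \RR\ activation sees \(M\), or sees \(O\) only in a harmless way. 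An \RR\ midpoint move never meets. A robot can see \(O\) in \(\FS\) only if its own light is \(O\); the adversary activates the robots in an order so that this never happens before the color is flipped to \(M\).
\item Alternatively, the adversary uses a same-color configuration in which the symmetric \FSY\ orbit never visits the color \(M\) synchronously, contradicting constraint 1.
\end{itemize}
For \(\FC\), the analogous cases are handled by tracking the pair \((c_r,c_s)\) and the four pairs of observed colors.

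\textbf{Main obstacle.} The hard step is this exhaustive case analysis, in particular the cases where \(\tau\) swaps the colors. There I must verify that the chosen \RR\ suffix alternates activations so that each activated robot always observes \(M\): in \(\FS\) this means its own color is \(M\), and in \(\FC\) the peer's color is \(M\). I would first tabulate the \RR\ dynamics on the four color pairs as a small transition graph. I would then look for a cycle in that graph avoiding any activation that observes \(O\), or else for an \FSY\ mixed cycle. Such a cycle is the source of the counterexample execution in each case.
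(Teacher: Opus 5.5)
You have picked the right three adversarial executions: pure \FSY\ from a mixed configuration, \RR\ in which every activation observes the midpoint color, and a symmetric \FSY\ orbit that never reaches the midpoint color. But the proposal stops where the proof has its content. The case analysis is deferred, and your plan for the case you call the main obstacle cannot work.

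The table with $\tau(M)=O$ and $\tau(O)=M$ (your notation) is exactly \textsc{QS-\ML-2-color}. The paper shows this algorithm is self-stabilizing under pure \RR\ in both \FS\ and \FC. A robot that observes $M$ writes $O$, so within at most three activations some robot observes $O$ while its peer is stationary, and with Rigid movement it reaches the peer. There is no ``harmless'' observation of $O$ under Rigid \RR. Consequently no \RR\ suffix, with or without an \FSY\ prefix, defeats this table, and your search for an \RR\ cycle avoiding $O$ will find nothing.

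The only counterexample here is the pure \FSY\ execution from a mixed configuration. In each round one robot moves to the midpoint and the other moves to its peer. The colors stay mixed: $(M,O)\to(O,M)$ in \FS, and $(M,O)\to(M,O)$ in \FC. The distance is halved each round and never reaches zero. Your first bullet contains this execution, but you do not apply it to the swap case.

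What is missing is the step that collapses the case analysis before it starts. Self-stabilization lets you start from any configuration, and pure \RR\ and pure \FSY\ are both \RSY\ executions, so no prefixes are needed.
\begin{itemize}
\item If the midpoint color does not change to $O$, pure \RR\ from $(M,M)$ only halves the distance.
\item If the move-to-other color does not change to $M$, pure \FSY\ from $(O,O)$ swaps the robots forever.
\end{itemize}
In \FC\ you must then observe that such a flip cannot be produced by $\mathsf{Keep}$. So the two updates are the constant writes of $O$ and of $M$; the paper derives this from idempotence of partial updates. Only this determines the behavior in mixed configurations. It also shows your count of four cases is wrong for \FC, where each observed color admits three updates (write $A$, write $B$, $\mathsf{Keep}$) with different off-diagonal behavior.

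Finally, parametrizing actions by $\lambda_c$ assumes destinations on the line. For your constraint~1 you also need to handle an off-line destination. With equally handed normalized frames, the two destinations in a symmetric \FSY\ step are point-symmetric about the midpoint, so they coincide only if both equal the midpoint.
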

\begin{proof}[Proof sketch]
Pure \FSY\ requires a midpoint color, while pure \RR\ requires
a move-to-other color.
With only two colors, their transitions are forced:
the midpoint color must change to the move-to-other color and
vice versa.
Starting from a mixed-color configuration, a pure \FSY\
execution then preserves mixed colors and halves the distance
in every round.
Hence Rendezvous is never attained in finite time.
The complete proof is given in
Appendix~\ref{ap:thm:rsynch-onesided}.
\end{proof}

The same forced two-color structure also yields the corresponding
lower bounds for R-\RSY.
An \RR-prefix can create a non-gathered mixed-color
configuration, after which the \FSY\ tail maintains a
positive distance indefinitely.
In the Non-Rigid case, the adversary can create such a mixed
configuration even from the designated same-color initial
configuration.
The complete proof is given in
Appendix~\ref{ap:thm:rrsynch-fsfc}.

\begin{theorem}
\label{thm:rrsynch-fsfc}
Under VD and R-\RSY, there exists no deterministic
two-color quasi-self-stabilizing Rendezvous algorithm in either
the \(\FS\) or the \(\FC\) model.
Moreover, in the Non-Rigid setting, there exists no deterministic
two-color non-quasi-self-stabilizing Rendezvous algorithm in
either model.
\end{theorem}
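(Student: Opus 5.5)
Throughout I use Lemma~\ref{lem:vd-normalization}: while the robots are at distinct positions, the adversary normalizes every \Look so that each robot's action depends only on its observed color. With two colors $A,B$, the algorithm is therefore a map from observed color to a pair $(\lambda_c,u_c)$, where $\lambda_c$ is the movement parameter and $u_c$ the light update. As in the proof of Theorem~\ref{thm:rsynch-onesided}, I first extract the forced structure. Since R-\RSY\ contains pure \RR\ executions, a move-to-other color is needed. Indeed, if no observed color gave $\lambda=1$, an \RR\ step with $\lambda\neq 1$ never produces a meeting. Since R-\RSY\ also contains pure \FSY\ executions from same-color starts, a midpoint color is needed. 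Indeed, from a same-color configuration under \FSY\ both robots take the same $\lambda$, so they meet only if $\lambda=1/2$. With $\lambda\neq 1/2$ the colors remain equal (in \FS\ both apply the same update; in \FC\ each sees the same color and updates identically), so some reachable same-color configuration must use the midpoint. Hence one color, say $A$, is midpoint ($\lambda_A=1/2$) and the other, $B$, is move-to-other ($\lambda_B=1$).

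Next I pin down the transitions, again mirroring Theorem~\ref{thm:rsynch-onesided}. If $A$ kept $A$, a pure \RR\ execution starting from $(A,A)$ would halve the distance forever; that start is a legal quasi-self-stabilizing initial configuration, and I will also have to argue that $(A,A)$ is the designated color in the non-quasi case. If $B$ kept $B$, a pure \FSY\ execution from $(B,B)$ would make the robots swap positions forever. So $A\to B$ and $B\to A$ are forced, possibly with small case distinctions, which I handle by exhibiting the bad execution in each case. I then check that under \FSY\ a mixed configuration $(A,B)$ maps to $(B,A)$ and the distance is halved: the robot observing $A$ moves halfway while the robot observing $B$ jumps to the other's old position. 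For \FC\ I will verify the observation mapping separately, but the colors stay mixed and the distance stays positive in both models.

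For the quasi-self-stabilizing claim, I start from a same-color configuration $(A,A)$ and use an \RR\ prefix. Activating $r$ alone gives $r$ color $B$ at the midpoint (in \FS; in \FC\ the analogous one-step prefix). This is a mixed configuration at distance $dis/2>0$. Switching now to the infinite \FSY\ tail, the previous paragraph shows the configuration stays mixed with the distance halving forever, so Rendezvous never occurs. If the start is $(B,B)$ instead, I first apply one \RR\ step, in which $r$ jumps onto $s$. If that causes a meeting, I choose instead to begin with \FSY\ zero rounds but start the prefix differently. The main obstacle is to ensure that a mixed, non-gathered configuration is always reachable from both same-color starts, for both \FS\ and \FC. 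For quasi-self-stabilization I only need one bad same-color start, and $(A,A)$ suffices.

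For the Non-Rigid claim, the designated color may be $B$, and Rigid $(B,B)$ under \RR\ can lead directly to a meeting. Here I use non-rigidity. From $(B,B)$, when $r$ is activated under \RR\ with distance at least $\delta$, the adversary stops it after moving $\delta$ but before reaching $s$. This yields a mixed configuration at positive distance, and the \FSY\ tail then never gathers. If instead the designated color is $A$, the Rigid argument already applies, since non-rigid executions include rigid ones. The delicate point I expect is verifying the \FC\ transition semantics, since there the graph is a reaction graph. I would check explicitly that one \RR\ step from a same-color configuration produces distinct colors and that \FSY\ preserves distinctness, possibly after appending one extra \RR\ step to the prefix.
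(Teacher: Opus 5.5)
Your quasi-self-stabilizing part is correct and is essentially the paper's argument. Since both \((A,A)\) and \((B,B)\) are legal starts, \(U_A=\mathrm{const}_B\) and \(U_B=\mathrm{const}_A\) are forced. One \RR\ midpoint step from \((A,A)\) then gives a non-gathered mixed configuration, and the \FSY\ tail halves the distance forever.

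The gap is in the Non-Rigid non-quasi-self-stabilizing part, where you reuse ``\(A\to B\) and \(B\to A\) are forced.'' Each forcing argument needs the corresponding same-color configuration to be a legal start, and now only \((c^*,c^*)\) is. You also cannot argue that \((A,A)\) is designated, since you note yourself that \(c^*=B\) is possible. For \(c^*=B\) only \(U_B=\mathrm{const}_A\) is forced, and your claim that the \FSY\ tail after the interrupted move never gathers is then false in general. In \FS\ with \(\tau(A)=A\), the first \FSY\ round from the mixed configuration goes as follows. The \(A\)-robot keeps \(A\) and moves halfway, while the \(B\)-robot writes \(A\) and jumps. So \((A,A)\) is reached at positive distance, and the next round gathers both robots at the midpoint. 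In \FC\ the same happens when \(U_A(B)=A\). Appending one extra \RR\ step before the tail does not help, because it again produces \((A,A)\), from which \FSY\ succeeds.

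The missing idea is to give up the \FSY\ tail in this subcase and continue with pure \RR, which is also a legal R-\RSY\ execution. Start from a sufficiently large distance; the paper takes \(D>4\delta\), and even the first interruption needs the destination to be strictly farther than \(\delta\). The next \RR\ activation of the remaining \(B\)-robot then produces a second change to \(A\). In \FS\ its move-to-other is again stopped after \(\delta\). In \FC\ it observes \(A\), writes \(A\), and moves toward the midpoint. Either way \((A,A)\) is reached at positive distance. Since \(A\) is then kept (in \FC\ by idempotence, \(U_A(A)=U_A(U_A(B))=A\)), every later \RR\ activation is a completed midpoint move, and the distance is halved forever. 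In the complementary subcase (\(\tau(A)=B\) in \FS, \(U_A(B)=B\) in \FC) your \FSY\ tail does work.

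The case \(c^*=A\) has a milder version of the same omission. Only \(U_A=\mathrm{const}_B\) is forced, so ``the Rigid argument'' silently assumes \(U_B=\mathrm{const}_A\). The conclusion survives, but you must treat \(\tau(B)=B\) separately:
\begin{itemize}
\item In \FS, the first \FSY\ round from the mixed configuration yields \((B,B)\), which swaps forever.
\item In \FC, \(U_B=\mathrm{id}\) preserves the mixed configuration, so the distance halves forever.
\item In \FC, \(U_B=\mathrm{const}_B\) yields \((B,B)\), which again swaps forever.
\end{itemize}
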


%
%
%

\section{Algorithms in the One-Sided Visibility Model under \RSY\ and R-\RSY}

\subsection{Reduction Principles}
\label{subsec:rsynch-reduction}

We use two reduction principles whose formal statements and
proofs are given in Appendix~\ref{ap:rsynch-reduction}.
First, correctness under both \FSY\ and \RR\ yields correctness
under \RSY\ and R-\RSY, with the standard additional condition
for quasi-self-stabilization.
Second, for the algorithms considered below, the Non-Rigid
distance eventually becomes at most \(\delta\).
It therefore suffices to verify their finite-state behavior in
the effective Rigid regime \(D\leq\delta\).

\subsection{A Two-Color Quasi-Self-Stabilizing
\texorpdfstring{\ML-Algorithm}{L-Algorithm}}
\label{subsec:QS-L-2-color}

\begin{figure}
\centering\includegraphics[keepaspectratio, width=0.8 \textwidth]{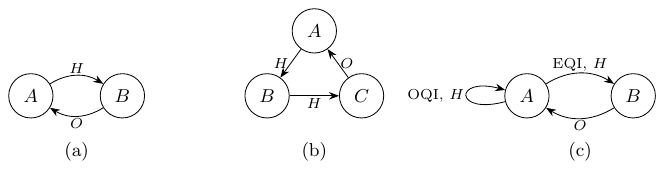}

      \caption{(a) Algorithm QS-\ML-2-color. (b) Algorithm SS-\ML-3-color. (c) Algorithm SS-non-\ML-2-color. EQI and OQI denote the even and odd periodic $(4,2)$-distance classes,
respectively.
      }\label{fig:all-algorithms}
\end{figure}

Theorem~\ref{thm:rsynch-onesided} shows that two colors are
insufficient for self-stabilization under \RSY.
We now show that two colors nevertheless suffice for
quasi-self-stabilization.

The algorithm \textsc{QS-\ML-2-color}, shown in
Figure~\ref{fig:all-algorithms}(a), uses the rules
$
    A\longmapsto(B,\mathsf{H}),
    \:\\
    B\longmapsto(A,\mathsf{O}).
$ The pseudocode for \textsc{QS-\ML-2-color} is shown in Appendix~\ref{ap:QS-L-2-color}.
The graph is interpreted with respect to the observed color
and therefore applies to both \FS\ and \FC.

\begin{theorem}
\label{th:QS-L-non-rigid-2-color}
\textsc{QS-\ML-2-color} is a two-color
quasi-self-stabilizing \ML-Rendezvous algorithm for both
\FS\ and \FC\ under \RSY\ in the Non-Rigid setting.
\end{theorem}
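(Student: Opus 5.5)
The plan is to invoke the reduction principles of Section~\ref{subsec:rsynch-reduction}. An \RSY\ execution is either pure \FSY\ or a finite \FSY\ prefix followed by an infinite \RR\ suffix. So it suffices to prove three things: (a) every pure \FSY\ execution starting from a same-color configuration achieves Rendezvous; (b) every \RR\ execution starting from a same-color configuration achieves Rendezvous; and (c) the additional quasi-self-stabilization condition, namely that \FSY\ rounds map same-color configurations to same-color configurations. By the second reduction principle, the Non-Rigid distance eventually becomes at most \(\delta\), after which every computed destination is reached. The case analysis below is therefore carried out in this effective Rigid regime, with Non-Rigid progress handled only where it is needed. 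Throughout I use that once \(dis(t)=0\), the destinations \(\mathsf{Midpoint}\) and \(\mathsf{Other}\) coincide with \(\mathsf{Stay}\), so Rendezvous is preserved forever.

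For (c), I will argue as follows. If both robots carry the same color \(c\), then under \FSY\ both observe \(c\): in \FS\ each sees its own light, and in \FC\ each sees the other's light. Both therefore apply the same rule and write the same new color, so \FSY\ preserves the same-color property. For (a), there are two cases. If both robots have color \(A\), both move to \(\mathsf{Midpoint}\), which in the Rigid regime yields Rendezvous in one round. If both have color \(B\), they both move to \(\mathsf{Other}\), i.e.\ they swap, and both become \(A\); the next round then gathers them at the midpoint. For the Non-Rigid part I will note two facts. First, an \(A\)-round moves both robots toward each other without overshoot and, while the distance is large, reduces it by at least \(2\delta\). Second, a \(B\)-round produces distance \(|D-a-b|\le D\). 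Since \(A\)-rounds occur every other round, the distance drops below \(\delta\) in finitely many rounds.

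For (b), let \(r\) be activated first, and check the finitely many cases. In \FS\ starting from \(A,A\): \(r\) performs \(\mathsf{H}\) and becomes \(B\); then \(s\) performs \(\mathsf{H}\) and becomes \(B\); then \(r\) observes \(B\) and moves to \(\mathsf{Other}\), which gathers the robots. In \FS\ starting from \(B,B\): \(r\) moves to \(\mathsf{Other}\) immediately. In \FC\ starting from \(A,A\): \(r\) sees \(A\), becomes \(B\) and performs \(\mathsf{H}\); then \(s\) sees \(B\) and moves to \(\mathsf{Other}\). In \FC\ starting from \(B,B\): \(r\) sees \(B\) and moves to \(\mathsf{Other}\). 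In every case Rendezvous is reached within at most three \RR\ activations in the Rigid regime.

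The step I expect to be the main obstacle is the Non-Rigid progress argument for the \RR\ suffix. There, an \(\mathsf{Other}\) move may stop partway, which leaves mixed colors and a distance larger than \(\delta\). I would first try to show that each \RR\ activation never increases the distance, and that every activation with a \(\mathsf{H}\) or \(\mathsf{O}\) destination at distance greater than \(\delta\) decreases it by at least \(\delta\). A \(\mathsf{Stay}\) never occurs, and every activation moves the robot toward the peer. This would yield the hypothesis of the second reduction principle, after which the Rigid case analysis above applies from whatever color pair is then present.
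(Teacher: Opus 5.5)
Your decomposition matches the paper's: \FSY\ correctness that never creates mixed colors, \RR\ correctness, scheduler composition, and distance reduction. Parts (a) and (c), including your direct Non-Rigid \FSY\ argument, are fine. The gap is in (b), and in how you close it.

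Your effective-Rigid case analysis under \RR\ starts only from $(A,A)$ and $(B,B)$. You then apply it ``from whatever color pair is then present,'' but under Non-Rigid \RR\ that pair can be mixed. For example, start the \RR\ suffix (with an empty \FSY\ prefix) from $(B,B)$ at distance $D\in(\delta,2\delta]$. The activated robot writes $A$ and heads for $\mathsf{Other}$, and the adversary stops it after $\delta$. This leaves a mixed configuration at distance $D-\delta\le\delta$, in both \FS\ and \FC. So correctness under \RR\ from same-color starts is not enough. You need Rendezvous under \RR\ from every color configuration once $dis\le\delta$. That is why the paper proves \emph{self}-stabilization under \RR, and why its Non-Rigid composition principle for \RSY\ requires exactly that.

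The missing cases are short.
\begin{itemize}
\item In \FS: if the first activated robot has $B$, it gathers. If it has $A$, it moves to the midpoint and becomes $B$. The peer then gathers if it has $B$; otherwise it moves to the midpoint and becomes $B$, and the first robot gathers at its next activation. This takes at most three activations.
\item In \FC, behavior depends only on the observed color. Observing $B$ gathers immediately. Observing $A$ makes the observer display $B$, so the next activated robot observes $B$ and gathers. This takes at most two activations.
\end{itemize}
With these cases added, your argument goes through.
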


\begin{proof}[Proof sketch]
Under \FSY, a same-color configuration remains same-colored:
from \((A,A)\), both robots move to the midpoint, whereas from
\((B,B)\), they first exchange positions and then move to the
midpoint.
Thus no mixed-color configuration is generated.

Under \RR, once the distance is at most \(\delta\), Rendezvous
is achieved from any color configuration within at most three
activations in \FS\ and at most two activations in \FC.

For larger distances, the standard distance-decreasing argument
shows that every Non-Rigid execution eventually reaches
distance at most \(\delta\).
Combining the \FSY\ and \RR\ executions then gives the result
under \RSY.

Details are given in Appendix~\ref{ap:QS-L-2-color-correctness}.
\end{proof}

\begin{corollary}
\label{cor:NS-L-rigid-2-color-rrsynch}
Under R-\RSY\ and Rigid movement,
\textsc{QS-\ML-2-color}, with \(B\) as the designated initial
color, is a two-color non-quasi-self-stabilizing
\ML-Rendezvous algorithm for both \FS\ and \FC.
\end{corollary}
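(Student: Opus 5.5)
Under R-\RSY, an execution consists of a finite (possibly empty) \RR\ prefix followed by an infinite \FSY\ suffix, or is pure \RR. I would therefore check two things for the rules \(A\longmapsto(B,\mathsf{H})\) and \(B\longmapsto(A,\mathsf{O})\), starting from \((B,B)\) with Rigid movement:

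\begin{enumerate}
\item every \RR\ execution from \((B,B)\) either has already gathered or stays in a configuration from which \FSY\ succeeds;
\item pure \RR\ from \((B,B)\) gathers.
\end{enumerate}

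Once the robots coincide, they stay together: the priority rule applies, and \(\mathsf{H}\) and \(\mathsf{O}\) are both the current position. So it suffices to show that every finite \RR\ prefix ends either gathered or in a same-color configuration.

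\textbf{The \RR\ prefix.} I would trace it from \((B,B)\). Let the first activated robot be \(r\). Its observed color is \(B\), in \FS\ its own and in \FC\ the peer's. It moves to \(\mathsf{Other}\) and, with Rigid movement, reaches it. The robots are now gathered, and Rendezvous holds permanently. Hence every nonempty \RR\ prefix gathers at its first activation. An empty prefix leaves \((B,B)\), which is also a same-color configuration. This argument is identical for \FS\ and \FC, because in \((B,B)\) the color observed is \(B\) either way.

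\textbf{The \FSY\ suffix.} It then starts at \((B,B)\) or is already gathered. I would use the \FSY\ part of the proof of Theorem~\ref{th:QS-L-non-rigid-2-color}: from \((B,B)\), both robots move to \(\mathsf{Other}\) and swap positions, giving \((A,A)\). In the next round both move to \(\mathsf{Midpoint}\) and gather exactly, since movement is Rigid. Pure \RR\ gathers at its first activation, so it is covered by the prefix case.

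\textbf{Main obstacle.} The only delicate point is confirming that under R-\RSY\ the robots can never reach a mixed configuration \((A,B)\) before gathering; under \FSY\ such a configuration would only halve the distance forever. The argument above rules this out, and the reason is that Rigid movement lets the first \RR\ activation from \(B\) achieve Rendezvous immediately. I would add a remark that this is exactly where Rigidity is needed, consistent with Theorem~\ref{thm:rrsynch-fsfc}. I would also note why the designated color must be \(B\) and not \(A\): from \((A,A)\), a single \RR\ step produces \((B,A)\) at half the distance. In \FS\ this configuration has mixed colors, and an immediate \FSY\ tail then halves the distance forever without gathering, so \(A\) fails as a designated color at least for \FS.
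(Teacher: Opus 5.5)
Your proof is correct and matches the paper's argument. A nonempty \RR\ prefix gathers at its first activation, because the activated robot observes \(B\) and rigidly reaches the stationary peer; an execution that begins with \FSY\ swaps \((B,B)\) into \((A,A)\) and then gathers at the midpoint. One small point: \textsc{QS-L-2-color} has no explicit zero-distance priority rule, since the paper states that rule only for algorithms using distance classes. Your other justification, that \(\mathsf{H}\) and \(\mathsf{O}\) both coincide with the current position at distance zero, already suffices.
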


\begin{proof}
If the execution begins with an \RR\ activation, the activated
robot observes \(B\) and reaches the stationary peer.
If it begins with an \FSY\ round, the robots exchange positions
and enter \((A,A)\), after which the next \FSY\ round moves both
robots to the midpoint.
\end{proof}

\subsection{An Optimal Three-Color Self-Stabilizing
\texorpdfstring{\ML-Algorithm}{L-Algorithm} for
\texorpdfstring{\(\FS\)}{FSTA}}
\label{sec:SS-L-3-color}
The algorithm \textsc{SS-\ML-3-color}, shown in
Figure~\ref{fig:all-algorithms}(b), uses
$
    A\longmapsto(B,\mathsf{H}),
    \:
    B\longmapsto(C,\mathsf{H}),
    \:\\
    C\longmapsto(A,\mathsf{O}).
$ The pseudocode for \textsc{SS-\ML-3-color} is shown in Appendix~\ref{ap:SS-L-3-color}.

In the \FS\ interpretation, the three colors form a private
cycle.
Under \RR, every robot reaches color \(C\) after at most two
of its own activations and then performs a move-to-other
action.
Under \FSY, all color pairs are resolved in at most three
rounds once the distance is at most \(\delta\).

\begin{theorem}
\label{th:SS-L-non-rigid-3-color}
\textsc{SS-\ML-3-color} is a three-color self-stabilizing
\ML-Rendezvous algorithm for \FS\ robots under both
\RSY\ and R-\RSY\ in the Non-Rigid setting.
\end{theorem}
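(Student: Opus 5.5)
The plan is to use the two reduction principles of Subsection~\ref{subsec:rsynch-reduction}. First, self-stabilizing correctness under both pure \FSY\ and pure \RR\ implies self-stabilizing correctness under \RSY\ and R-\RSY. No extra condition is needed here, since the self-stabilizing requirement already covers every configuration, including whatever configuration is reached at the end of a prefix. Second, in the Non-Rigid setting the distance eventually becomes at most \(\delta\). From then on every computed destination is at distance at most \(\delta\) and is reached exactly, so the execution behaves as a Rigid one. It therefore suffices to do a finite case analysis in the Rigid regime \(D\le\delta\), over all nine color pairs, separately for \FSY\ and \RR. Throughout I use that in \FS\ each robot reacts only to its own color, and that the private cycle \(A\to B\to C\to A\) advances by one step at each activation. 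I also use that once \(dis=0\), the actions \(\mathsf{H}\) and \(\mathsf{O}\) both leave the robot in place, so Rendezvous, once reached, is preserved.

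\textbf{\RR\ case.} A robot holding \(C\) moves to the peer's current position, and the peer is stationary during that activation, so Rendezvous follows immediately. Any robot reaches \(C\) within at most two of its own activations. Meanwhile the \(\mathsf{H}\) moves only halve the distance, so the robots stay in the regime \(D\le\delta\) and never become separated again. Under alternation, the worst case is both robots starting at \(A\): the activations then go \(r{:}A\to B\), \(s{:}A\to B\), \(r{:}B\to C\), \(s{:}B\to C\). On its next activation \(r\) holds \(C\) and gathers. Hence Rendezvous occurs within five activations from any coloring.

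\textbf{\FSY\ case.} Both colors advance simultaneously, so the cyclic difference of the two colors is invariant. Call the move of a robot holding \(A\) or \(B\) an \(\mathsf{H}\)-move and the move of a robot holding \(C\) an \(\mathsf{O}\)-move. I check each difference class.
\begin{itemize}
\item Difference \(0\). From \((A,A)\) or \((B,B)\) both robots make \(\mathsf{H}\)-moves to the common midpoint and gather. From \((C,C)\) the robots swap positions, reach \((A,A)\), and gather in the next round.
\item Difference \(\pm1\). From \((A,B)\) both robots make \(\mathsf{H}\)-moves and gather. From \((B,C)\), the \(B\)-robot moves to the midpoint while the \(C\)-robot moves to the old position of the \(B\)-robot. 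The distance halves, the colors become \((C,A)\), an analogous mixed \(\mathsf{O}/\mathsf{H}\) round leads to \((A,B)\), and that round gathers.
\item Difference \(2\). From \((A,C)\) a mixed round leads to \((B,A)\), which gathers.
\end{itemize}
So every color pair gathers within three rounds.

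The step needing the most care is the Non-Rigid reduction rather than the finite check. I would rely on the second reduction principle, whose proof is in Appendix~\ref{ap:rsynch-reduction}, to guarantee that the distance reaches \(\delta\) in finite time. Its key ingredient is that every mixed or midpoint round removes at least a constant amount of distance, while \((C,C)\) swaps occur only once per three rounds. I should double-check that this principle indeed covers both the \((C,C)\) swap and the halving rounds under Non-Rigid stopping. Finally, I would combine the \FSY\ and \RR\ analyses through the first reduction principle to conclude for \RSY\ and R-\RSY.
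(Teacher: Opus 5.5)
Your proposal is correct and follows essentially the paper's own route. The scheduler-composition principle and the Distance Decreasing Lemma reduce the claim to the regime \(D\le\delta\); the paper then likewise checks all nine ordered color pairs under \FSY\ (gathering within three rounds) and uses the private progression \(A\to B\to C\to A\) under \RR.

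The point you wanted to double-check is settled exactly as in the paper: an \(\mathsf{O}/\mathsf{O}\) round arises only from \((C,C)\) and always yields \((A,A)\), so part~(a) of the Distance Decreasing Lemma applies. Your difference labels are slightly off, since \(-1\equiv 2 \pmod 3\) and the orbits are really \(\{(A,B),(B,C),(C,A)\}\) and \(\{(B,A),(C,B),(A,C)\}\), but your cases still cover all nine pairs up to swapping \(r\) and \(s\).
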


\begin{proof}[Proof sketch]
Correctness under \FSY\ follows by examining the nine possible
ordered color pairs; correctness under \RR\ follows from the
private color progression
$
    A\longrightarrow B\longrightarrow C\longrightarrow A.
$
The standard distance-decreasing argument extends these
effective-Rigid executions to Non-Rigid movement, and the
\FSY\ and \RR\ properties imply correctness under both
\RSY\ and R-\RSY.
The complete proof is given in Appendix~\ref{app:SS-L-3-color-correctness}.
\end{proof}

\begin{corollary}
Under VD and \RSY, the self-stabilizing color complexity of
Rendezvous for \FS\ robots is exactly three.
\end{corollary}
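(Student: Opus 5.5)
The corollary combines an upper bound and a lower bound, and I will establish each from results already stated.

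\emph{Upper bound.} Theorem~\ref{th:SS-L-non-rigid-3-color} states that \textsc{SS-\ML-3-color} is a three-color self-stabilizing Rendezvous algorithm for \FS\ robots under \RSY, even with Non-Rigid movement. It is an \ML-algorithm, so its destination and light update depend only on the observed color. Under VD the adversary may rotate, reflect and rescale each local frame at every \Look. For an \ML-algorithm this changes nothing: the destinations $\mathsf{Stay}$, $\mathsf{Midpoint}$ and $\mathsf{Other}$ are defined by the positions of the two robots, independently of any frame. Hence the theorem holds verbatim under VD, and three colors suffice. If the proof of Theorem~\ref{th:SS-L-non-rigid-3-color} in the appendix is stated without reference to disorientation, I will add a single sentence recording this frame-invariance.

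\emph{Lower bound.} Theorem~\ref{thm:rsynch-onesided} states that under VD and \RSY\ there is no two-color self-stabilizing Rendezvous algorithm in \FS. This covers general, not only \ML, algorithms, because Lemma~\ref{lem:vd-normalization} reduces every distinct-position snapshot to a view that depends only on the observed color. It remains to exclude a single color. With one color, the unique rule must serve both pure \FSY\ and pure \RR\ executions. Pure \FSY\ requires a midpoint move: if both robots used the same $\lambda\neq\tfrac12$, the distance would scale by $|1-2\lambda|\neq0$ and Rendezvous would never occur. Pure \RR\ requires a move-to-other action: under \RR\ the distance scales by $|1-\lambda|$, which is zero only for $\lambda=1$. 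After normalization one color cannot give both $\lambda=\tfrac12$ and $\lambda=1$. Equivalently, a one-color algorithm is a two-color algorithm in which one color is never used, so Theorem~\ref{thm:rsynch-onesided} already covers it.

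Combining the two bounds, the self-stabilizing color complexity is exactly three. The only step needing care is confirming that both cited theorems concern the same setting: \FS, \RSY, self-stabilization, and VD. For the upper bound, the point to check is that the Non-Rigid guarantee survives arbitrary rescaling under VD; since $\delta$ is a physical constant, I expect this to hold without change.
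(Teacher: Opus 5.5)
Your proposal is correct and matches the paper's proof. The paper likewise combines Theorem~\ref{th:SS-L-non-rigid-3-color} for the upper bound with Theorem~\ref{thm:rsynch-onesided} for the two-color lower bound. Your frame-invariance and one-color checks are sound additions. In the one-color aside, the embedding argument is the one that closes the case, provided you let the unused color copy the used color's rule, since self-stabilization quantifies over all initial colors. Your $\lambda$-argument, by contrast, tacitly ignores off-line destinations; the paper handles those in Lemma~\ref{lem:two-color-core} by letting the adversary choose the local frames so that the two destinations differ.
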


\begin{proof}
The upper bound follows from
Theorem~\ref{th:SS-L-non-rigid-3-color}, and the two-color
lower bound follows from
Theorem~\ref{thm:rsynch-onesided}.
\end{proof}

\begin{remark}[The \(\FC\) case]
\label{rem:SS-L-3-color-FCOM}
For \(\FC\), the exact three-color bound under \RSY\ already
follows from the known \SSY\ upper bound~\cite{FSVY} and
Theorem~\ref{thm:rsynch-onesided}.
The graph of Figure~\ref{fig:all-algorithms}(b) also works
under both \RSY\ and R-\RSY (Appendix~\ref{app:SS-L-3-color-correctness})
but not under \SSY; its main significance is therefore the
\(\FS\) separation between \RSY\ and \SSY.
\end{remark}


Combining the lower bounds of Section~\ref{sec:two-color-imp}
with the upper bounds above determines all red \RSY\ and
R-\RSY\ entries of Table~\ref{tab:VD} for both \FS\ and \FC.
In the order
\((\mathrm{SS},\mathrm{QS},\mathrm{NS})\), the exact color
complexities are
\begin{itemize}
\item   $ (3,2,2)
    \:\text{under \RSY,}$
\item $
    (3,3,2)
    \:\text{under R-\RSY\ with Rigid movement,}
$
and
\item 
$
    (3,3,3)
    \:\text{under R-\RSY\ with Non-Rigid movement.}
$
\end{itemize}
This gives the staircase pattern shown in
Table~\ref{tab:VD}.

\subsection{FUD Breaks the Two-Color
\texorpdfstring{\ML-Barrier}{L-Barrier}}
\label{sec:SS-non-L-2-color}

The previous results identify a two-color barrier within the
\ML\ framework.
Indeed, the forced-transition argument of
Theorem~\ref{thm:rsynch-onesided}, when restricted to
\ML-algorithms, does not require VD normalization:
a two-color self-stabilizing \ML-algorithm must combine a
midpoint color and a move-to-other color, and the resulting
mixed-color \FSY\ execution halves the distance forever.
We now show that FUD allows this barrier to be overcome once
the \ML\ restriction is dropped.
The key is to use the locally measured distance as an
additional cyclic bit.

We use the periodic \((4,2)\)-distance classes introduced in
Definition~\ref{def:periodic-distance-classes}.
For robot \(x\), let
$
    q_x(t)=\kappa_{x,4}^{(2)}(t).
$
By Lemma~\ref{lem:periodic-distance-class-shift},
$
    q_x(D/4)=q_x(D)-1\pmod2.
$
Thus, after two successive midpoint moves reduce the distance
from \(D\) to \(D/4\), the locally stored distance bit is
reversed.

The algorithm \textsc{SS-non-\ML-2-color}, shown in
Figure~\ref{fig:all-algorithms}(c), modifies
\textsc{QS-\ML-2-color} only in the rule for observed color
\(A\).
The robot writes either \(B\) or \(A\), depending on its
current periodic distance class.
The pseudocode for \textsc{SS-non-\ML-2-color} is shown in Appendix~\ref{ap:SS-non-L-2-color}.
\begin{theorem}
\label{th:SS-non-L-non-rigid-2-color}
Under VD+FUD,
\textsc{SS-non-\ML-2-color} is a two-color self-stabilizing
non-\ML-Rendezvous algorithm for both \FS\ and \FC\ under
\RSY\ and R-\RSY\ in the Non-Rigid setting.
\end{theorem}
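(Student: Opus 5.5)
The plan is to apply the two reduction principles of Section~\ref{subsec:rsynch-reduction}. First, it suffices to prove correctness under pure \FSY\ and pure \RR, from arbitrary color configurations, because the theorem claims self-stabilization and so no extra quasi-self-stabilization condition is needed. Second, every Non-Rigid execution of the algorithm eventually reaches distance at most \(\delta\). After that point all moves are Rigid, so it suffices to analyse the finite-state behaviour for \(D\le\delta\). For this second principle I will check the standard distance-decreasing argument. Every non-\(\mathsf{Stay}\) action is \(\mathsf{H}\) or \(\mathsf{O}\), and every Non-Rigid move of length at least \(\delta\) reduces the distance by a fixed amount (or reaches the peer). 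Hence the distance cannot stay above \(\delta\) forever.

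I take the rules of \textsc{SS-non-\ML-2-color} to be
\[
B\longmapsto(A,\mathsf{O}),
\qquad
A\longmapsto(f(q_x),\mathsf{H}),
\]
where \(f(q_x)\in\{A,B\}\) depends on the \((4,2)\)-class \(q_x=\kappa^{(2)}_{x,4}\). The only fact I need about \(f\) is that it takes both values. By Lemma~\ref{lem:periodic-distance-class-shift}, along any sequence of distances \(D, D/4, D/16,\dots\) seen by a fixed robot \(x\), the bit \(q_x\) alternates. This is the only place where FUD is used; VD rotations and reflections do not affect \(q_x\).

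\textbf{\RR.} Any activated robot that observes \(B\) moves to the stationary peer, and Rendezvous holds. While a robot observes \(A\), it moves to the midpoint, so each activation halves the distance. In \FS, a robot of color \(A\) sees distances that shrink by a factor \(4\) between its own activations, because each intervening activation (its own or the peer's) halves the distance. Thus its class flips at every own activation, and within two own activations it writes \(B\). At its next activation it observes \(B\) and reaches the peer. In \FC, once some robot writes \(B\), the peer observes \(B\) at its next activation and moves onto it. The same alternation argument, applied to whichever robot is activated, forces some robot to write \(B\) within a bounded number of rounds.

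\textbf{\FSY.} From \((A,A)\), both robots move to the midpoint and meet. From \((B,B)\), they swap positions and both become \(A\); in \FC\ both observe \(B\) and write \(A\). The next round then gathers them.

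The main obstacle is the mixed case. I treat it uniformly for both models, since in either model exactly one robot "acts on \(A\)" and the other "acts on \(B\)". Suppose \(r\) acts on \(A\) and \(s\) acts on \(B\) at distance \(D\). Then \(r\) goes to the midpoint and \(s\) goes to \(r\)'s old position, so the new distance is \(D/2\). The robot acting on \(B\) writes \(A\), and the robot acting on \(A\) writes \(f(q_r(D))\). If this value is \(A\), we obtain \((A,A)\), and the next round gathers the robots. Otherwise the configuration is again mixed with the roles exchanged, at distance \(D/2\).

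To exclude an infinite run of this kind, I will show that each robot acts on \(A\) every second round, at distances \(D, D/4, D/16,\dots\). By Lemma~\ref{lem:periodic-distance-class-shift} its class then alternates, so within at most three mixed rounds the \(A\)-acting robot writes \(A\). In \FC\ I must verify that the role exchange really happens, since a robot's observation is the peer's color, and I expect this verification to be the delicate step. Finally I combine the \FSY\ and \RR\ results through the reduction principle to obtain the claim under both \RSY\ and R-\RSY.
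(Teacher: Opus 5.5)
Your overall architecture matches the paper's: scheduler composition plus the Distance Decreasing Lemma, then a finite analysis under \FSY\ and \RR\ for $D\le\delta$, with the $(4,2)$-class flip $q_x(D/4)=1-q_x(D)$ breaking the mixed \FSY\ cycle. However, the step you single out as delicate does not just need care; it is false. In \FC\ the roles are \emph{not} exchanged after a mixed \FSY\ round.

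To see this, suppose $r$ observes $A$ and $s$ observes $B$, so $c_s=A$ and $c_r=B$. In the round, $s$ writes $A$ and $r$ writes $f(q_r(D))$. When this value is $B$, the colors are again $c_r=B$ and $c_s=A$, so $r$ observes $A$ and $s$ observes $B$ once more. Thus in \FC\ the same robot acts on $A$ in every mixed round, at distances $D,D/2,D/4,\dots$. Your claim that each robot acts on $A$ every second round, at $D,D/4,D/16,\dots$, holds only in \FS, so the planned verification would fail.

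The conclusion survives with a different argument. Since $q_r(D/4)=1-q_r(D)$, robot $r$ writes $A$ at distance $D$, at $D/2$, or at the latest at $D/4$. The paper phrases this uniformly for both models. If the configuration is still mixed after two rounds, the robot that observed $A$ at distance $D$ observes $A$ again at $D/4$, where its class has flipped. In \FS\ this happens after the roles have swapped twice; in \FC\ it happens without any swap. You should replace the role-exchange claim by this two-round statement, which is true in both models.

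Second, your justification of the Non-Rigid reduction is too quick. Under \FSY, a simultaneous $O/O$ round need not decrease the distance at all, because both robots may reach each other's old positions. So the claim that every move of length at least $\delta$ reduces the distance by a fixed amount fails in this case. The Distance Decreasing Lemma needs its hypothesis (a), and you must check it explicitly. An $O/O$ round occurs only from an observed $(B,B)$ configuration, and in both models it produces $(A,A)$. Hence such rounds never occur consecutively.
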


\begin{proof}[Proof sketch]
The algorithm satisfies the reduction principles proved in
Appendix~\ref{ap:rsynch-reduction}.
In particular, under \FSY, simultaneous move-to-other actions
can occur only from  a \((B,B)\) configuration, which changes to
\((A,A)\) after one round, while under \RR\ every non-null
activation decreases the distance whenever it exceeds
\(\delta\).
Hence every Non-Rigid execution eventually reaches the
effective-Rigid regime in which the distance is at most
\(\delta\).

It therefore suffices to establish self-stabilization under
\FSY\ and \RR\ in this regime.
Correctness under both component schedulers then implies
correctness under \RSY\ and R-\RSY.


Same-observed-color configurations are resolved exactly as in
\textsc{QS-\ML-2-color}.
The only obstruction is a mixed-color configuration.
If the robot observing \(A\) sees an odd periodic distance
class, the execution reaches a same-observed-color
configuration after one phase.
Otherwise, if Rendezvous has not yet occurred after two
successive midpoint phases, the physical distance has changed
from \(D\) to \(D/4\).
The periodic class is therefore reversed, and the execution
enters the preceding case.
Thus the mixed-color cycle cannot repeat indefinitely.

The argument applies to both \FS\ and \FC\ because the graph is
defined solely in terms of the observed color and the
activated robot's own periodic distance class.
The complete \FSY\ and \RR\ analyses are given in
Appendix~\ref{ap:Proof-SS-non-L-2-color}.
\end{proof}

\section{Non-Quasi-Self-Stabilizing Rendezvous under VD+FUD and Rigid Movement}


\begin{figure}[t]
\centering
\includegraphics[width=\linewidth]{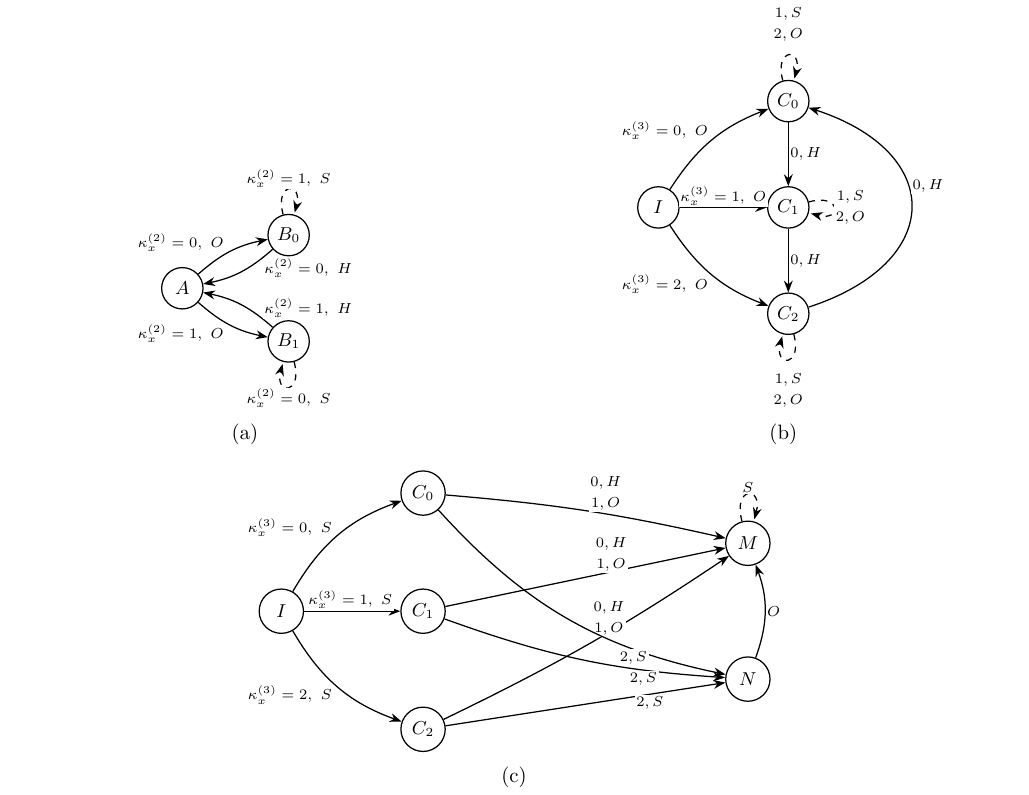}
\caption{
(a) Algorithm \textsc{NS-FSTA-3-color}.
(b) Algorithm \textsc{NS-FSTA-4-color}.
(c) Algorithm \textsc{NS-FCOM-6-color}.
In (b) and (c), a label \(i,X\) means that, when
$
    \Delta=\kappa_x^{(3)}(t)-j \pmod 3
$
equals \(i\), the robot performs action \(X\), where \(C_j\)
is the observed color. Dashed self-loops denote \(\mathsf{Keep}\), that is, the empty
light update.
}
\label{fig:new-algorithm-graphs}
\end{figure}

\subsection{A Three-Color Rendezvous Algorithm for \texorpdfstring{$\FS$}{FSTA}
under \texorpdfstring{\SSY}{SSYNCH}}
\label{subsec:FSTA-SSY-3color}

We now consider $\FS$ robots under \SSY, VD+FUD, and Rigid
movement.  We present a non-quasi-self-stabilizing Rendezvous
algorithm using three colors and show that three colors are optimal.

The algorithm, denoted by \textsc{NS-FSTA-3-color}, uses the
three colors
$
    \{A,B_0,B_1\},
$
where \(A\) is the designated initial color.
Since the robots are in $\FS$, each robot observes its own color.

The algorithm uses the periodic distance classes
\(\kappa_x^{(2)}\) defined in
Section~\ref{subsubsec:periodic_distance_classes}.
Recall that
$
    \kappa_x^{(2)}(D)
    =
    \kappa_{x,2}^{(2)}(D)
    \in\{0,1\},
$
and, by Lemma~\ref{lem:periodic-distance-class-shift},
\begin{equation}
\label{eq:3color-class-flip}
    \kappa_x^{(2)}(D/2)
    =
    1-\kappa_x^{(2)}(D).
\end{equation}

The behavior of a robot \(x\) is as follows.

\begin{itemize}
\item If \(x\) has color \(A\), let
$
    i=\kappa_x^{(2)}(t).
$
Robot \(x\) changes its color to \(B_i\) and moves to
\(\mathsf{Other}_x(t)\).

\item If \(x\) has color \(B_i\) and
$
    \kappa_x^{(2)}(t)=i,
$
then \(x\) changes its color to \(A\) and moves to
\(\mathsf{Midpoint}_x(t)\).

\item If \(x\) has color \(B_i\) and
$
    \kappa_x^{(2)}(t)\neq i,
$
then \(x\) keeps color \(B_i\) and stays at its current position.
\end{itemize}
The graph representation is shown in Figure~\ref{fig:new-algorithm-graphs}~(a) and the pseudocode is given in
Appendix~\ref{app:NS-FSTA-3color}.

In an algorithmic diagram, a solid edge from \(a\) to \(b\)
denotes a rule that observes \(a\) and writes \(b\) to the
robot's own light.
A dashed self-loop denotes a \(\mathsf{Keep}\)-rule; its label
specifies only the guard and destination.
Thus, a solid self-loop \(a\to a\) denotes an explicit write
of \(a\), whereas a dashed self-loop leaves the light unchanged.


Thus, color \(B_i\) records the periodic distance class observed
by the robot at its preceding activation in color \(A\).
A \(B_i\)-robot moves to the midpoint only as long as the current
distance remains in the recorded class.  Once the physical distance
is halved, Equation~\eqref{eq:3color-class-flip} makes the recorded
and current classes different, forcing the robot to stay.

\begin{theorem}
\label{th:NS-FSTA-SSY-3color-optimal}
Under VD+FUD and Rigid movement,
\textsc{NS-FSTA-3-color} solves Rendezvous for two $\FS$
robots under \SSY\ from the designated initial color \(A\).
Consequently, the non-quasi-self-stabilizing color complexity
of Rendezvous in this setting is exactly three.
\end{theorem}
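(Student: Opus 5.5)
The plan is to enumerate all reachable configurations from $(A,A)$ under an arbitrary \SSY\ activation sequence. The state space is tiny: a configuration is described by the color pair and the physical distance $D$. I would first record two structural facts. First, the destinations $\mathsf{Other}$ and $\mathsf{Midpoint}$ are defined physically, and $\kappa_x^{(2)}$ depends only on $dis(t)$ and the fixed $u_x$. Hence the VD rotations and reflections are irrelevant, and each robot's behavior is a deterministic function of its own color and $\kappa_x^{(2)}(t)$. Second, once $dis(t)=0$ the priority rule $(\mathsf{Keep},\mathsf{Stay})$ applies to every activated robot, so Rendezvous, once reached, is permanent. It therefore suffices to show that every fair \SSY\ execution reaches distance $0$.

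Let $D>0$ be the initial distance, and set $i=\kappa_r^{(2)}(D)$ and $j=\kappa_s^{(2)}(D)$. The case analysis proceeds as follows.
\begin{itemize}
\item From $(A,A)$, if exactly one robot is activated, it moves to $\mathsf{Other}$ while the peer is stationary. This gives distance $0$ immediately.
\item From $(A,A)$, if both robots are activated, they swap positions. The distance stays $D$ and the colors become $(B_i,B_j)$, each label matching its robot's current class.
\item From $(B_i,B_j)$ at distance $D$, if both robots are activated, both classes match, so both move to the common midpoint and Rendezvous occurs.
\item From $(B_i,B_j)$ at distance $D$, if only $r$ is activated (the case of $s$ is symmetric), then $r$ moves to the midpoint and becomes $A$, and the distance becomes $D/2$. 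By Equation~\eqref{eq:3color-class-flip}, $\kappa_s^{(2)}(D/2)=1-j\neq j$. Hence $s$ is frozen: every activation of $s$ keeps $B_j$ and stays, and this persists as long as the distance is $D/2$.
\item In the resulting configuration $(A,B_j)$ at distance $D/2$, the distance changes only when $r$ is activated. By fairness this eventually happens, and $r$ then moves to $\mathsf{Other}$. Whether $s$ is activated simultaneously or not, $s$ stays, so $r$ lands on $s$ and Rendezvous occurs.
\end{itemize}
Thus every execution reaches Rendezvous within at most three rounds in which the configuration changes.

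The only delicate step is the frozen-robot argument in the last two cases. It requires that $s$'s stored bit $j$ be compared against the class of the \emph{current} distance $D/2$, which is guaranteed by Lemma~\ref{lem:periodic-distance-class-shift} with $b=2$, $m=2$, $h=1$ under FUD. I would also check explicitly that no stale information arises: \SSY\ rounds are atomic, so each activated robot sees the configuration at the start of its round. In particular, in the simultaneous activation from $(A,B_j)$, robot $r$'s target is indeed $s$'s unchanged position.

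For optimality, the upper bound of three colors is the algorithm itself. For the lower bound I would invoke Theorem~\ref{thm:ssynch-known-lower-bounds}(1): no two-color non-quasi-self-stabilizing algorithm exists for \FS\ under \SSY\ and Rigid movement, even with Fixed Disorientation, common chirality, and a common unit distance. Those executions are special cases of what a VD+FUD adversary may produce: it can keep the coordinate systems fixed and choose $u_r=u_s$. Hence any two-color algorithm that worked under VD+FUD would also work in that setting, which is a contradiction. The exact color complexity is therefore three.
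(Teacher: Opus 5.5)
Your proposal is correct and follows the paper's own proof. It uses the same three-stage case analysis: a lone first activation reaches the stationary peer; a simultaneous first round gives the phase-consistent swap to $(B_i,B_j)$ at distance $D$; then either both robots move to the common midpoint, or a one-sided midpoint move makes $\kappa_s^{(2)}(D/2)=1-j$ freeze the peer until the $A$-robot reaches it. Optimality comes from Theorem~\ref{thm:ssynch-known-lower-bounds}(1), as in the paper, and your remark that a VD+FUD adversary may keep frames fixed and set $u_r=u_s$ (so the cited lower bound transfers) spells out a step the paper leaves implicit.
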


\begin{proof}[Proof sketch]
Suppose first that the robots occupy distinct positions at
physical distance \(D>0\), initially with color \(A\).

If exactly one robot is activated in the first \SSY\ round,
its \(A\)-rule makes it move to the stationary peer.
Rigid movement therefore achieves Rendezvous immediately.

Suppose instead that both robots are activated.
Let
$
    i=\kappa_r^{(2)}(D),
    \:
    j=\kappa_s^{(2)}(D).
$
The robots change their colors to \(B_i\) and \(B_j\),
respectively, and exchange their positions.
The physical distance remains \(D\), so each stored index still
agrees with the corresponding robot's current distance class.

If both robots are activated in the next round, both move to
the same midpoint and achieve Rendezvous.
Otherwise, let \(r\) be the first robot activated alone.
Robot \(r\) changes its color from \(B_i\) to \(A\), moves to
the midpoint, and reduces the distance to \(D/2\).

By Lemma~\ref{lem:periodic-distance-class-shift},
$
    \kappa_s^{(2)}(D/2)
    =
    1-\kappa_s^{(2)}(D)
    =
    1-j
    \neq j.
$
Thus, while the distance remains \(D/2\), robot \(s\), which
still has color \(B_j\), stays whenever it is activated.
Robot \(r\), now in color \(A\), moves to the current position
of \(s\) at its next activation.
This remains true even if both robots are activated together,
because \(s\)'s action is null.
By fairness, \(r\) is eventually activated, and by rigidity it
reaches \(s\).

The argument is symmetric if \(s\) is activated first after the
position exchange.
If the robots initially occupy the same position, Rendezvous
has already been achieved.
Hence \textsc{NS-FSTA-3-color} solves Rendezvous from the
designated initial color \(A\).

A complete case analysis is given in
Appendix~\ref{ap:NS-FSTA-3-color-correctness}.

Finally, the two-color impossibility of Theorem~\ref{thm:ssynch-known-lower-bounds}~(1) gives the
matching lower bound.
Therefore, three colors are both necessary and sufficient.
\end{proof}

The same graph does not work for \(\FC\) under general FUD,
because the robots may assign different class indices to the
same physical distance.
However, it does work under $CM$-atomic \ASY\ when the robots agree on
their unit distance.
A counterexample and the positive argument are given in
Appendix~\ref{ap:NS-FSTA-3-color-on-FCOM}.

\subsection{Four-Color Phase Algorithms under Atomic Asynchrony}
\label{subsec:four-color-atomic}

We next present two four-color non-quasi-self-stabilizing
Rendezvous algorithms based on the same periodic three-phase
mechanism.
The first, \textsc{NS-FSTA-4-color}, is for \FS\ robots under
\(M\)-atomic \ASY, whereas the second,
\textsc{NS-FCOM-4-color}, is for \FC\ robots under
\(CM\)-atomic \ASY.

Both algorithms use the color set
$
    \{I,C_0,C_1,C_2\},
$
where \(I\) is the designated initial color, and the periodic
distance classes
$
    \kappa_x^{(3)}(t)
    =
    \kappa_{x,2}^{(3)}(t)
    \in\mathbb Z_3.
$
Recall from
Lemma~\ref{lem:periodic-distance-class-shift} that
$
    \kappa_x^{(3)}(D/2)
    =
    \kappa_x^{(3)}(D)-1
    \pmod3.
$

The algorithms have the same initialization rule.
When robot \(x\) observes \(I\), it sets
$
    i=\kappa_x^{(3)}(t),
$
changes its color to \(C_i\), and moves to the current position
of the other robot.

When \(x\) observes \(C_j\), it computes
$
    \Delta_x(t,j)
    =
    \kappa_x^{(3)}(t)-j
    \pmod3.
    $
The two algorithms then apply the following rules, where 
each entry specifies the light update and the geometric action:

\[
\begin{array}{c|c|c|c}
\text{observed color}
&
\text{condition}
&
\textsc{NS-FSTA-4-color}
&
\textsc{NS-FCOM-4-color}
\\ \hline
I
&
--
&
\bigl(C_{\kappa_x^{(3)}(t)},\mathsf{Other}_x(t)\bigr)
&
\bigl(C_{\kappa_x^{(3)}(t)},\mathsf{Other}_x(t)\bigr)
\\ \hline
C_j
&
\Delta_x(t,j)=0
&
(C_{j+1},\mathsf{Midpoint}_x(t))
&
(C_{j+1},\mathsf{Midpoint}_x(t))
\\
C_j
&
\Delta_x(t,j)=1
&
(\mathsf{Keep},\mathsf{Stay}_x(t))
&
(\mathsf{Keep},\mathsf{Other}_x(t))
\\
C_j
&
\Delta_x(t,j)=2
&
(\mathsf{Keep},\mathsf{Other}_x(t))
&
(\mathsf{Keep},\mathsf{Stay}_x(t))
\end{array}
\tag{\(\mathcal A_4\)}
\label{tab:four-color-phase-rules}
\]
Here and throughout this subsection, all phase indices are
taken modulo \(3\).

The opposite assignments of the two nonzero offsets reflect
the different meanings of a phase color.
In the \FS\ algorithm, \(C_j\) stores a distance class
previously measured by the same robot.
After a one-sided midpoint move, the mover obtains offset \(1\)
and must wait, whereas the peer obtains offset \(2\) and
approaches it.
In the \FC\ algorithm, \(C_j\) communicates a distance class
measured by the robot displaying it.
After a one-sided midpoint move, the mover observes the peer's
old phase and obtains offset \(2\), whereas the peer observes
the mover's advanced phase and obtains offset \(1\).
The nonzero actions are therefore reversed so that the midpoint
mover is again the waiter.

Figure~\ref{fig:new-algorithm-graphs}~(b) shows
\textsc{NS-FSTA-4-color}.
The graph of \textsc{NS-FCOM-4-color} is obtained by
replacing the loop labels \(1,S\) and \(2,O\) with
\(1,O\) and \(2,S\), respectively.
The corresponding pseudocodes are given in
Appendices~\ref{app:NS-FSTA-4-color}
and~\ref{app:NS-FCOM-4-color}, respectively.

\begin{theorem}[Four-color phase algorithms]
\label{th:four-color-phase-algorithms}
Under VD+FUD and Rigid movement, the following statements hold.
\begin{enumerate}
\item
\textsc{NS-FSTA-4-color} solves Rendezvous for two \FS\ robots
under \(M\)-atomic \ASY\ from the designated initial color \(I\).

\item
\textsc{NS-FCOM-4-color} solves Rendezvous for two \FC\ robots
under \(CM\)-atomic \ASY\ from the designated initial color
\(I\).
\end{enumerate}
\end{theorem}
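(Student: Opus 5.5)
We organize the proof around the evolution of the pair of observed phase offsets, exploiting the atomicity guarantees to reduce each asynchronous execution to a sequence of ``effectively sequential'' moves. Under \(M\)-atomic \ASY\ (for \FS) and \(CM\)-atomic \ASY\ (for \FC), every \Look\ observes both robots at rest, so positions seen at \Look\ are always positions between completed moves. In \FS\ a robot's own color cannot change between its \Look\ and its \Compute, so stale snapshots are never inconsistent. In \FC\ under \(CM\)-atomicity, no \Look\ happens between a peer's \Compute\ and the end of its \Move, so the observed color is consistent with the observed position.

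The first step handles initialization from \((I,I)\). If some robot performs its \Look\ while the peer still has color \(I\) and the peer's first cycle has not moved it, that robot moves to \(\mathsf{Other}\) and reaches the peer by rigidity. The only way to avoid immediate Rendezvous is for both robots to \Look\ at the initial positions before either moves. They then swap positions at the unchanged distance \(D\), reaching colors \(C_i,C_j\) with \(i=\kappa_r^{(3)}(D)\) and \(j=\kappa_s^{(3)}(D)\); the overlapping interleavings either end at a common point or in this swapped configuration. In \FC\ the subsequent observations are exactly those of Lemma~\ref{lem:phase-complementarity}, so \((\Delta_r,\Delta_s)\in\{(0,0),(1,2),(2,1)\}\). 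In \FS\ each robot stores its own class, so at distance \(D\) both offsets equal \(0\).

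The second step is an invariant along the phase dynamics, maintained by induction over completed cycles. The invariant states that either (a) both offsets are \(0\) at the current distance, or (b) one robot is a ``waiter'' whose action is \(\mathsf{Stay}\) and the other is a ``chaser'' whose action is \(\mathsf{Other}\). From (a), a one-sided midpoint move halves the distance, and Lemma~\ref{lem:periodic-distance-class-shift} shifts every class by \(-1\). In \FS\ the mover now holds \(C_{j+1}\) at class \(\kappa-1\), giving offset \(-2\equiv1\) and action \(\mathsf{Stay}\), while the other holds \(C_i\) at class \(\kappa-1\), giving offset \(2\) and action \(\mathsf{Other}\). In \FC, by the computation in the text, the mover sees offset \(2\) (action \(\mathsf{Stay}\)) and the peer sees offset \(1\) (action \(\mathsf{Other}\)). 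Either way we land in (b). If instead both perform midpoint moves from snapshots at the same positions, they meet at the common midpoint. In case (b), the waiter is inactive and the chaser's target is stationary, so fairness and rigidity give Rendezvous. We also check the initial FC cases \((1,2)\) and \((2,1)\) directly: they are already of type (b), with one robot staying and the other moving to \(\mathsf{Other}\).

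The main obstacle is the interleavings in which one robot's \Look\ precedes the peer's completed move but its own move comes later. Examples are a robot moving to a stale \(\mathsf{Other}\) position, or to a stale midpoint after the peer has moved. For each such pattern we verify, using the atomicity constraint, that a snapshot taken before the peer's move causes the robot to act as if the peer had not yet moved. The key observation is that a robot whose \Look\ was stale targets the peer's old position. For \(\mathsf{Other}\) moves this produces a swap or a meeting. For midpoint moves it produces a meeting when both target the same midpoint, or configuration (a) or (b) again at distance \(D/2\) or \(D\). Since the distance strictly decreases or Rendezvous occurs within a bounded number of cycles, the induction terminates. I would enumerate these cases from the \(\mathcal A_4\) table for each model separately, noting that \FS\ uses \(M\)-atomicity only.
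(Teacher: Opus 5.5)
Your outline follows the same route as the paper: swap initialization to a phase-consistent configuration at distance $D$, Lemma~\ref{lem:phase-complementarity} for \FC, the class shift that turns a one-sided midpoint mover into a waiter and its peer into a chaser, and atomicity to control stale snapshots. As written, though, the termination step has a gap, and the enumeration you defer is where the paper's proof does its actual work.

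\textbf{Termination.} Your closing sentence does not justify termination. If configuration (a) could reappear at $D/2, D/4,\dots$, a strictly decreasing distance would not give Rendezvous in finite time; endless halving is exactly the mechanism behind the paper's two-color impossibility proofs. What you need, and what your own offset computations already give, is that (a) occurs at most once. Let $r$ be the first robot to complete a midpoint move, to the point $m$, and let $i=\kappa_r^{(3)}(D)$. Every fresh \Look\ of $r$ at distance $D/2$ gives offset $(i-1)-(i+1)\equiv 1$ in \FS\ and $(i-1)-i\equiv 2$ in \FC, hence $(\mathsf{Keep},\mathsf{Stay})$. Every fresh \Look\ of $s$ gives the chaser offset, whose rule also keeps the color. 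So the only other possible non-null action is a pending midpoint move of $s$ computed from the distance-$D$ snapshot, and that move ends at $m$.

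\textbf{The deferred case analysis.} This reduces to the paper's two cases, according to whether $s$ had already looked at the distance-$D$ configuration.
\begin{itemize}
\item If it had, you must show that $r$ stays at $m$ until $s$ arrives. In \FS\ this follows from the offset-$1$ computation plus $M$-atomicity, since no \Look\ can occur during $s$'s \Move. In \FC\ it is more delicate. A \Look\ of $r$ after $s$'s \Compute\ but before its \Move\ starts would see $C_{i+1}$ at distance $D/2$ and yield offset $1$. That would send $r$ to $s$'s old position while $s$ moves to $m$. $CM$-atomicity is precisely what forbids this \Look.
\item If it had not, $s$'s next \Look\ is fresh at $D/2$ and makes it the chaser.
\end{itemize}

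\textbf{Zero-distance rule.} State this priority rule explicitly. The coincidence that can occur midway through the initial swap is \emph{not} Rendezvous, because the second pending swap move separates the robots again. Your phrase ``end at a common point'' is therefore misleading. That coincidence is harmless only because every \Look\ at distance $0$ yields $(\mathsf{Keep},\mathsf{Stay})$, and the same rule keeps the robots together once they meet.
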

\begin{proof}[Proof sketch]
If one robot completes its initial \(I\)-move before the peer
takes an \(I\)-snapshot, its Rigid move reaches the stationary
peer.
Otherwise, both initial snapshots precede the first move.
The two pending moves exchange the robots' positions and
produce a phase-consistent configuration at the unchanged
distance \(D\).
Any transient coincidence during this exchange is harmless,
because every activation at distance zero is null.

In both algorithms, offset \(0\) produces a midpoint move.
After a one-sided midpoint move, halving the distance decreases
the current class by one modulo \(3\).
In the \(\FS\) algorithm, the midpoint mover obtains offset
\(1\) and waits, whereas the peer obtains offset \(2\) and
approaches it.
In the \(\FC\) algorithm, the mover observes the peer's old
phase and obtains offset \(2\), whereas the peer observes the
mover's advanced phase and obtains offset \(1\).
The reversed nonzero mapping therefore again makes the midpoint
mover the waiter.

For \(\FC\), Lemma~\ref{lem:phase-complementarity} also shows
that every nonzero phase configuration yields an \(O/S\) or
\(S/O\) pair.
The respective atomicity assumptions make all pending and
repeated actions compatible with these geometric resolutions.
The complete asynchronous case analyses are given in
Appendix~\ref{aP:subsec:four-color-atomic}.
\end{proof}

\subsection{Extending the Four-Color Algorithm to Full Asynchrony}
\label{subsec:FCOM-ASYNCH-6color}
We extend \textsc{NS-FCOM-4-color} to full \ASY\ by adding
two guard colors, \(M\) and \(N\).
Under \(CM\)-atomic \ASY, the peer cannot observe a newly
published phase color before the associated move finishes.
Full \ASY\ permits such observations both before and during
the move, so non-null moves must be explicitly guarded.

The six-color algorithm retains the phase-to-action mapping
\(0\mapsto H\), \(1\mapsto O\), \(2\mapsto S\).
The first two rules write \(M\), announcing a committed
non-null move; the third writes \(N\), marking a stationary
anchor.
Observing \(M\) produces \((\mathsf{Keep},\mathsf{Stay})\),
whereas observing \(N\) makes the robot write \(M\) and move
to the anchor.
Thus every non-null move is performed while the mover displays
\(M\).

Initialization is passive: a robot observing \(I\) publishes
\(C_{\kappa_x^{(3)}(t)}\) and stays, instead of executing the
four-color algorithm's initial swap.
Retaining that swap would leave an unguarded move during which
the peer could combine a phase color with a different transient
distance.
Passive initialization keeps the physical distance unchanged
until the first \(M\) appears.
The colors \(C_0,C_1,C_2\) form the phase layer, with \(I\)
for initialization; \(M,N\) form the guard layer.

For a robot \(x\) observing \(C_j\), let
\(\Delta_x(t,j)=\kappa_x^{(3)}(t)-j\pmod3\).
The complete transition rules are as follows.

\begin{center}
\begin{tabular}{c|c|c|c}
observed color & condition & light update & destination\\ \hline
\(I\) & -- &
\(C_{\kappa_x^{(3)}(t)}\) & \(\mathsf{Stay}_x(t)\)\\ \hline
\(C_j\) & \(\Delta_x(t,j)=0\) &
\(M\) & \(\mathsf{Midpoint}_x(t)\)\\
\(C_j\) & \(\Delta_x(t,j)=1\) &
\(M\) & \(\mathsf{Other}_x(t)\)\\
\(C_j\) & \(\Delta_x(t,j)=2\) &
\(N\) & \(\mathsf{Stay}_x(t)\)\\ \hline
\(M\) & -- &
\(\mathsf{Keep}\) & \(\mathsf{Stay}_x(t)\)\\
\(N\) & -- &
\(M\) & \(\mathsf{Other}_x(t)\)
\end{tabular}
\end{center}

The graph representation is shown in
Figure~\ref{fig:new-algorithm-graphs}~(c), and the pseudocode
is given in Appendix~\ref{app:NS-FCOM-6-color}.

\begin{lemma}[Moving guard]
\label{lem:six-color-moving-guard}
Suppose that a robot \(r\) computes a non-null move in
\textsc{NS-FCOM-6-color}.
At the end of this \Compute operation, \(r\) displays color \(M\).

Until \(r\) completes the move, every fresh \Look performed by
the other robot \(s\) observes \(M\) and hence generates a
null move.
Consequently, while \(M\) remains visible, no fresh snapshot of
\(s\) can generate a new non-null move.
Robot \(s\) may have at most one incomplete cycle whose \Look
occurred before \(M\) became visible.
\end{lemma}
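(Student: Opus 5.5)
The plan is to verify the lemma directly from the transition table of \textsc{NS-FCOM-6-color} and from the \FC\ observation semantics. I would argue in three steps, each a short inspection.

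\textbf{Step 1: the mover displays \(M\).} I would list every rule whose destination is not \(\mathsf{Stay}\). These are exactly three rules: \((C_j,\Delta=0)\mapsto(M,\mathsf{Midpoint})\), \((C_j,\Delta=1)\mapsto(M,\mathsf{Other})\), and \(N\mapsto(M,\mathsf{Other})\). Each writes \(M\). One might object that a \(\mathsf{Midpoint}\) or \(\mathsf{Other}\) destination can coincide with the current position when \(dis(t)=0\). At distance zero, however, the priority rule \((\mathsf{Keep},\mathsf{Stay})\) applies, so it never produces a non-null move. Hence every non-null move is computed by one of these three rules. Since light updates are committed atomically at the end of \Compute, \(r\) displays \(M\) from that instant on.

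\textbf{Step 2: \(M\) persists and forces null moves.} I would show that \(r\) cannot change its light before it completes the move. A light changes only at the end of a \Compute of \(r\) itself. Since \(r\) executes its \LCM\ cycles sequentially, its next \Compute comes after the current \Move ends. So throughout the move interval, \(c_r=M\). In \FC, a fresh \Look by \(s\) in this interval has \(\obs_s=c_r=M\). The only rule for observed color \(M\) is \((\mathsf{Keep},\mathsf{Stay})\), and the distance-zero rule is also null. Therefore the resulting cycle of \(s\) is null and leaves \(s\)'s light unchanged. This holds for every fresh snapshot of \(s\) while \(M\) remains visible, which gives the ``consequently'' clause.

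\textbf{Step 3: at most one stale cycle.} I would note that \(s\) executes its cycles sequentially, so at any moment it has at most one incomplete cycle. Any cycle of \(s\) that is incomplete at the time \(M\) appears, or that later produces a non-null move during \(r\)'s move, must have its \Look before \(M\) became visible. This follows from Step 2, since any \Look taken after that point yields a null cycle. Once that stale cycle finishes, every later cycle of \(s\) begins with a fresh \Look while \(M\) is visible, and is therefore null by Step 2.

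\textbf{Main obstacle.} The step needing most care is the boundary of ``fresh'': I must fix that \(M\) becomes visible exactly at the end of \(r\)'s \Compute, because the light is committed before \Move begins. A \Look of \(s\) at that same instant must also be assigned a consistent convention, namely that it sees the committed light. With this convention, the remaining steps are direct table inspection and require no geometric argument.
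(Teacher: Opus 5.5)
Your proof is correct and follows the paper's own argument. Like the paper, you check that every rule with a non-null destination writes \(M\), use the observed-\(M\) rule \((\mathsf{Keep},\mathsf{Stay})\) for fresh snapshots, and use sequential \LCM\ cycles to bound the stale cycles of \(s\) by one. You also make explicit two details the paper leaves implicit: that \(r\)'s light cannot change before its current move ends, and that the zero-distance priority rule is null.
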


\begin{proof}
Every rule producing a non-null destination writes \(M\) to the moving robot's light at the end of \Compute. 
Since the rule for observed color \(M\) is
\(\mathsf{Stay}\), every subsequent fresh \Look of the peer
generates a null move.

Moreover, each robot executes its \LCM cycles sequentially.
Thus, when \(M\) becomes visible, the peer can have at most one
incomplete cycle based on an earlier snapshot.
\end{proof}

\begin{theorem}
\label{th:six-color-FCOM-ASYNCH}
Under VD+FUD and Rigid movement,
\textsc{NS-FCOM-6-color} solves Rendezvous for two \(\FC\)
robots under \ASY\ from the designated initial color \(I\).
Consequently, there exists a six-color
non-quasi-self-stabilizing Rendezvous algorithm in this setting.
\end{theorem}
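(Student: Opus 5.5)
The plan is to prove correctness by an invariant-driven case analysis over the execution, organized around the first time a guard color appears. Lemma~\ref{lem:six-color-moving-guard} and passive initialization are the two devices that keep stale snapshots under control. If the robots start co-located, every activation is null and we are done, so assume an initial distance $D>0$ with both robots colored $I$.

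\textbf{Invariant for the passive prefix.} First I would prove that no robot moves before some robot displays $M$. The $I$-rule and the $C_j$-rule with $\Delta=2$ are $\mathsf{Stay}$ rules, and the $N$-rule and the other two $C_j$-rules all write $M$ before moving. Hence throughout this prefix:
\begin{itemize}
\item the physical distance stays equal to $D$;
\item each robot $x$ publishes $C_{\kappa_x^{(3)}(D)}$;
\item every snapshot taken in the prefix, stale or fresh, records the original positions and distance $D$.
\end{itemize}
Once both phase colors are published, Lemma~\ref{lem:phase-complementarity} applies to any cross-observation of phase colors and gives $(\Delta_r,\Delta_s)\in\{(0,0),(1,2),(2,1)\}$. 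By symmetry this leaves two cases: the \emph{symmetric case} $(0,0)$ and the \emph{asymmetric case} $(1,2)$. A robot may also observe the peer's $I$ before that peer publishes. This only delays publication and yields a stay, so I would absorb it into the prefix.

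\textbf{Asymmetric case $(1,2)$.} Robot $r$ writes $M$ and heads to $\mathsf{Other}$, while $s$ writes $N$ and stays. I would establish two facts.
\begin{itemize}
\item $s$ never moves while $r$ travels. Its only non-null rules need it to observe a $C$-color with offset $0$ or $1$, or to observe $N$. By complementarity, any pending snapshot of $s$ taken in the prefix has offset $2$. Every fresh snapshot of $s$ sees $M$ by Lemma~\ref{lem:six-color-moving-guard}.
\item $r$'s destination is therefore the fixed current position of $s$, and rigidity gives Rendezvous.
\end{itemize}
A variant arises if $s$ writes $N$ first. Then $r$ may observe $N$ instead of $C_j$, but it again writes $M$ and moves to the stationary anchor, so the outcome is the same. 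After Rendezvous, all activations are null, by the distance-zero priority rule.

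\textbf{Symmetric case $(0,0)$.} Suppose $r$ is the first to write $M$ and move to the midpoint. There are two sub-cases.
\begin{itemize}
\item If $s$ also computes a midpoint move, that computation must come from a prefix snapshot, since fresh snapshots see $M$. It therefore targets the same global midpoint of the original positions, and both robots meet there.
\item Otherwise $s$ sees $M$ and waits. After $r$ arrives, the distance is $D/2$. A fresh $r$-snapshot of $C_j$ then gives, by Lemma~\ref{lem:periodic-distance-class-shift}, offset $\kappa_r^{(3)}(D)-1-j=2$. So $r$ writes $N$ and stays, becoming a stationary anchor. When $s$ later sees $N$, it writes $M$ and moves to $r$. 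Here $r$ cannot move meanwhile, because it now only sees $M$ or the unchanged $C_j$ at distance $D/2$, which again gives offset $2$.
\end{itemize}

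\textbf{Main obstacle.} The delicate step is the full-\ASY\ bookkeeping of pending cycles: a robot may observe a color during the peer's move, or may act on a snapshot taken just before $M$ became visible. I would handle this by proving a single invariant: \emph{at any time, every pending or future non-null move of either robot targets either the same global midpoint as a concurrent midpoint move, or the current position of a robot that will not move again before being reached}. This would be proved by induction over the \Compute events, using three ingredients:
\begin{itemize}
\item Lemma~\ref{lem:six-color-moving-guard}, which leaves at most one stale cycle per robot;
\item the fact that, by the passive prefix, all stale phase snapshots are taken at distance $D$;
\item the fact that a robot displaying $N$ or $C$ is stationary.
\end{itemize}
Fairness then forces the anchor-approaching robot to be activated, and the invariant yields Rendezvous in finitely many steps. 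The six-color count is immediate from the color set $\{I,C_0,C_1,C_2,M,N\}$.
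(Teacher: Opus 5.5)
Your proposal is correct and follows essentially the paper's route. The passive prefix keeps the distance at $D$. Phase complementarity leaves only the $O/S$, $H/H$, $H/S$ action pairs; your split on the fixed prefix offset $\kappa_r^{(3)}(D)-\kappa_s^{(3)}(D)$ is a repackaging of the paper's compatibility lemma. The moving-guard lemma bounds stale cycles, and the distance-class shift turns the midpoint mover into an $N$-anchor.

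Two points are left implicit that the paper states explicitly. First, fairness forces the first $M$ to appear. Second, in the $H/H$ sub-case the first robot to reach the midpoint stays there, because it either observes $M$ or observes the still-visible phase color at distance $D/2$ with offset $2$.
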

\begin{proof}[Proof sketch]
Consider a fair execution starting from \((I,I)\) at positive
physical distance \(D\).

Before the first occurrence of \(M\), every completed action is
null: both the initialization rule and the offset-\(2\) phase
rule prescribe \(\mathsf{Stay}\).
Hence the distance remains \(D\), and every phase color
generated before the first \(M\) is consistent with this unchanged distance.

Fairness guarantees that \(M\) eventually appears.
After a phase color is published, offset \(0\) or \(1\)
produces \(M\), whereas offset \(2\) produces \(N\), after
which the peer eventually observes \(N\) and enters \(M\).
If both robots publish phase colors, phase complementarity
ensures that at least one has offset \(0\) or \(1\).

Let \(r\) be the first robot that enters \(M\) and computes a
non-null move.
By Lemma~\ref{lem:six-color-moving-guard}, until this move is
completed, every cycle of the peer \(s\) whose \Look occurs
after \(M\) becomes visible generates a null action.
At most one cycle of \(s\), whose \Look occurred before the
first \(M\), may remain incomplete.

Treating the absence of such a cycle as a null action, the
possible geometric action pairs are, up to symmetry,
$
    O/S,\: H/H,\: H/S.
$
Indeed, if the first \(M\)-move is triggered by observing
\(N\), it forms an \(O/S\) pair.
Otherwise, it is triggered by a phase color consistent with the unchanged distance.
The peer's possible pre-\(M\) cycle is either a null
initialization action, yielding \(H/S\) or \(O/S\), or another
phase-based action.
In the latter case, phase complementarity and the mapping
$
    0\mapsto H,\: 1\mapsto O,\: 2\mapsto S
$
yield \(H/H\), \(O/S\), or \(S/O\).

In the \(O/S\) case, the Rigid move reaches the stationary
robot.

In the \(H/H\) case, both actions have the same midpoint as
their destination.
After the first robot reaches it, every subsequent activation
of that robot is null: it either observes \(M\), or observes
the still-visible phase color at distance \(D/2\), obtains
offset \(2\), and enters \(N\).
It therefore remains at the midpoint until the other Rigid
midpoint move arrives.

In the \(H/S\) case, let \(C_j\) be the phase color still
displayed by the stationary peer.
After the midpoint move, the mover's distance class decreases
from \(j\) to \(j-1\), so its offset from \(C_j\) is \(2\).
It therefore enters \(N\) and stays.
The peer eventually observes \(N\), enters \(M\), and moves to
the stationary \(N\)-robot.
Lemma~\ref{lem:six-color-moving-guard} prevents fresh non-null
interference during this move, and rigidity completes
Rendezvous.

Thus every fair execution from \((I,I)\) reaches Rendezvous in
finite time, after which the zero-distance rule keeps the
robots together.
A complete event-level analysis, including all possible stale
cycles, is given in
Appendix~\ref{app:NS-FCOM-6-color-correctness}.
\end{proof}

\section{Concluding Remarks}
\label{sec:concluding-remarks}

Together with known results, our bounds complete the VD
classification in Table~\ref{tab:VD}.
Under VD+FUD, periodic distance classes turn a fixed local
metric into cyclic memory, yielding two-, three-, four-, and
six-color algorithms.
These gains require neither agreement on unit distance nor
stable orientations: the key geometric resource is the
consistency of each robot's own unit over time.
For full \ASY, the six-color construction combines this
metric phase mechanism with movement guards and passive
initialization to handle stale observations.

Open questions include closing the \(3\)--\(6\) gap for
non-quasi-self-stabilizing \(\FC\) Rendezvous under full
Rigid \ASY, resolving the remaining stabilization and
\(\FS\) cases, and extending the Rigid constructions to
Non-Rigid movement and larger robot populations.

\section*{Acknowledgment}
This work was supported by JSPS KAKENHI Grant Numbers
JP20KK0232, JP25K15057, 
JP25K03078, 
JP25K03079, 
JP26K02865, 
and by JST FOREST Program JPMJFR226U. 

\newpage
\bibliographystyle{alpha} 
\bibliography{referencesorg}

\clearpage
\appendix
\raggedbottom
%
%
\noindent
{\Large
{\bf Appendix}}
\section{Unified One-Sided Visibility Notation}\label{ap:0}
For each robot $x\in\{r,s\}$, let $\bar x$ denote the other robot.
In the one-sided visibility model, the observed color of robot $x$ at time $t$
is denoted by $\obs_x(t)$ and is defined as
\[
\obs_x(t)=
\begin{cases}
c_x(t) & \text{in }\FS,\\
c_{\bar x}(t) & \text{in }\FC.
\end{cases}
\]
Thus, the same algorithmic graph can be interpreted uniformly
with respect to the observed color.

\section{Color-Independent Impossibility under VD}\label{ap:thm:onesided-impossible}
\setcounterref{theoremduplicate}{thm:onesided-impossible}
\addtocounter{theoremduplicate}{-1}

\begin{theoremduplicate}
[Reprint of Theorem~\ref{thm:onesided-impossible} on 
page~\pageref{thm:onesided-impossible}]
\label{duplicate:thm:onesided-impossible}
Under VD, there exists no deterministic non-\QS\ Rendezvous algorithm  
regardless of the number of colors
(i) in the $\FS$ model under \SSY, or
(ii) in the $\FC$ model under \ASY.
\end{theoremduplicate}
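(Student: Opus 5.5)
Fix an arbitrary deterministic algorithm \(\mathcal A\) and an arbitrary designated color \(c\). Starting from a same-color configuration \((c,c)\) at positive distance, I will build a fair infinite execution in which the robots never coincide. By Lemma~\ref{lem:vd-normalization}, as long as the robots are distinct the adversary can normalize every \Look so that the peer appears at \((0,1)\). Then the rule applied by a robot depends only on its observed color. Let \(f(a)=(u(a),\lambda(a))\) denote the light update and normalized destination parameter triggered by observing \(a\). This reduces the problem to an \ML-type map on colors. Destinations off the segment are handled by taking \(\lambda(a)\) as a point in the local frame; the adversary may also choose reflection, so I track only the distance along the induced direction. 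I would first treat the collinear component. If the destination has a nonzero orthogonal part, distinctness is even easier to preserve, so I will argue that case separately by the same template.

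\textbf{Case (i): \(\FS\) under \SSY.} Both robots have color \(a\) and make the same observation, so both compute the same update \(u(a)\) and the same \(\lambda=\lambda(a)\).

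If \(\lambda\neq 1/2\), I activate both robots simultaneously. By symmetry of the normalized views, they move to points \((1-\lambda)p_r+\lambda p_s\) and \((1-\lambda)p_s+\lambda p_r\). These points are distinct exactly when \(\lambda\neq1/2\). The new configuration is again same-colored, namely \((u(a),u(a))\) with Keep interpreted appropriately.

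If \(\lambda=1/2\), I activate \(r\) alone. It moves to the midpoint, stays distinct from \(s\), and changes its color to \(u(a)\). Next I activate \(s\) alone. In \(\FS\), \(s\) sees only its own color \(a\), so it also takes the midpoint action of its own current view and gets color \(u(a)\). The result is distinct positions, at distance \(D/4\), with colors \((u(a),u(a))\).

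Iterating this step yields an infinite execution. Every robot is activated infinitely often, since each step activates both robots, so the execution is fair, and Rendezvous never occurs.

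\textbf{Case (ii): \(\FC\) under \ASY.} Here the observation is the peer's color, so in \((a,a)\) both robots again see \(a\).

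For \(\lambda\neq1/2\), I use a simultaneous \Look–\Compute–\Move for both robots, exactly as in case (i).

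For \(\lambda=1/2\), sequential activation fails, because \(s\) would see the updated color of \(r\). Instead, I let both robots perform \Look at the same time, so both obtain stale snapshots with color \(a\). Then \(r\) computes and moves to the midpoint \(m\). The key step is that \(s\) moves only afterwards, toward its stale destination. In the stale execution, the destination of \(s\) is the original midpoint \(m\), so the robots would meet. I avoid this by exploiting VD: the adversary can give \(s\) a snapshot taken at a different time than the snapshot of \(r\). Concretely, \(r\) looks, computes and moves to \(m\); \(s\) has already looked at time \(0\) but delays its move. To get a mismatch, I instead let \(s\) look after \(r\)'s \Compute but observe\dots{} Since the same color must be observed, I plan to schedule \(s\)'s \Look after \(r\) has moved but before \(r\)'s color update becomes visible. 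That is impossible, because updates commit before \Move.

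The intended construction is therefore the one from~\cite{FSVY}. The robot \(r\) looks at time \(0\), while \(s\) looks in the middle of \(r\)'s move, still seeing the color \(a\) if \(r\) computes late. More precisely, \(r\) performs its \Look, and its \Compute and \Move are delayed. Then \(s\) performs \Look, \Compute and a full \Move to the midpoint \(m\). Afterwards \(r\) computes, using its stale snapshot, and moves to the old midpoint of its stale view. I will verify which interleaving, possibly with \(s\) observing \(r\) mid-move, gives distinct final positions with both colors equal to \(u(a)\). A snapshot taken mid-motion yields a midpoint different from \(m\), and VD normalization hides the changed distance. Applying \(u(a)\) on both sides keeps the colors equal, and the construction repeats.

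\textbf{Main obstacle.} The main obstacle is this \(\FC\) midpoint case: I must choose the exact asynchronous interleaving so that both robots observe color \(a\) while their geometric snapshots differ. The difficulty also covers the ``Keep'' updates and the non-collinear destinations, and both must be checked.
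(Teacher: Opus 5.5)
Part (i) is the paper's own argument: simultaneous activation when the common normalized destination is not the midpoint, and two sequential activations when it is. Part (ii), however, has a genuine gap exactly where you flag it. The midpoint case is the whole content of the \FC/\ASY\ claim, and the interleaving you commit to does not work. Suppose \(r\) performs only its Look, \(s\) then completes one full cycle to \(m=(p_r+p_s)/2\), and afterwards \(r\) finishes its stale cycle. Then \(r\)'s destination is also \(m\), and the robots meet. Your fallback fails too. You let \(s\) look while \(r\) is in mid-move and still displays \(a\), but this is impossible unless \(u(a)=a\), because \(r\)'s light update is committed at the end of Compute, before its Move begins. For \(u(a)\neq a\), \(s\) would observe the new color and apply a different rule, so the construction no longer produces a same-color configuration.

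The missing step is to let \(s\) complete \emph{two} full LCM cycles while \(r\)'s cycle is suspended after its Look. Since \(r\) has not yet computed, \(s\) observes \(a\) in both cycles. It therefore moves first to \(m\) and then to the midpoint of \(p_r\) and \(m\), namely \((3p_r+p_s)/4\). Only then does \(r\) compute from its stale snapshot and move to \(m\), so the robots end at distance \(D/4\).

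For the colors, \(s\) has applied the update \(U_a\) induced by observing \(a\) twice, while \(r\) has applied it once. You therefore also need \(U_a\) to be idempotent. This holds because \(U_a\) is a componentwise partial assignment, with Keep as the empty assignment. Hence both robots display \(U_a(a)\). The fragment activates both robots, so repeating it yields a fair infinite execution without Rendezvous, as in the paper.

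Finally, for off-segment destinations you must fix the frames explicitly rather than asserting that distinctness is easier. For instance, give \(s\) the frame of \(r\) composed with a half-turn. The two destinations are then point-symmetric about the midpoint and coincide only if both equal the midpoint. A careless choice, such as a reflection composed with a half-turn, would make every destination on the perpendicular bisector coincide.
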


\begin{proof}
Assume, for contradiction, that such an algorithm $\mathcal A$ exists.
Consider an initial non-gathered configuration
in which both robots have the same color.

By Lemma~\ref{lem:vd-normalization},
each robot always sees the other at $(0,1)$
in its own local coordinate system.
Hence, in both models, the two robots compute
the same light update and the same destination.

We show that from any non-gathered same-color configuration,
the adversary can construct a finite legal execution fragment
that ends in another non-gathered same-color configuration.
By repeating this fragment forever,
the adversary obtains an infinite fair execution
with no Rendezvous, yielding a contradiction.

\medskip
\noindent
\textbf{Case 1: the computed destination is not the midpoint.}

The adversary lets both robots perform one cycle
based on identical snapshots.

If the destination lies on the segment connecting the robots
but is not the midpoint,
then after the moves the robots remain distinct.
If the destination lies off the segment,
the adversary chooses the local coordinate systems
so that the two computed destinations become mirror images,
and hence are distinct as well.

Thus Rendezvous does not occur.
Since both robots executed the same rule,
they also obtain the same resulting color.
Therefore the resulting configuration
is again a non-gathered same-color configuration.

\medskip
\noindent
\textbf{Case 2: the computed destination is the midpoint.}

We distinguish the two models.

\smallskip
\noindent
\emph{Case 2a: $\FS$ under \SSY.}
The adversary activates the robots sequentially in two rounds.
First, one robot moves to the midpoint and changes its color.
Then the other robot, still having the original color,
observes the same situation and executes the same rule.
After these two activations,
the robots remain distinct but have the same new color.
Hence the resulting configuration is again
a non-gathered same-color configuration.

\smallskip
\noindent
\emph{Case 2b: $\FC$ under \ASY.}
The adversary executes the following asynchronous schedule:
\begin{enumerate}
\item activate robot $r$ for \Look only,
\item activate robot $s$ for two complete LCM cycles,
\item complete the pending \Compute--\Move of $r$.
\end{enumerate}
Since $r$ computes using a stale snapshot,
while $s$ performs two cycles observing the same color each time,
the robots end in distinct positions but with the same new color.
Hence the resulting configuration is again
a non-gathered same-color configuration.
Robot \(s\) applies the same componentwise update in both
cycles; since this update is idempotent, its second application
does not change the resulting color.
Robot \(r\) later applies the same update to the original
color, and hence both robots obtain the same resulting color.

\medskip
In all cases, the adversary can transform
any non-gathered same-color configuration
into another non-gathered same-color configuration.
Repeating this construction indefinitely
produces an infinite fair execution
in which Rendezvous never occurs,
contradicting the correctness of $\mathcal A$.

Therefore no deterministic non-quasi-self-stabilizing Rendezvous algorithm exists.
\end{proof}


\section{Two-Color Impossibility under \RSY}
\label{ap:thm:rsynch-onesided}

For each observed color \(c\in\{A,B\}\), let
\(
    U_c:\{A,B\}\longrightarrow\{A,B\}
\)
denote the map induced by the possibly partial light update
prescribed upon observing \(c\), and define
\(
    \tau_c(a)=U_c(a),
    \:
    \tau(c)=\tau_c(c).
\)
Thus, \(\tau_c(a)\) is the resulting color when a robot
currently displaying \(a\) applies the update prescribed for
observed color \(c\).

In the \(\FS\) model, the current and observed colors coincide,
so only the diagonal values \(\tau_c(c)\) can arise.
In the \(\FC\) model, off-diagonal values may also arise.
Since each \(U_c\) is induced by a componentwise partial
assignment, it is idempotent.
Consequently,

\begin{lemma}
\label{lem:two-color-update}
For each \(c\in\{A,B\}\), let \(\bar c\) denote the other
color. Then
\[
    \tau(c)=\bar c
    \quad\Longrightarrow\quad
    U_c=\mathrm{const}_{\bar c},
\]
whereas
\[
    \tau(c)=c
    \quad\Longrightarrow\quad
    U_c\in\{\mathrm{id},\mathrm{const}_c\}.
\]
Here, \(\mathrm{const}_c\) denotes the constant map with value
\(c\), and \(\mathrm{id}\) denotes the identity map on
\(\{A,B\}\).
\end{lemma}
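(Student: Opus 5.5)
The plan is to classify all self-maps of the two-element set \(\{A,B\}\) that can arise as a map \(U_c\), and then read off the two implications. The key fact, stated just before the lemma, is that each \(U_c\) is induced by a componentwise partial assignment and is therefore idempotent: \(U_c\circ U_c=U_c\). Step one is to verify this idempotence in a sentence. A partial assignment writes fixed values into a fixed subset of light components and leaves the others untouched, so applying it twice writes the same values again. In the single-component setting the options are simply \(\mathsf{Keep}\) (giving \(\mathrm{id}\)) or writing a fixed color \(u\) (giving \(\mathrm{const}_u\)). For a general \(\FC\) light the two colors \(A,B\) are particular joint values, so idempotence is all we use.

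Step two is to enumerate the four self-maps of \(\{A,B\}\): \(\mathrm{id}\), \(\mathrm{const}_A\), \(\mathrm{const}_B\), and the transposition \(\sigma\) swapping \(A\) and \(B\). Since \(\sigma\circ\sigma=\mathrm{id}\neq\sigma\), idempotence excludes \(\sigma\), so \(U_c\in\{\mathrm{id},\mathrm{const}_A,\mathrm{const}_B\}\).

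Step three is to split on the value \(\tau(c)=U_c(c)\). If \(\tau(c)=\bar c\), then \(U_c\) moves \(c\), which rules out \(\mathrm{id}\) and \(\mathrm{const}_c\); the only remaining option is \(\mathrm{const}_{\bar c}\). If \(\tau(c)=c\), then \(U_c\) fixes \(c\), which rules out \(\mathrm{const}_{\bar c}\) and leaves \(\mathrm{id}\) or \(\mathrm{const}_c\).

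The only real obstacle is justifying that \(U_c\) maps \(\{A,B\}\) into itself and is idempotent when lights have several components. If a partial update could send a joint value outside \(\{A,B\}\), the map would not even be a self-map of \(\{A,B\}\). Here I would rely on the standing hypothesis that the algorithm uses exactly two colors, so \(\Gamma=\{A,B\}\) is the full set of joint light values and every reachable update stays inside it. Idempotence then follows as in step one, since writing prescribed values to prescribed components is a projection.
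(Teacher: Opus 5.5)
Your proof is correct and follows the paper's argument: both rest on the idempotence of \(U_c\), and this is exactly what rules out the transposition of \(\{A,B\}\). The paper applies idempotence directly, computing \(U_c(\bar c)=U_c(U_c(c))=U_c(c)=\bar c\) in the first case and simply enumerating \(U_c(\bar c)\) in the second; you instead first classify the idempotent self-maps of \(\{A,B\}\) and then read off both implications, which is only a reorganization of the same argument.
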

\begin{proof}
If \(\tau(c)=\bar c\), then \(U_c(c)=\bar c\).
By idempotence,
\(
    U_c(\bar c)
    =
    U_c(U_c(c))
    =
    U_c(c)
    =
    \bar c,
\)
and hence \(U_c=\mathrm{const}_{\bar c}\).

If \(\tau(c)=c\), then \(U_c(c)=c\).
Since \(U_c(\bar c)\) is either \(c\) or \(\bar c\),
\(U_c\) is respectively \(\mathrm{const}_c\) or
\(\mathrm{id}\).
\end{proof}

The following is the core lemma to prove the two-color impossibility.

\begin{lemma}
\label{lem:two-color-core}
Let \(\mathcal A\) be a deterministic two-color Rendezvous
algorithm under VD.

Suppose that, for some color \(c^*\in\{A,B\}\),
\(\mathcal A\) solves Rendezvous under both pure \FSY\ and
pure \RR\ from every non-gathered configuration whose color
configuration is \((c^*,c^*)\).


Then one color must be a midpoint color and the other must
be a move-to-other color.
Denote them by \(A\) and \(B\), respectively.
With this naming, the following statements hold.
\begin{enumerate}
\item
If \(\mathcal A\) solves Rendezvous from \((A,A)\) under pure
\RR, then
\(
    \tau(A)=B
    \:\text{and}\:
    U_A=\mathrm{const}_B.
\)
\item
If \(\mathcal A\) solves Rendezvous from \((B,B)\) under pure
\FSY, 
then \(\tau(B)=A\) and \(U_B=\mathrm{const}_A\).
\item
If \(U_A=\mathrm{const}_B\) and
\(U_B=\mathrm{const}_A\), then from every non-gathered
mixed-color configuration there exists a pure \FSY\
execution, with every move completed, that never achieves
Rendezvous.
\end{enumerate}
\end{lemma}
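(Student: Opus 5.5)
Throughout, I use Lemma~\ref{lem:vd-normalization}: while the robots are at distinct positions, the adversary can make every snapshot normalized. Each robot's action then depends only on its observed color. It is given by a destination parameter \(\lambda_c\) (or an off-line point, which the adversary can mirror as in the proof of Theorem~\ref{thm:onesided-impossible}) and the update \(U_c\).

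\textbf{Preliminary claim: one midpoint color and one move-to-other color.} Consider pure \FSY\ from \((c^*,c^*)\) at distance \(D>0\). Both robots observe the same color and compute symmetric destinations. The resulting positions coincide only if the common destination is the midpoint, i.e., \(\lambda=1/2\). Otherwise the distance is multiplied by \(|1-2\lambda|\neq 0\), and the colors stay equal, since both robots apply the same \(U\) to the same color (in \FC\ both observe \(c^*\) and display \(c^*\)). Iterating, Rendezvous in finite time forces some color reachable along this same-color \FSY\ trajectory to be a midpoint color.

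Now consider pure \RR\ from \((c^*,c^*)\). A single activation produces coincidence only if \(\lambda=1\). Any other action leaves a positive distance; for \(\lambda=0\) the distance is even unchanged. Hence the execution can only terminate via an activation whose observed color is a move-to-other color. With two colors, these two kinds must be distinct colors; call them \(A\) (midpoint) and \(B\) (move-to-other).

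\textbf{Item 1.} Start \RR\ from \((A,A)\). The first robot moves to the midpoint and updates its color to \(\tau(A)\). If \(\tau(A)=A\), then in \FS\ each robot keeps observing \(A\), and in \FC\ each observes the peer's \(A\). Every activation is then a halving move and the colors never change, so Rendezvous never occurs. Hence \(\tau(A)=B\), and Lemma~\ref{lem:two-color-update} gives \(U_A=\mathrm{const}_B\).

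\textbf{Item 2.} Start \FSY\ from \((B,B)\). Both robots move to the other's position, so they swap and remain distinct. If \(\tau(B)=B\), the configuration stays \((B,B)\) forever, with both robots swapping each round. Hence \(\tau(B)=A\), and Lemma~\ref{lem:two-color-update} gives \(U_B=\mathrm{const}_A\).

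\textbf{Item 3.} Start \FSY\ from a mixed configuration \((A,B)\), which is the main case to check carefully.
\begin{itemize}
\item In \FS, the \(A\)-robot moves to the midpoint and becomes \(B\), while the \(B\)-robot moves to the \(A\)-robot's old position and becomes \(A\). The new distance is \(D/2>0\) and the configuration is again mixed.
\item In \FC, the \(A\)-robot observes \(B\) and moves to the other's position, becoming \(A\) since \(U_B=\mathrm{const}_A\). The \(B\)-robot observes \(A\) and moves to the midpoint, becoming \(B\). The new distance is again \(D/2\) and the configuration stays mixed.
\end{itemize}
Both updates are constant maps, so the color outcomes do not depend on the displayed colors. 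By induction, the distance after \(k\) rounds is \(D/2^k>0\), so Rendezvous is never achieved. This is a legal pure \FSY\ execution with every move completed.
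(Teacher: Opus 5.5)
Your proposal is correct and follows essentially the same route as the paper. VD normalization makes every action depend only on the observed color, same-color \FSY\ executions force a midpoint color and \RR\ forces a move-to-other color, items~1 and~2 come from the endless halving (\RR) and endless swapping (\FSY) executions combined with Lemma~\ref{lem:two-color-update}, and item~3 is the same round-by-round check that a mixed configuration stays mixed while the distance halves in both \FS\ and \FC; your explicit treatment of off-line destinations via mirroring and of the \FC\ reading in item~1 only spells out steps the paper leaves implicit.
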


\begin{proof}
Starting from a same-color configuration under pure \FSY, the
robots have the same current color, observe the same color, and
apply the same light update in every round.
Hence the configuration remains same-colored.
In the round in which Rendezvous is achieved, their common
observed color must prescribe a midpoint move: for any other
canonical destination, the adversary can choose their local
coordinate systems so that the two physical destinations are
distinct.
Thus at least one color must be a midpoint color.

Under pure \RR, only one robot moves in each round.
Hence finite-time Rendezvous requires some observed color for
which the activated robot moves exactly to the current position
of the stationary peer.
Thus at least one color must be a move-to-other color.
Since these two destinations are distinct at positive distance
and only two colors exist, denote the midpoint color by \(A\)
and the move-to-other color by \(B\).

For statement~1, suppose that \(\tau(A)=A\).
Starting from \((A,A)\) under pure \RR, every activated robot
remains in color \(A\) and moves to the midpoint.
Allowing every move to be completed, the distance is repeatedly
halved but never becomes zero in finite time.
Therefore \(\tau(A)=B\), and idempotence gives
\(U_A=\mathrm{const}_B\).

For statement~2, suppose that \(\tau(B)=B\).
Starting from \((B,B)\) under pure \FSY, both robots remain in
color \(B\) and repeatedly exchange positions.
Hence Rendezvous is never achieved.
Therefore \(\tau(B)=A\), and idempotence gives
\(U_B=\mathrm{const}_A\).

Finally, assume the hypotheses of statement~3 and consider a
pure \FSY\ execution starting from \((A,B)\), allowing every
move to be completed.
In the \(\FS\) model, the color configuration alternates
between \((A,B)\) and \((B,A)\).
In the \(\FC\) model, it remains \((A,B)\).
In either model, exactly one robot moves to the midpoint and
the other moves to its peer's current position in every round.
Thus the physical distance is halved in every round and remains
positive after every finite number of rounds.
Hence Rendezvous is never achieved.
\end{proof}

\setcounterref{theoremduplicate}{thm:rsynch-onesided}
\addtocounter{theoremduplicate}{-1}

\begin{theoremduplicate}
[Reprint of Theorem~\ref{thm:rsynch-onesided} on
page~\pageref{thm:rsynch-onesided}]
\label{duplicate:thm:rsynch-onesided}
Under VD and \RSY, there exists no deterministic
two-color self-stabilizing Rendezvous algorithm in either
the \(\FS\) or the \(\FC\) model.
\end{theoremduplicate}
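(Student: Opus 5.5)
We argue by contradiction and reduce everything to Lemma~\ref{lem:two-color-core}. Suppose \(\mathcal A\) is a deterministic two-color self-stabilizing Rendezvous algorithm for \(\FS\) or \(\FC\) under VD and \RSY. Since \RSY\ contains every pure \FSY\ execution and every pure \RR\ execution (the latter as an empty \FSY\ prefix followed by an infinite \RR\ suffix), \(\mathcal A\) solves Rendezvous under both pure schedulers from every non-gathered configuration, whatever the colors. By Lemma~\ref{lem:vd-normalization}, as long as the robots are distinct, each robot's destination and light update depend only on its observed color. Hence the notions of midpoint color, move-to-other color, and the maps \(U_c\), \(\tau\) are well defined.

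First, we apply the hypothesis of Lemma~\ref{lem:two-color-core} with, say, \(c^*=A\). This is legitimate because self-stabilization gives correctness from both same-color configurations under both pure schedulers. The lemma then forces one color to be a midpoint color and the other a move-to-other color; we name them \(A\) and \(B\). Next, self-stabilization gives correctness from \((A,A)\) under pure \RR, so statement~1 yields \(U_A=\mathrm{const}_B\). It also gives correctness from \((B,B)\) under pure \FSY, so statement~2 yields \(U_B=\mathrm{const}_A\). Note that the renaming of colors is harmless, since self-stabilization covers all initial color pairs.

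Finally, we invoke statement~3. From any non-gathered mixed-color configuration, say \((A,B)\) at distance \(D>0\), there is a pure \FSY\ execution in which, every round, one robot moves to the midpoint and the other to its peer's position. The distance is therefore halved each round and never reaches zero in finite time. This execution is a legal \RSY\ execution, namely the pure \FSY\ case. It starts from a configuration that a self-stabilizing algorithm must handle, and it never achieves Rendezvous, which is a contradiction. The argument applies verbatim to both \(\FS\) and \(\FC\), since Lemma~\ref{lem:two-color-core} covers both.

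The main obstacle, if anything, is checking that Lemma~\ref{lem:two-color-core} really applies under VD for arbitrary (non-\ML) algorithms. In particular, the midpoint-color argument needs the adversary to separate non-midpoint destinations in same-color \FSY\ rounds, and the idempotence of partial updates needs to hold in \(\FC\). Both are handled inside the lemma via Lemmas~\ref{lem:vd-normalization} and~\ref{lem:two-color-update}, so the theorem itself is essentially a short assembly of these pieces.
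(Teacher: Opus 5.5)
Your proposal is correct and follows essentially the same route as the paper. Self-stabilization gives correctness from \((A,A)\) and \((B,B)\) under both pure \FSY\ and pure \RR; statements~1 and~2 of Lemma~\ref{lem:two-color-core} then force \(U_A=\mathrm{const}_B\) and \(U_B=\mathrm{const}_A\), and statement~3 supplies a pure \FSY\ execution from a mixed-color configuration that never reaches Rendezvous, which is a legal \RSY\ execution and hence a contradiction.
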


\begin{proof}
Since \RSY\ contains both pure \FSY\ and pure \RR\
executions, a self-stabilizing algorithm must solve Rendezvous
from both \((A,A)\) and \((B,B)\) under both component
schedulers.

By Lemma~\ref{lem:two-color-core}, the colors can be named so
that \(A\) is a midpoint color and \(B\) is a move-to-other
color, and
\(
    U_A=\mathrm{const}_B,
    \qquad
    U_B=\mathrm{const}_A.
\)
Statement~3 of the same lemma then gives a pure \FSY\
execution from a mixed-color configuration in which the
distance remains positive at every finite time.
This is a legal \RSY\ execution, contradicting
self-stabilization.
\end{proof}


\section{Two-Color Impossibility under R-\RSY}
\label{ap:thm:rrsynch-fsfc}

We use the notation \(U_c\), \(\tau_c(a)\), and \(\tau(c)\)
introduced in Appendix~\ref{ap:thm:rsynch-onesided}.

\setcounterref{theoremduplicate}{thm:rrsynch-fsfc}
\addtocounter{theoremduplicate}{-1}

\begin{theoremduplicate}
[Reprint of Theorem~\ref{thm:rrsynch-fsfc} on
page~\pageref{thm:rrsynch-fsfc}]
\label{duplicate:thm:rrsynch-fsfc}
Under VD and R-\RSY,
there exists no deterministic two-color
quasi-self-stabilizing Rendezvous algorithm
in either the \(\FS\) or the \(\FC\) model.
Moreover, in the Non-Rigid model,
there exists no deterministic two-color
non-quasi-self-stabilizing Rendezvous algorithm
in either the \(\FS\) or the \(\FC\) model.
\end{theoremduplicate}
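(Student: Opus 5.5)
The plan is to reduce both parts of Theorem~\ref{thm:rrsynch-fsfc} to Lemma~\ref{lem:two-color-core}: use a short \RR\ prefix to reach a non-gathered mixed-color configuration, then let the infinite \FSY\ suffix stall forever. R-\RSY\ contains pure \FSY, pure \RR, and every ``\RR-prefix then \FSY'' execution. So a two-color algorithm that is correct from some \((c^*,c^*)\) meets the hypotheses of Lemma~\ref{lem:two-color-core}. The proof of that lemma shows that pure \FSY\ from \((c^*,c^*)\) forces a midpoint color and pure \RR\ forces a move-to-other color; both arguments use only \(c^*\). Hence we may name the midpoint color \(A\) and the move-to-other color \(B\).

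\emph{Quasi-self-stabilizing part.} The algorithm must succeed from both \((A,A)\) and \((B,B)\) under both component schedulers. Statements~1 and~2 of Lemma~\ref{lem:two-color-core} then give \(U_A=\mathrm{const}_B\) and \(U_B=\mathrm{const}_A\). Start from \((A,A)\) at distance \(D>0\) and use an \RR\ prefix of one activation of \(r\).
\begin{itemize}
\item Robot \(r\) observes \(A\), writes \(B\), and moves rigidly to the midpoint.
\item In \FS\ this gives \((B,A)\) at distance \(D/2\).
\item In \FC\ the same holds, since \(U_A\) is constant.
\end{itemize}
Now switch to the \FSY\ tail. Statement~3 gives an execution that never reaches Rendezvous, even under Rigid movement. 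This combined execution is a legal R-\RSY\ execution, which is a contradiction.

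\emph{Non-Rigid non-quasi-self-stabilizing part.} Only the designated color \(c^*\) is available, so I split into two cases. Throughout, the initial distance \(D\) is chosen much larger than \(\delta\).

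Case \(c^*=B\). Statement~2 gives \(U_B=\mathrm{const}_A\). Use one \RR\ activation of \(r\): it moves toward \(s\), the adversary stops it after exactly \(\delta\), and it writes \(A\). This yields the non-gathered mixed configuration \((A,B)\) in both models.
\begin{itemize}
\item If \(U_A=\mathrm{const}_B\), the \FSY\ tail of statement~3 finishes the argument.
\item Otherwise \(\tau(A)=A\). Then take a pure-\RR\ continuation instead. First activate \(s\) (color \(B\)); it is stopped short and writes \(A\). This gives a non-gathered \((A,A)\) whose \(A\)-update keeps \(A\) in \FS. Afterwards every activation halves the distance forever, as in statement~1.
\end{itemize}
In \FC, I still need to check that \(U_A\in\{\mathrm{id},\mathrm{const}_A\}\) (Lemma~\ref{lem:two-color-update}) keeps the configuration \((A,A)\).

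Case \(c^*=A\). Statement~1 gives \(U_A=\mathrm{const}_B\), and one \RR\ midpoint step yields the mixed configuration \((B,A)\) as before.
\begin{itemize}
\item If \(U_B=\mathrm{const}_A\), apply statement~3.
\item Otherwise \(\tau(B)=B\). Run \FSY\ from the mixed configuration. The \(A\)-robot halves toward its peer and becomes \(B\); the \(B\)-robot jumps to the peer's old position and keeps \(B\). This produces \((B,B)\) at positive distance. From there the robots swap positions forever, exactly as in the proof of statement~2.
\end{itemize}

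I expect the main obstacle to be this last layer of case analysis when one update map is not the forced constant. Lemma~\ref{lem:two-color-update} allows such a map to be \(\mathrm{id}\) or \(\mathrm{const}_c\). In \FC\ the robot's own color and its observed color differ, so every branch must be checked in both models. I would first try to show that in each such branch the execution collapses into one of the two ``frozen'' patterns already used in Lemma~\ref{lem:two-color-core}: perpetual \(A\)-halving under \RR, or perpetual \(B\)-swapping under \FSY. Finally, I would confirm that the one- or two-step prefixes I use never create a transient coincidence, which holds because \(D\gg\delta\).
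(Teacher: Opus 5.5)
Your overall strategy is the same as the paper's. You name the colors via Lemma~\ref{lem:two-color-core}, create a non-gathered mixed configuration with a one-step \RR\ prefix, and then use either an \FSY\ tail or a pure-\RR\ continuation. The quasi-self-stabilizing part and all \FS\ branches are correct.

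The gap is in the \FC\ branches of the Non-Rigid part, where you carried over \FS\ behavior. In \FC, the robot currently displaying \(B\) observes its peer's \(A\). It therefore executes the \emph{midpoint} rule and applies \(U_A\) to its own color \(B\). The robot displaying \(A\) observes \(B\), moves toward the other robot, and applies \(U_B\) to \(A\). Now take \(c^*=B\) and \(U_A=\mathrm{id}\), which your split (``otherwise \(\tau(A)=A\)'') sends to the pure-\RR\ continuation. Activating \(s\) then does not yield \((A,A)\): \(s\) moves toward the midpoint and keeps \(B\), since \(U_A(B)=B\). The configuration stays mixed. The \(A\)-robot keeps \(A\) (because \(U_B=\mathrm{const}_A\)) and moves toward its peer at every activation. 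Under Non-Rigid \RR\ each such move either reaches the peer or shortens the distance by at least \(\delta\), and the \(B\)-robot's midpoint moves never lengthen it. So the distance eventually drops below \(\delta\) and the \(A\)-robot reaches its peer. Your continuation therefore attains Rendezvous. The check you deferred (whether \((A,A)\) is preserved) is not the real issue, since \((A,A)\) is never reached.

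The missing idea is to branch in \FC\ on \(U_A(B)\) (the paper's \(\tau_A(B)\)) rather than on \(\tau(A)\).
\begin{itemize}
\item If \(U_A(B)=B\), switch to an \FSY\ tail. The \(A\)-robot keeps \(A\) and jumps to its peer, while the \(B\)-robot keeps \(B\) and moves to the midpoint. The mixed configuration reproduces itself and the distance is halved in every round.
\item If \(U_A(B)=A\), idempotence gives \(U_A(A)=A\), and your pure-\RR\ continuation works. Here \(s\) is stopped on its way to a midpoint that is farther than \(\delta\), since \(D\gg\delta\).
\end{itemize}

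A milder slip of the same kind occurs for \(c^*=A\) in \FC. If \(U_B=\mathrm{id}\), the first \FSY\ round does not produce \((B,B)\). It preserves the mixed configuration, because the \(A\)-colored robot observes \(B\), jumps, and keeps \(A\). The conclusion survives there, but only via the mixed-color halving pattern of statement~3. That pattern is missing from your list of ``frozen'' patterns, and it is exactly what both \FC\ identity branches require.
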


\begin{proof}
We again prove the statement in the unified one-sided
visibility model.

\smallskip
\noindent
\textbf{First claim: no two-color quasi-self-stabilizing
algorithm exists.}

Assume, for contradiction, that such an algorithm exists.
Since R-\RSY\ contains both pure \FSY\ and pure \RR\
executions and quasi-self-stabilization requires correctness
from every same-color configuration,
Lemma~\ref{lem:two-color-core} implies that the colors can be
named so that \(A\) is a midpoint color and \(B\) is a
move-to-other color, and
\(
    U_A=\mathrm{const}_B,
    \:
    U_B=\mathrm{const}_A.
\)

Consider an R-\RSY\ execution consisting of one \RR\
activation followed by an infinite \FSY\ suffix.
Start from \((A,A)\), and allow the first midpoint move to be
completed.
The activated robot changes its color to \(B\), so the
resulting configuration is mixed-colored and the robots remain
at distinct positions.
Statement~3 of Lemma~\ref{lem:two-color-core} shows that the
subsequent pure \FSY\ execution never achieves Rendezvous,
a contradiction.

\smallskip
\noindent
\textbf{Second claim: no two-color
non-quasi-self-stabilizing algorithm exists in the Non-Rigid
setting.}

Assume, for contradiction, that such an algorithm exists, and
let \(c^*\in\{A,B\}\) be its designated initial color.
Since it must solve Rendezvous from \((c^*,c^*)\) under both
pure \FSY\ and pure \RR, the first part of
Lemma~\ref{lem:two-color-core} implies that one color is a
midpoint color and the other is a move-to-other color.
Again denote them by \(A\) and \(B\), respectively.

We distinguish the designated initial color.

\smallskip
\noindent
\textbf{Case 1: \(c^*=A\).}

By statement~1 of Lemma~\ref{lem:two-color-core},
\(
    \tau(A)=B,
    \:
    U_A=\mathrm{const}_B.
\)
Consider one \RR\ activation from \((A,A)\), allow the
midpoint move to be completed, and then continue with an
infinite \FSY\ suffix.
The first activation produces a non-gathered mixed-color
configuration.  Allow every move in the subsequent \FSY\ suffix
to be completed.

If \(\tau(B)=A\), then \(U_B=\mathrm{const}_A\) by
idempotence, and statement~3 of
Lemma~\ref{lem:two-color-core} gives an infinite
non-Rendezvous execution.

Suppose instead that \(\tau(B)=B\).
Then \(U_B\) is either the identity map or the constant-\(B\)
map.
In the \(\FS\) model, the first \FSY\ round from the mixed
configuration produces \((B,B)\).
In the \(\FC\) model, the mixed configuration is preserved
when \(U_B=\mathrm{id}\), whereas \((B,B)\) is produced when
\(U_B=\mathrm{const}_B\).

If the configuration remains mixed, every subsequent
\FSY\ round consists of one midpoint move and one
move-to-other action, and the distance is halved forever.
If \((B,B)\) is reached, both robots retain color \(B\) and
repeatedly exchange positions in every \FSY\ round.
Thus Rendezvous is never achieved in either case.

\smallskip
\noindent
\textbf{Case 2: \(c^*=B\).}

By statement~2 of Lemma~\ref{lem:two-color-core},
\(
    \tau(B)=A,
    \:
    U_B=\mathrm{const}_A.
\)

Choose an initial physical distance \(D>4\delta\).
In the first \RR\ activation, one robot changes its color from
\(B\) to \(A\) and attempts to move to the other robot.
The Non-Rigid adversary stops this move after distance
\(\delta\).
Thus the robots remain distinct, and the color configuration
becomes mixed.

We distinguish the two visibility models.

\smallskip
\noindent
\emph{\(\FS\) interpretation.}

If \(\tau(A)=B\), then \(U_A=\mathrm{const}_B\), and
statement~3 of Lemma~\ref{lem:two-color-core} applies to an
infinite \FSY\ suffix.

Suppose instead that \(\tau(A)=A\).
Continue with pure \RR.
At the next activation, the robot still having color \(B\)
observes \(B\), changes its color to \(A\), and attempts to
move to the other robot.
The adversary again stops the move after distance \(\delta\).
The robots remain distinct and now have color configuration
\((A,A)\).

Thereafter, allow every move to be completed.
Every activated robot observes \(A\), remains in color \(A\),
and moves to the midpoint.
Thus the distance is halved at every activation but never
becomes zero in finite time.

\smallskip
\noindent
\emph{\(\FC\) interpretation.}

After the first \RR\ activation, the robot still displaying
\(B\) observes the color \(A\) displayed by the other robot.

If
\(
    \tau_A(B)=B,
\)
continue with an infinite \FSY\ suffix and allow every move
in the suffix to be completed.
In every round, the \(A\)-robot observes \(B\), remains in
color \(A\) because \(U_B=\mathrm{const}_A\), and moves to
the other robot.
The \(B\)-robot observes \(A\), remains in color \(B\), and
moves to the midpoint.
Hence the mixed-color configuration is preserved and the
distance is halved in every round.

Finally, suppose that
\(
    \tau_A(B)=A.
\)
Continue with pure \RR.
At the next activation, the \(B\)-robot observes \(A\),
changes its color to \(A\), and computes a midpoint move.
After the first interrupted move, the physical distance is
greater than \(3\delta\), so this midpoint is at distance
greater than \(\delta\).
The adversary can therefore stop the midpoint move after
distance \(\delta\), leaving the robots at distinct positions
with color configuration \((A,A)\).

By idempotence,
\(
    \tau_A(A)
    =
    U_A(A)
    =
    U_A(U_A(B))
    =
    U_A(B)
    =
    A.
\)
Hence every subsequent activated robot observes \(A\), remains
in color \(A\), and moves to the midpoint.
Allowing these moves to be completed, pure \RR\ repeatedly
halves the distance without making it zero in finite time.

In every case, we have constructed a fair legal Non-Rigid
R-\RSY\ execution starting from the designated same-color
configuration in which Rendezvous is never achieved.
This contradiction proves the second claim.
\end{proof}



\section{Scheduler Composition and Non-Rigid Distance Reduction}
\label{ap:rsynch-reduction}

Self-stabilization under both \FSY\ and \RR\ yields
self-stabilization under \RSY\ and R-\RSY.
For quasi-self-stabilization, additional compatibility
conditions are needed, as stated below.
The following lemma can be derived from the definitions of self-stabilization and quasi-self-stabilization.

\begin{lemma}\label{lem:scheduler-composition}
\begin{enumerate}
\item[(1)] A Rendezvous algorithm \(\mathcal A\) is self-stabilizing under \RSY\
if and only if it is self-stabilizing under both \FSY\ and
\RR.
The same equivalence holds for R-\RSY.
%
\item[(2-1)] If a Rendezvous algorithm ${\mathcal A}$ is \QS\ in \RR\, and it is \QS\ in \FSY\ with the property that it does not reach a configuration of differently colored robots during any execution of ${\mathcal A}$ (in \FSY),  then ${\mathcal A}$ is \QS\ in \RSY.
\item[(2-2)] If a Rendezvous algorithm ${\mathcal A}$ is \SSS\ in \RR\ (resp. \FSY), and it is \QS\ in \FSY\ (resp. \RR), then  ${\mathcal A}$ is \QS\ in \RSY\ (resp. R-\RSY).
\end{enumerate} 
\end{lemma}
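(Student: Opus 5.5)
The plan is to prove each part directly from the definitions, treating an \RSY\ (resp.\ R-\RSY) execution as the concatenation of a finite prefix of one component scheduler with an infinite suffix of the other. The key observation is that the suffix starts from whatever configuration the prefix produced. The statement is only about suffixes in the sense that the algorithm's correctness under the suffix scheduler must be applicable to that configuration.

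\textbf{Part~(1).} For the ``only if'' direction, note that \RSY\ contains pure \FSY\ executions and pure \RR\ executions. Hence self-stabilization under \RSY\ restricts to self-stabilization under each component scheduler; the same holds for R-\RSY.

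For the ``if'' direction, take any \RSY\ execution. If it is pure \FSY, correctness is immediate. Otherwise it consists of a finite \FSY\ prefix ending in some configuration $\C'$, followed by an infinite \RR\ suffix. If Rendezvous is reached inside the prefix, I argue that it is preserved, using the zero-distance behaviour: any robot at distance zero computes a null move. Otherwise $\C'$ is an arbitrary configuration (arbitrary colors, distinct positions). Self-stabilization under \RR\ applies from $\C'$ because it is required from \emph{every} configuration, so the suffix reaches Rendezvous. The argument is symmetric for R-\RSY.

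The one subtlety here is the claim that Rendezvous, once attained, persists through the remaining prefix rounds. I would handle it by noting that Rendezvous is defined as eventual permanent coincidence. An execution that coincides and later separates is then simply treated by restarting the argument at the last configuration of the prefix. So no persistence lemma is needed: all we need is that the suffix, started from $\C'$, eventually achieves permanent coincidence.

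\textbf{Part~(2-1).} Start from a same-color configuration. A pure \RR\ execution is handled by quasi-self-stabilization under \RR. In a mixed execution, the hypothesis guarantees that every configuration reached during the \FSY\ prefix is same-colored. In particular, the configuration at the switching point is same-colored, and quasi-self-stabilization under \RR\ applies to the suffix. A pure \FSY\ execution is covered directly.

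\textbf{Part~(2-2).} Consider \RSY\ with the algorithm self-stabilizing under \RR\ and quasi-self-stabilizing under \FSY. From a same-color start, a pure \FSY\ execution succeeds by the quasi-self-stabilization hypothesis. Any other execution has an \RR\ suffix starting from an arbitrary configuration, which succeeds by self-stabilization under \RR. The R-\RSY\ case is identical with the roles of the two schedulers exchanged: pure \RR\ executions are covered by quasi-self-stabilization under \RR, and every \FSY\ suffix is covered by self-stabilization under \FSY.

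\textbf{Main obstacle.} The step most likely to need care is making precise, in the Non-Rigid setting, that the prefix's final configuration is a legitimate starting configuration for the suffix scheduler. This means checking that the definitions of (quasi-)self-stabilization quantify over all configurations, including arbitrary distances and colors, and do not restrict to configurations reachable from initial ones. Once this is checked, every case reduces to a one-line application of the corresponding hypothesis.
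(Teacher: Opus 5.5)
Your proof is correct and takes the paper's route. The paper only states that the lemma follows from the definitions of self-stabilization and quasi-self-stabilization, and your argument is that derivation: every \RSY\ or R-\RSY\ execution is either a pure component execution or a finite prefix followed by an infinite suffix, and the appropriate hypothesis applies to the suffix from the configuration the prefix reaches. Your final handling of the prefix, where the suffix alone yields permanent coincidence, is the right one, so you should drop the earlier appeal to a zero-distance null move, which is unnecessary and need not hold for an arbitrary algorithm $\mathcal A$.
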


In the following, we obtain \SSS\ Rendezvous algorithms in \RSY\ and R-\RSY\, and a \QS\ Rendezvous algorithm in \RSY\ by using this lemma. 
The two-color \ML-algorithm presented here is not \SSS\ in \FSY, hence a \QS\ \ML-Rendezvous algorithm for R-\RSY\ is not obtained.


In what follows, we will consider Rendezvous algorithms that operate in the Non-Rigid setting. 
The following lemma reduces the Non-Rigid correctness analysis
to configurations whose distance is at most \(\delta\).

\begin{lemma}[Distance Decreasing Lemma]
\label{lem:DDL}
Suppose that, whenever the robots occupy distinct positions,
every activated robot computes a move toward either the
midpoint or the other robot.
Rendezvous occurs or there exists
\(t'>t\) such that \(dis(t')\leq\delta\) under either of the
following conditions:
\begin{enumerate}
\item[(a)]
under \FSY, every maximal consecutive block in which both
robots move to the other's position has length at most a fixed
constant;
\item[(b)]
under \RR.
\end{enumerate}
\end{lemma}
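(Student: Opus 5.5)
The plan is to argue by contradiction with a simple potential function: the physical distance \(dis\). Assume that Rendezvous never occurs and that \(dis(t')>\delta\) for every \(t'>t\). I will show that the distance never increases, and that at bounded intervals it drops by at least a fixed amount. Write \(D>\delta\) for the distance before a round. All movements are straight-line moves along the segment joining the robots, since each destination is either the midpoint or the peer. So the configuration after a round can be computed on a line: place one robot at \(0\) and the other at \(D\). By the Non-Rigid rule, a robot whose destination lies at distance \(e\) travels a distance in \([\min(\delta,e),\,e]\) toward it and never overshoots.

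\textbf{Case (b), \RR.} Only one robot moves per round, and its peer is stationary. If its destination is the midpoint, there are two possibilities. If \(D/2<\delta\), it reaches the midpoint and the new distance is \(D/2<\delta\), contradicting the assumption. Otherwise it travels at least \(\delta\) toward the midpoint without passing it, so the new distance is at most \(D-\delta\). If its destination is the peer, then \(D\ge\delta\) and it travels at least \(\delta\) toward a stationary target, so again the new distance is at most \(D-\delta\). Hence every activation decreases the distance by at least \(\delta\). After at most \(\lceil D/\delta\rceil\) activations the distance would be at most \(\delta\), a contradiction.

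\textbf{Case (a), \FSY.} Let the robots travel \(a\) and \(b\) toward their respective destinations, so the new distance is \(|D-a-b|\), where both \(a\) and \(b\) lie in \([0,D]\).
\begin{itemize}
\item Midpoint/midpoint. If \(D/2<\delta\), both robots reach the common midpoint and Rendezvous occurs. Otherwise \(a,b\in[\delta,D/2]\), so the new distance is at most \(D-2\delta\).
\item Midpoint/other. Here \(a\in[\min(\delta,D/2),D/2]\) and \(b\in[\delta,D]\). If \(a+b\le D\), the new distance is \(D-a-b\le D-\delta\). If \(a+b>D\), the new distance is \(a+b-D\le a\le D/2\).
\item Other/other. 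The new distance \(|D-a-b|\) is at most \(D\), because \(a,b\in[0,D]\). So the distance never increases in such a round, although it may stay equal to \(D\) after a full exchange.
\end{itemize}
Therefore every round that is not an other/other round maps \(D\) to at most \(\max(D-\delta,\,D/2)\), or else achieves Rendezvous or a distance at most \(\delta\). As long as \(D>2\delta\), this bound is \(D-\delta\). When \(\delta<D\le2\delta\), the bound \(D/2\le\delta\) also yields a contradiction.

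By hypothesis, every maximal block of other/other rounds has length at most a constant \(k\). Hence among any \(k+1\) consecutive rounds at least one is a decreasing round, and no round increases the distance. Consequently, after at most \((k+1)\lceil D/\delta\rceil\) rounds the distance would be at most \(\delta\), which is a contradiction.

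The main obstacle is the midpoint/other overshoot in \FSY. The robot heading to the peer's old position may cross the midpoint mover, so the new distance is not simply \(D-a-b\). The absolute-value computation above handles this: in the crossing subcase the distance is bounded by the midpoint mover's travel \(a\le D/2\). The same computation shows that other/other rounds are harmless, since the distance is non-increasing there. The bounded-block hypothesis is needed only to rule out an infinite sequence of position exchanges.
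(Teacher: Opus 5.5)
Your proof is correct and follows essentially the same route as the paper's. You use the distance as the potential and split each \FSY\ round into midpoint/midpoint, midpoint/other, and other/other cases; in the crossing subcase the new distance is bounded by the midpoint mover's travel $a\le D/2$, and the other/other blocks are controlled by the bounded-block hypothesis. Under \RR\ you give the same direct argument that each activation reduces the distance by $\delta$ or drops it below $\delta$. Your explicit $|D-a-b|$ computation is slightly more careful than the paper's sketch, since it also verifies that other/other rounds can never increase the distance.
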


\begin{proof} 
We only need to consider rounds with \(dis(t)>\delta\).

\noindent
(a) 
In \FSY, if we start from time $t$, the possible behaviors of $r$ and $s$ are as follows:
\begin{enumerate}
\item[(1)] Both move towards the midpoint.
\item[(2)] One moves towards the midpoint and the other moves towards the other robot.
\item[(3)] Both move towards the other robots.
\end{enumerate}
In Non-Rigid movement, for (1), if they do not rendezvous, $dis(t+1)$ is reduced by at least  $2\delta$, because both robots move at least $\delta$. 
For (2), 
we can show that the distance is reduced by at least  $\delta/2$ as follows. 
If the robots do not cross, it is reduced by at least $2\delta$. Otherwise, it is reduced by at least $dis(t)/2>\delta/2$; the minimum reduction occurs when one robot, say $r$, reaches the midpoint and $s$ reaches the location of $r$ at $t$. 


In case~(3), the distance may remain unchanged.
However, every consecutive block of case~(3) has bounded
length, whereas every intervening round of case~(1) or~(2)
either achieves Rendezvous or decreases the distance by at
least \(\delta/2\).
Hence the distance becomes at most \(\delta\) after finitely
many rounds unless Rendezvous occurs earlier.

Therefore,  
there is a time $t'$ such that $dis(t') \leq \delta$ for any possible behavior.

\noindent
(b) Under \RR, if \(dis(t)>2\delta\), every non-null move
decreases the distance by at least \(\delta\).
If
\[
    \delta<dis(t)\leq2\delta,
\]
a midpoint move reaches the midpoint and leaves distance at
most \(\delta\), whereas a move-to-other action moves at least
\(\delta\) and also leaves distance at most \(\delta\).
Thus, after finitely many activations, the distance becomes at
most \(\delta\), unless Rendezvous occurs earlier.
 Thus, 
there is a time $t'$ such that $dis(t') \leq \delta$.

\end{proof}

For algorithms satisfying the corresponding hypotheses of the
Distance Decreasing Lemma, the following properties hold.

\begin{description}
\item{(1)} If $\mathcal{A}$ is \SSS\ in \FSY\ and \RR\ for $\mathrm{dis}(0) \leq \delta$, then it is \SSS\ in \RSY\ and R-\RSY\ in the Non-Rigid setting (used in Theorems~\ref{th:SS-L-non-rigid-3-color} and~\ref{th:SS-non-L-non-rigid-2-color}).

\item[(2)] If $\mathcal{A}$ is \QS\ in \FSY\ without generating configurations with different colors and \SSS\ in \RR\ for $\mathrm{dis}(0) \leq \delta$, then it is \QS\ in \RSY\ in the Non-Rigid setting (used in Theorem~\ref{th:QS-L-non-rigid-2-color}).
\end{description}

\section{Pseudocode for \textsc{QS-L-2-color}}\label{ap:QS-L-2-color}
\Newcodeline
Let $\obs_x(t)$ denote the observed color of robot $x$ at time $t$.
\begin{algorithm}[H]
\caption{QS-\ML-2-color (one-sided visibility)}
\begin{algorithmic}[1]
\Require snapshot $SS_x(t)$ and observed color $\obs_x(t)$
\If{$\obs_x(t)=A$}
    \State set color to $B$
    \State move to  \(\mathsf{Midpoint}_x(t)\)
\ElsIf{$\obs_x(t)=B$}
    \State set color to $A$
    \State move to \(\mathsf{Other}_x(t)\)
\EndIf
\end{algorithmic}
\end{algorithm}

%
%
%

\section{Correctness of \textsc{QS-L-2-color} under \RSY}\label{ap:QS-L-2-color-correctness}

We show that QS-\ML-2-color is a \QS\ \ML-Rendezvous algorithm with 2 colors in the one-sided visibility model, hence for both $\FS$ and $\FC$, in \RSY\ even in the case of Non-Rigid.

Here, we demonstrate that QS-\ML-2-color satisfies the condition of Lemma~\ref{lem:scheduler-composition}(2-1), thereby proving it works correctly in \RSY.
Since this algorithm also satisfies the condition of Distance Decreasing Lemma (Lemma~\ref{lem:DDL}),
we can assume $dis(0) \leq \delta$ without loss of generality in the rest of this section.



\begin{lemma}\label{lem:QS-L-non-rigid-2-color-FSY}
QS-\ML-2-color is a \QS\ \ML-Rendezvous algorithm
with $2$ colors in the one-sided visibility model
under \FSY\ in the Non-Rigid setting.
Moreover, during any such execution,
no mixed-color configuration appears.
\end{lemma}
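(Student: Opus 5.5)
The plan is to analyze pure \FSY\ executions of \textsc{QS-\ML-2-color}, whose rules are $A\longmapsto(B,\mathsf{H})$ and $B\longmapsto(A,\mathsf{O})$, from the two same-color configurations $(A,A)$ and $(B,B)$. We work in the effective Rigid regime $dis(0)\le\delta$. This is justified by the Distance Decreasing Lemma (Lemma~\ref{lem:DDL}(a)): every computed destination is the midpoint or the other robot, and we will check that blocks of simultaneous move-to-other rounds have length at most one.

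The first step is an invariant. In a same-color configuration under \FSY, both robots observe the same color in both interpretations. In \FS\ each robot sees its own color, which is shared. In \FC\ each robot sees the peer's color, which is also the shared color. Both robots therefore apply the same rule, and since each rule writes a constant color, they end with the same color. By induction on rounds, every configuration reachable from $(A,A)$ or $(B,B)$ under \FSY\ is same-colored, so no mixed-color configuration ever appears. This settles the ``moreover'' claim independently of the geometry.

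Next, the geometry in each case.
\begin{itemize}
\item From $(A,A)$ at positive distance $D\le\delta$, both robots compute $\mathsf{Midpoint}$. The destination is within $\delta$, so the Non-Rigid adversary cannot stop them, and both reach the same midpoint. Rendezvous is achieved in one round.
\item From $(B,B)$, both compute $\mathsf{Other}$ and switch to $A$. With Rigid completion (again $D\le\delta$) they exchange positions, keeping distance $D$ and reaching $(A,A)$. The next round gathers them at the midpoint.
\end{itemize}
So the only block of double move-to-other rounds has length one. This justifies the bound needed in Lemma~\ref{lem:DDL}(a) for the general case $dis(0)>\delta$, where the distance eventually drops to at most $\delta$ while the configuration stays same-colored by the invariant. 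Once the robots coincide, every rule computes a zero-length move, so they remain together.

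The main subtlety is the reduction for large initial distance. Before the regime $D\le\delta$ is reached, Non-Rigid stops may leave $(A,A)$ moves short or $(B,B)$ swaps incomplete. I would check explicitly that such stops never break the color invariant, since colors are updated regardless of movement. I would also check that the distance strictly decreases by at least $2\delta$ in each $(A,A)$ round. A $(B,B)$ round can leave the distance unchanged, but it is always followed by an $(A,A)$ round, so case (3) blocks have length one. Lemma~\ref{lem:DDL}(a) then applies, and the argument above finishes the proof.
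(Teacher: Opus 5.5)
Your proposal is correct and follows essentially the same route as the paper: you invoke Lemma~\ref{lem:DDL}(a) because a $(B,B)$ swap round is always followed by an $(A,A)$ round, you use the fact that same-colored robots observe the same color and apply the same constant-color rule so that every \FSY\ configuration stays same-colored, and you then check $(A,A)$ (one midpoint round) and $(B,B)$ (swap, then midpoint). Your extra remarks are harmless refinements: colors are updated independently of Non-Rigid stops, and coinciding robots stay together since $\mathsf{Midpoint}$ and $\mathsf{Other}$ both equal the current position. One small imprecision is that an $(A,A)$ round decreases the distance by at least $2\delta$ only when it does not already produce Rendezvous.
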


\begin{proof}
Recall that in QS-\ML-2-color,
the rule associated with the observed color $A$
is “move to the midpoint and change to $B$,”
and the rule associated with the observed color $B$
is “move to the other robot and change to $A$.”

In a same-color \FSY\ execution, a simultaneous
move-to-other round changes \((B,B)\) to \((A,A)\), so
such rounds cannot occur consecutively.
The Distance Decreasing Lemma therefore applies.

Assume that the initial configuration is same-colored,
namely either $C_{AA}(0)$ or $C_{BB}(0)$.
In the one-sided visibility model,
if both robots have the same color,
then they observe the same color,
regardless of whether the observed color is their own
(\FS) or the other robot's (\FC).
Hence both robots always apply the same rule
in every \FSY\ round.

If the initial configuration is $C_{AA}(0)$,
both robots move to the midpoint and change their colors to $B$,
so Rendezvous is achieved in one round.

If the initial configuration is $C_{BB}(0)$,
both robots move to the position of the other robot
and both change their colors to $A$;
thus they remain distinct after one round,
but the configuration becomes $C_{AA}(1)$.
Hence Rendezvous is achieved in the next round.

Finally, because in every \FSY\ round
both robots observe the same color and execute the same transition,
their colors are always updated simultaneously to the same next color.
Therefore no mixed-color configuration can appear.
\end{proof}

\begin{lemma}\label{lem:QS-L-non-rigid-2-color-RR} 
QS-\ML-2-color is a \SSS\ \ML-Rendezvous algorithm
with $2$ colors in the one-sided visibility model
under \RR\ in the Non-Rigid setting.
\end{lemma}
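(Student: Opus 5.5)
By the Distance Decreasing Lemma (Lemma~\ref{lem:DDL}(b)), under \RR\ every non-null activation of \textsc{QS-\ML-2-color} moves toward the midpoint or toward the other robot. Hence, from any configuration, either Rendezvous occurs or the distance eventually becomes at most \(\delta\). From then on every computed destination lies within distance \(\delta\) of the mover, so every move is completed as in the Rigid case. It therefore suffices to show that, from an arbitrary color configuration \((c_r,c_s)\in\{A,B\}^2\) at distance \(0<D\leq\delta\), a bounded number of \RR\ activations yields Rendezvous. Once gathered, every activation at distance zero is null, so the robots stay together.

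The key observation is a uniform one. Whenever the activated robot observes \(B\), it moves to \(\mathsf{Other}\) and reaches the stationary peer by rigidity, which gives Rendezvous immediately. So I only need to bound how many activations can observe \(A\) before some activation observes \(B\). An \(A\)-observation writes \(B\) on the activated robot and moves it to the midpoint. The two models must then be treated separately, because what the next robot observes depends on whose light it sees.

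In the \FC\ model, the activated robot observes the other robot's light. After any activation of robot \(x\), \(x\) displays either \(B\) (if it observed \(A\)) or \(A\) (if it observed \(B\), but then Rendezvous has already occurred). Since \RR\ alternates, the next activated robot is \(\bar x\), and it observes \(x\)'s light, which is \(B\). Thus \(\bar x\) moves to \(x\) and Rendezvous occurs. So at most two activations suffice: the first one either observes \(B\) and meets the peer, or observes \(A\) and causes the second activation to observe \(B\).

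In the \FS\ model, the activated robot observes its own color. A robot whose color is \(B\) reaches the peer at its next activation. A robot whose color is \(A\) becomes \(B\) after one activation and midpoint move. In the worst case, both robots have color \(A\): the first activation makes \(r\) become \(B\), the second makes \(s\) become \(B\) (the distance stays positive), and the third activation is \(r\) with color \(B\), which moves to \(s\). This gives at most three activations, matching the count in the main text.

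The only delicate point is justifying the Rigid behavior in the final phase. Once \(dis\leq\delta\), a move-to-other destination is within \(\delta\), so the Non-Rigid rule forces the robot to reach it; midpoint moves are likewise completed. I also need to note that the reduction to \(dis\leq\delta\) does not depend on colors, which is exactly what Lemma~\ref{lem:DDL}(b) provides. With these in place, the argument covers all four initial color pairs, which establishes self-stabilization under \RR.
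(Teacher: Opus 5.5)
Your proposal is correct and follows the paper's argument essentially step for step. You reduce via Lemma~\ref{lem:DDL}(b) to distance at most \(\delta\), where every move completes because \RR\ moves only halve or close the distance, and then you check the two cases: in \(\FC\) an \(A\)-observation forces the next activated robot to observe \(B\) (at most two activations), while in \(\FS\) the worst case \((A,A)\) needs three. One small wording point: \textsc{QS-\ML-2-color} has no zero-distance priority rule, so an activation at distance zero may still change the light and is only geometrically null, which is all Rendezvous requires.
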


\begin{proof}
By the Distance Decreasing Lemma, it suffices to consider an
arbitrary configuration with \(dis(0)\leq\delta\).
Without loss of generality, let \(r\) be activated first.

In the \(\FS\) interpretation, if \(r\) has color \(B\), it
moves directly to \(s\) and achieves Rendezvous.
Suppose that \(r\) has color \(A\).
It moves to the midpoint and changes to \(B\).
At the next activation, if \(s\) has color \(B\), then \(s\)
moves directly to \(r\).
Otherwise, \(s\) has color \(A\), moves to the midpoint, and
changes to \(B\); at the following activation, \(r\), now in
color \(B\), moves directly to \(s\).
Thus Rendezvous is achieved within at most three activations.

In the \(\FC\) interpretation, if \(r\) observes \(B\), it
moves directly to \(s\).
If it observes \(A\), it moves to the midpoint and changes its
own color to \(B\); the next activated robot then observes
\(B\) and moves directly to \(r\).
Thus Rendezvous is achieved within at most two activations.

Therefore the algorithm is self-stabilizing under \RR.
\end{proof}

%
%
%
%
%
%
%
%

\setcounterref{theoremduplicate}{th:QS-L-non-rigid-2-color}
\addtocounter{theoremduplicate}{-1}

\begin{theoremduplicate}
[Reprint of Theorem~\ref{th:QS-L-non-rigid-2-color} on 
page~\pageref{th:QS-L-non-rigid-2-color}]\label{duplicateth:QS-L-non-rigid-2-color}
QS-\ML-2-color is a \QS\ \ML-Rendezvous algorithm
with $2$ colors both in the $\FS$ and the $\FC$ models
under \RSY\ in the Non-Rigid setting.
\end{theoremduplicate}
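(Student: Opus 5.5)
The plan is to obtain the reprinted theorem directly from Lemma~\ref{lem:scheduler-composition}(2-1), combined with the Distance Decreasing Lemma (Lemma~\ref{lem:DDL}). The composition lemma needs three inputs. First, \textsc{QS-\ML-2-color} is quasi-self-stabilizing under \RR. Second, it is quasi-self-stabilizing under \FSY. Third, no \FSY\ execution from a same-color configuration ever reaches a mixed-color configuration. Lemma~\ref{lem:QS-L-non-rigid-2-color-RR} gives the first input, in fact in the stronger self-stabilizing form, for both the \FS\ and \FC\ interpretations. Lemma~\ref{lem:QS-L-non-rigid-2-color-FSY} supplies the second and third inputs, again for both interpretations, since the algorithm depends only on the observed color $\obs_x(t)$.

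The Non-Rigid aspect requires one extra check. Under \RSY, an execution is either pure \FSY\ or a finite \FSY\ prefix followed by an \RR\ suffix. I would handle the two parts as follows.

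\textbf{The \FSY\ prefix.} Starting from a same-color configuration, Lemma~\ref{lem:QS-L-non-rigid-2-color-FSY} shows that the configuration stays same-colored throughout the prefix. Every move computed is either a midpoint move or a move-to-other move. Simultaneous move-to-other rounds occur only from $(B,B)$, which turns into $(A,A)$, so they never occur twice in a row. Hence hypothesis~(a) of Lemma~\ref{lem:DDL} holds with block length $1$. Under pure \FSY\ this gives either Rendezvous or a time with distance at most $\delta$. From that time the same-color analysis (Rigid-like, since every destination is within $\delta$) yields Rendezvous within two rounds.

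\textbf{The \RR\ suffix.} The configuration at the start of the suffix may be arbitrary in position, but it is still same-colored, or already gathered. Since the algorithm is even self-stabilizing under \RR, I would not need the color information at all. Lemma~\ref{lem:DDL}(b) brings the distance below $\delta$. Then the finite case analysis of Lemma~\ref{lem:QS-L-non-rigid-2-color-RR} gives Rendezvous within three activations in \FS\ and two in \FC.

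Once Rendezvous is reached, it is preserved. The algorithm's destinations at distance zero all coincide with the current position, so every later activation is null.

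The main obstacle is justifying that the composition lemma, which is stated for general executions, applies in the Non-Rigid setting without losing the same-color invariant during the \FSY\ prefix. Interrupted moves do not affect colors, because the light update is determined by the observed color alone and is committed at the end of \Compute. So the invariant of Lemma~\ref{lem:QS-L-non-rigid-2-color-FSY} carries over verbatim. I would state this explicitly, and then conclude via property~(2) listed after Lemma~\ref{lem:DDL}.
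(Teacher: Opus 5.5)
Your proposal is correct and matches the paper's proof: Lemma~\ref{lem:scheduler-composition}(2-1) is applied to Lemmas~\ref{lem:QS-L-non-rigid-2-color-FSY} and~\ref{lem:QS-L-non-rigid-2-color-RR}, the Non-Rigid case is reduced to distance at most $\delta$ via Lemma~\ref{lem:DDL}, and the argument concludes with property~(2) stated after that lemma. Two minor remarks. First, because the algorithm is already self-stabilizing under \RR, part~(2-2) of the composition lemma would suffice, so the no-mixed-color invariant is not actually needed, as you half-noticed. Second, activations at distance zero still change the light in \textsc{QS-\ML-2-color}, so they are null only geometrically, which is all Rendezvous requires.
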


\begin{proof}
We prove the statement in the unified one-sided visibility model.
By Lemma~\ref{lem:scheduler-composition}
and Lemmas~\ref{lem:QS-L-non-rigid-2-color-FSY}
and~\ref{lem:QS-L-non-rigid-2-color-RR},
QS-\ML-2-color works in the one-sided visibility model.
Hence it works in both $\FS$ and $\FC$.
\end{proof}
%
%

\section{Pseudocode for \textsc{SS-L-3-color}}\label{ap:SS-L-3-color}
\begin{algorithm}[H]
\caption{SS-\ML-3-color (one-sided visibility)}
\begin{algorithmic}[1]
\Require snapshot $SS_x(t)$, observed color $\obs_x(t)$

\If{$\obs_x(t)=A$}
    \State set color to $B$
    \State move to  \(\mathsf{Midpoint}_x(t)\)

\ElsIf{$\obs_x(t)=B$}
    \State set color to $C$
    \State move to  \(\mathsf{Midpoint}_x(t)\)

\ElsIf{$\obs_x(t)=C$}
    \State set color to $A$
    \State move to \(\mathsf{Other}_x(t)\)
\EndIf

\end{algorithmic}
\end{algorithm}

%
%
%

\section{Detailed Analysis of
\texorpdfstring{\textsc{SS-\ML-3-color}}
{SS-L-3-color} under 
\texorpdfstring{\RSY}{RSYNCH} and \texorpdfstring{R-\RSY}{RRSYNCH}}
\label{app:SS-L-3-color-correctness}

This appendix gives the complete proof of
Theorem~\ref{th:SS-L-non-rigid-3-color} for the \(\FS\) model.

The \FSY\ argument depends only on the observed colors and
therefore applies uniformly to both \(\FS\) and \(\FC\).
For completeness, we also analyze the \(\FC\) interpretation
under \RR\ and show that the same graph solves Rendezvous for
\(\FC\) under \RSY and R-\RSY.
The latter observation is supplementary and is not needed for
the three-color upper bound, which is already known under
\SSY~\cite{FSVY}.

The algorithm SS-\ML-3-color is shown in Figure~\ref{fig:all-algorithms}(b),
and its pseudocode is given in Appendix~\ref{ap:SS-L-3-color}.
It works under both \RSY\ and R-\RSY in the case of Non-Rigid
even when starting from any initial configuration.


In the following proofs, let
$
\widehat{C}_{\alpha\beta}(t)
$
denote a configuration at time $t$
in which robot $r$ observes color $\alpha$
and robot $s$ observes color $\beta$.
In the $\FS$ interpretation, $\widehat{C}_{\alpha\beta}(t)=C_{\alpha\beta}(t)$,
whereas in the $\FC$ interpretation, $\widehat{C}_{\alpha\beta}(t)=C_{\beta\alpha}(t)$.

Since SS-\ML-3-color is defined with respect to observed colors,
the same transition graph applies to both interpretations.
\begin{lemma}\label{lem:SS-L-non-rigid-3-color-FSY}
SS-\ML-3-color is a \SSS\ \ML-Rendezvous algorithm
with $3$ colors in the one-sided visibility model
under \FSY\ in the Non-Rigid setting.
\end{lemma}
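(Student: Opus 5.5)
The plan is to reduce first to the effective-Rigid regime and then exhaust the finite color dynamics. Under \FSY, every robot at distinct positions moves either to the midpoint or to the peer. Simultaneous move-to-other rounds occur only when both robots observe \(C\). Both then switch to \(A\): in \FS\ each robot's own light becomes \(A\), and in \FC\ each displayed light becomes \(A\), so both observe \(A\) in the next round. Hence every maximal block of such rounds has length one, Lemma~\ref{lem:DDL}(a) applies, and it suffices to assume \(dis(0)\leq\delta\), where all moves are completed. I will work throughout with the observed-color configuration \(\widehat C_{\alpha\beta}\), so that the argument is uniform for \FS\ and \FC.

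Next I will describe the observed-color dynamics under \FSY. In \FS, each robot's observed color advances along \(A\to B\to C\to A\) every round. In \FC, each robot observes the peer's color, and the peer's color also advances along the same cycle. So in both interpretations the observed pair \((\alpha,\beta)\) maps to \((\sigma(\alpha),\sigma(\beta))\), where \(\sigma\) is the cyclic successor. The offset \(\beta-\alpha\) modulo \(3\) is therefore invariant, and I will split into three classes according to this offset.

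In the \emph{same-color} class, the robots cycle through \((A,A)\), \((B,B)\), \((C,C)\). Any \((A,A)\) or \((B,B)\) round sends both robots to the common midpoint, which gives Rendezvous. From \((C,C)\) the robots swap and reach \((A,A)\). So Rendezvous occurs within two rounds.

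In the two mixed classes, the pair cycles through \((A,B)\to(B,C)\to(C,A)\) or its mirror image. The key is the \((A,B)\) round. There both robots observe a midpoint color, so both move to the same midpoint and Rendezvous occurs. From \((B,C)\) or \((C,A)\), one robot halves toward the other while the other jumps onto it. The jumping robot lands on the peer's old position, and the halving robot lands at the midpoint, so the distance halves but stays positive. Within at most two rounds the pair reaches \((A,B)\), giving Rendezvous within three rounds. I also need to check that once the robots are gathered they stay gathered. Both moves have \(\lambda\in\{1/2,1\}\), and at distance zero every destination is the common position, so this holds.

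The main point to verify carefully, which I expect to be the only real obstacle, is the \FC\ step. I must confirm that the map on observed pairs is really componentwise \(\sigma\). This holds because each robot's light update depends only on what it observes, and in \FC\ what the peer observes is exactly that light. With this established, the three-round bound covers all nine ordered pairs, which proves self-stabilization under \FSY\ in the Non-Rigid setting.
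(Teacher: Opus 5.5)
Your reduction via Lemma~\ref{lem:DDL}(a) and your \FS\ analysis are correct. The \FC\ step, which you yourself flag as the crux, is false as stated. In \FC, robot $r$ observes $\alpha=c_s$ and writes $\sigma(\alpha)$ to its \emph{own} light. In the next round it is $s$, not $r$, that sees this light. The observed pair therefore evolves as $(\alpha,\beta)\mapsto(\sigma(\beta),\sigma(\alpha))$, not as $(\sigma(\alpha),\sigma(\beta))$. Your own justification (``in \FC\ what the peer observes is exactly that light'') actually yields this swapped map. As a consequence, the offset $\beta-\alpha$ is negated every round rather than preserved, and the two mixed classes are exchanged each round. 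Your stated orbit is also wrong for \FC. For example, in \FC\ one has $(B,C)\mapsto(A,C)\mapsto(A,B)$, not $(B,C)\mapsto(C,A)\mapsto(A,B)$, and $(C,A)\mapsto(B,A)$ after a single round.

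The conclusion survives, and the repair is short. Work with the unordered multiset $\{\alpha,\beta\}$ of observed colors. It advances by $\sigma$ in both interpretations, and the geometric outcome of an \FSY\ round depends only on this multiset:
\begin{itemize}
\item both robots reach the common midpoint if $\{\alpha,\beta\}\subseteq\{A,B\}$;
\item the robots swap positions if it is $\{C,C\}$;
\item the distance halves and stays positive if exactly one entry is $C$.
\end{itemize}
The orbits are $\{C,C\}\to\{A,A\}$, $\{A,C\}\to\{A,B\}$, and $\{B,C\}\to\{A,C\}\to\{A,B\}$. Together these give Rendezvous within three rounds from all nine ordered pairs, in both models. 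This is exactly how the paper's proof avoids the issue: it states transitions as set memberships, such as $\widehat{C}_{BC}\to\{\widehat{C}_{AC},\widehat{C}_{CA}\}$ and $\widehat{C}_{AC}\to\{\widehat{C}_{AB},\widehat{C}_{BA}\}$, so that a single case analysis covers both \FS\ and \FC.
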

\begin{proof}
We first show that SS-\ML-3-color satisfies Distance Decreasing Lemma~(a).
The only situation in which both robots move to the current position
of the other robot is $\widehat{C}_{CC}(t)$,
and after that round the observed-color configuration becomes
$\widehat{C}_{AA}(t+1)$.
Hence this behavior never occurs twice consecutively.
Therefore Distance Decreasing Lemma~(a) applies,
and it suffices to consider the case $dis(0)\le \delta$.

Now consider any initial observed-color configuration.

If the initial configuration is one of $\widehat{C}_{AA}(0),\ \widehat{C}_{AB}(0),\ \widehat{C}_{BA}(0),\ \widehat{C}_{BB}(0)$,
then both robots move to the midpoint,
and Rendezvous is achieved in one round.

If the initial configuration is one of $\widehat{C}_{AC}(0),\ \widehat{C}_{CA}(0),\ \widehat{C}_{CC}(0)$,
then Rendezvous is achieved in at most two rounds:
from $\widehat{C}_{CC}(0)$ the next configuration is
$\widehat{C}_{AA}(1)$,
while from $\widehat{C}_{AC}(0)$ or $\widehat{C}_{CA}(0)$
the next observed-color configuration belongs to $\{\widehat{C}_{AB}(1),\widehat{C}_{BA}(1)\}$,
from which Rendezvous occurs in the following round.

Finally, if the initial configuration is
$\widehat{C}_{BC}(0)$ or $\widehat{C}_{CB}(0)$,
then after one round the resulting observed-color configuration belongs to $
\{\widehat{C}_{AC}(1),\widehat{C}_{CA}(1)\}$,
and hence Rendezvous is achieved within two more rounds.

Therefore SS-\ML-3-color attains Rendezvous
from any initial configuration in \FSY.
\end{proof}

\begin{lemma}\label{lem:SS-L-non-rigid-3-color-RR}
SS-\ML-3-color is a \SSS\ \ML-Rendezvous algorithm
with $3$ colors in the one-sided visibility model
under \RR\ in the Non-Rigid setting.
\end{lemma}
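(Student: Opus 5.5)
The plan is to follow the pattern of Lemma~\ref{lem:QS-L-non-rigid-2-color-RR}. First I verify the hypothesis of the Distance Decreasing Lemma (Lemma~\ref{lem:DDL}(b)). Every rule of \textsc{SS-\ML-3-color} prescribes either $\mathsf{Midpoint}$ or $\mathsf{Other}$ at positive distance. So under \RR\ every activation is a non-null move toward the midpoint or toward the peer, and the lemma lets us assume $dis(0)\le\delta$. In this regime every move is effectively Rigid. Without loss of generality, let $r$ be activated first, so the activations alternate $r,s,r,s,\dots$. I then treat the two interpretations separately, as in the two-color case.

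\emph{\FS\ interpretation.} Each robot's color advances privately along $A\to B\to C\to A$, one step per own activation. Hence after at most two of its own activations, every robot displays $C$. Its next activation is a move to $\mathsf{Other}$, and since the peer is stationary under \RR, rigidity gives Rendezvous. Concretely, $r$ holds color $C$ at one of its activations $1,3,5$ (global indices). So Rendezvous occurs by global activation $5$ at the latest, and earlier if $s$ reaches $C$ first. Intermediate midpoint moves only shrink the distance and never separate coincident robots, because at distance zero the destinations $\mathsf{Midpoint}$ and $\mathsf{Other}$ both coincide with the current position. So the argument needs no bookkeeping of positions.

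\emph{\FC\ interpretation.} Here the activated robot reacts to the peer's color and rewrites its own color. If the activated robot observes $C$, it moves to the stationary peer and Rendezvous occurs. Otherwise it observes $A$ or $B$ and writes $B$ or $C$ respectively. Suppose first that it observes $B$. It then writes $C$, so at the next activation the peer observes $C$ and moves onto it. Suppose instead that it observes $A$. It writes $B$ and moves to the midpoint. The peer then observes $B$, writes $C$ and moves to the midpoint. At the third activation the first robot observes $C$ and reaches the peer. Thus at most three activations suffice, from any initial colors.

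The main point needing care is the composition with Lemma~\ref{lem:DDL}. Before the distance drops below $\delta$, the colors evolve arbitrarily, so the finite-state argument must be applied from an \emph{arbitrary} color configuration at the first time $t'$ with $dis(t')\le\delta$. This is exactly why the case analysis above is self-stabilizing, that is, uniform over all initial colors. I also check that the first activated robot after $t'$ can be of either identity, which the ``without loss of generality'' absorbs. Once these points are settled, the lemma follows for both interpretations.
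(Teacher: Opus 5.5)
Your proof is correct and follows the paper's own argument essentially step for step: you reduce to $dis(0)\le\delta$ via part~(b) of the Distance Decreasing Lemma, then use the private color progression $A\to B\to C\to A$ in the \FS\ case and the chain of at most three activations in the \FC\ case. Your explicit bound of five global activations in the \FS\ case, and your remark that coincident robots stay together because every destination collapses to the current position, are small refinements of the paper's ``finitely many rounds'' wording rather than a different route.
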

\begin{proof}
Since Distance Decreasing Lemma~(b) applies,
it suffices to consider configurations with $dis(0)\le \delta$.
Without loss of generality, assume that robot $r$ is activated first.

We distinguish the two interpretations.

\smallskip
\noindent
\emph{$\FS$ interpretation.}
Under $\FS$, each robot observes its own color.
Thus, whenever a robot is activated,
its observed color follows the progression
\[
A \rightarrow B \rightarrow C \rightarrow A,
\]
where the moves associated with $A$ and $B$ are toward the midpoint,
and the move associated with $C$ is toward the other robot.
Hence:
(i) if the activated robot already has observed color $C$,
Rendezvous is achieved in that activation;
(ii) if it has observed color $B$,
it changes to $C$ and Rendezvous is achieved at its next activation;
and (iii) if it has observed color $A$,
it changes to $B$, then to $C$, and Rendezvous is achieved
within its next two activations.
Since activations alternate in \RR,
Rendezvous is achieved from any initial configuration
within finitely many rounds.

\smallskip
\noindent
\emph{$\FC$ interpretation.}
Under $\FC$, the active robot observes the current color of the other robot.
If the active robot observes $C$,
it moves directly to the other robot,
and because $dis(0)\le \delta$,
Rendezvous is achieved in that activation.
If it observes $B$,
it changes its own color to $C$ while moving to the midpoint;
hence the other robot observes $C$ at the next activation
and Rendezvous is achieved then.
If it observes $A$,
it changes its own color to $B$ while moving to the midpoint;
then the other robot observes $B$ at the next activation
and changes its own color to $C$ while moving to the midpoint;
finally, at the following activation,
the first robot observes $C$ and reaches the other robot.
Thus Rendezvous is achieved in at most three activations.

Therefore SS-\ML-3-color attains Rendezvous
from any initial configuration in \RR.
\end{proof}

\setcounterref{theoremduplicate}{th:SS-L-non-rigid-3-color}
\addtocounter{theoremduplicate}{-1}

\begin{theoremduplicate}
[Reprint of Theorem~\ref{th:SS-L-non-rigid-3-color} on 
page~\pageref{th:SS-L-non-rigid-3-color}]\label{duplicate:th:SS-L-non-rigid-3-color}
\textsc{SS-\ML-3-color} is a three-color self-stabilizing
\ML-Rendezvous algorithm for \FS\ robots under both
\RSY\ and R-\RSY\ in the Non-Rigid setting.
\end{theoremduplicate}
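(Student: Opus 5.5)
The plan is to assemble the theorem from the two component lemmas proved just above, via the scheduler-composition and distance-reduction principles of Appendix~\ref{ap:rsynch-reduction}. Lemma~\ref{lem:SS-L-non-rigid-3-color-FSY} gives self-stabilization of \textsc{SS-\ML-3-color} under pure \FSY\ in the Non-Rigid setting. Lemma~\ref{lem:SS-L-non-rigid-3-color-RR} gives the same under pure \RR. Both are stated in the unified one-sided visibility model, so in particular they hold under the \FS\ interpretation. Lemma~\ref{lem:scheduler-composition}(1) states that self-stabilization under \RSY\ (and likewise under R-\RSY) is equivalent to self-stabilization under both \FSY\ and \RR. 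The theorem then follows immediately for both schedulers.

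The only point I would check carefully is that the Non-Rigid reduction is compatible with mixing schedulers. In particular, I would verify that the finite prefix of one scheduler followed by an infinite suffix of the other does not break the reduction to $dis\le\delta$. My argument is as follows. Self-stabilization under the suffix scheduler is required from \emph{every} configuration, whatever colors and distance the prefix leaves behind. Hence the suffix alone suffices for correctness.

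Within the suffix, the Distance Decreasing Lemma (Lemma~\ref{lem:DDL}) handles large distances. For part~(a), the hypothesis holds because the only double move-to-other round is $\widehat C_{CC}$, which is followed by $\widehat C_{AA}$. Part~(b) holds for \RR\ unconditionally. Once $dis\le\delta$, every move is effectively Rigid, and the finite case analyses of the two lemmas apply. Since Rendezvous, once reached, is preserved by the zero-distance behavior, correctness under the suffix gives correctness of the whole execution.

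Thus the proof is essentially a citation: invoke property~(1) following Lemma~\ref{lem:DDL} with the two lemmas. I expect no real obstacle. The one subtle point is confirming that the pure-\FSY\ and pure-\RR\ lemmas cover arbitrary initial color pairs at arbitrary distance, and not merely designated ones, which they do.
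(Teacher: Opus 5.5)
Your proposal is correct and matches the paper's proof. The paper likewise just combines Lemma~\ref{lem:SS-L-non-rigid-3-color-FSY} and Lemma~\ref{lem:SS-L-non-rigid-3-color-RR} with the scheduler-composition and distance-reduction principles of Appendix~\ref{ap:rsynch-reduction}; your extra check, that the suffix lemma holds from every configuration the prefix leaves behind, is exactly what the paper leaves implicit in Lemma~\ref{lem:scheduler-composition}(1).
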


\begin{proof}
Correctness under \FSY\ follows from
Lemma~\ref{lem:SS-L-non-rigid-3-color-FSY}, and correctness under
\RR\ follows from
Lemma~\ref{lem:SS-L-non-rigid-3-color-RR}.
The scheduler-composition and distance-reduction principles of
Appendix~\ref{ap:rsynch-reduction} therefore establish the
claim under both Non-Rigid \RSY and R-\RSY.
\end{proof}

\begin{proposition}[Supplementary \(\FC\) interpretation]
\label{prop:SS-L-3-color-FCOM-RSYNCH}
\textsc{SS-\ML-3-color} also solves self-stabilizing
Rendezvous for \(\FC\) robots under both \RSY\ and R-\RSY in the Non-Rigid
setting.
\end{proposition}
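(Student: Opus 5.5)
The plan is to use the scheduler-composition principle of Lemma~\ref{lem:scheduler-composition}(1): self-stabilization under \RSY\ (resp.\ R-\RSY) is equivalent to self-stabilization under both \FSY\ and \RR. The Non-Rigid case is handled by the Distance Decreasing Lemma (Lemma~\ref{lem:DDL}). This gives two component facts to check, both already present in the appendix.

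\textbf{The \FSY\ component.} Lemma~\ref{lem:SS-L-non-rigid-3-color-FSY} is stated in the one-sided visibility model. Its argument depends only on the observed-color configuration \(\widehat{C}_{\alpha\beta}\), so it covers the \(\FC\) interpretation verbatim. I will only need to check that the case analysis is closed under the \(\FC\) dynamics. Under \(\FC\), the observed colors at the next round are the swapped pair of the written colors. The successor of every observed pair under the rules \(A\mapsto(B,H)\), \(B\mapsto(C,H)\), \(C\mapsto(A,O)\) therefore lands in the same sets as listed; for example \(\widehat{C}_{AC}\) goes to \(\{\widehat{C}_{AB},\widehat{C}_{BA}\}\). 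In particular, the only simultaneous move-to-other round is \(\widehat{C}_{CC}\), which is followed by \(\widehat{C}_{AA}\). This gives Lemma~\ref{lem:DDL}(a) and lets me assume \(dis\le\delta\).

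\textbf{The \RR\ component.} Lemma~\ref{lem:SS-L-non-rigid-3-color-RR} already contains the \(\FC\) case. Once \(dis\le\delta\), the activated robot observes the peer's color, and within at most three activations some robot observes \(C\) and reaches its stationary peer. Lemma~\ref{lem:DDL}(b) supplies the reduction to \(dis\le\delta\).

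\textbf{Main obstacle.} The step needing care is the Non-Rigid \FSY\ reduction under \(\FC\). Every round must either be a bounded \(O/O\) block or decrease the distance by at least \(\delta/2\). To check this I will re-verify that after \(\widehat{C}_{CC}\) both robots write \(A\) and hence observe \(A\), so the next round is \(H/H\). Combining the two components via Lemma~\ref{lem:scheduler-composition}(1) and property~(1) following Lemma~\ref{lem:DDL} then gives self-stabilization under both Non-Rigid \RSY\ and R-\RSY.
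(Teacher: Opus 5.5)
Your proposal is correct and follows the same route as the paper: you apply Lemma~\ref{lem:SS-L-non-rigid-3-color-FSY} for \FSY\ and the \(\FC\) case of Lemma~\ref{lem:SS-L-non-rigid-3-color-RR} for \RR, then combine them via the scheduler-composition and Distance Decreasing principles. Your explicit check that the listed successor sets are closed under the \(\FC\) swap of observed colors is something the paper only asserts. That check ensures both the \FSY\ case analysis and the single-round \(\widehat{C}_{CC}\to\widehat{C}_{AA}\) block carry over to \(\FC\).
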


\begin{proof}
Correctness under \FSY\ follows from
Lemma~\ref{lem:SS-L-non-rigid-3-color-FSY}, and correctness under
\RR\ follows from
Lemma~\ref{lem:SS-L-non-rigid-3-color-RR}.
The same reduction principles therefore give correctness under
both Non-Rigid \RSY and R-\RSY.
\end{proof}

\section{Pseudocode for \textsc{SS-non-L-2-color}}\label{ap:SS-non-L-2-color}
\begin{algorithm}[H]
\caption{SS-non-L-2-color (one-sided visibility)}
\begin{algorithmic}[1]
\Require snapshot \SS$_x(t)$ and observed color $\obs_x(t)$
\If{$dis(t)=0$}
    \State \(\mathsf{Keep}\)
    \State stay
\ElsIf{$\obs_x(t)=A$}
    \If{$q_x(t)=0$} \Comment{EQI}
        \State set color to $B$
        \State move to \(\mathsf{Midpoint}_x(t)\)
    \Else \Comment{$q_x(t)=1$ (OQI)}
        \State set color to $A$
        \State move to \(\mathsf{Midpoint}_x(t)\)
    \EndIf
\ElsIf{$\obs_x(t)=B$}
    \State set color to $A$
    \State move to \(\mathsf{Other}_x(t)\)
\EndIf
\end{algorithmic}
\end{algorithm}

\section{Correctness of \textsc{SS-non-L-2-color} under \RSY\ and R-\RSY}\label{ap:Proof-SS-non-L-2-color}

The proof is essentially the same as that of \textsc{QS-\ML-2-color}.
The additional mechanism is precisely the periodic \((4,2)\)-distance class introduced in Section~\ref{subsubsec:periodic_distance_classes}.

\begin{lemma}\label{lem:SS-L-non-rigid-2-color-FSY}
Under VD+FUD, SS-non-\ML-2-color is a \SSS\ non-\ML-Rendezvous algorithm
with $2$ colors in the one-sided visibility model
under \FSY\ in the Non-Rigid setting.
\end{lemma}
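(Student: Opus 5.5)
The plan is to follow the proof of Lemma~\ref{lem:QS-L-non-rigid-2-color-FSY}. The new ingredient is a potential argument for mixed observed-color configurations, which uses the periodic $(4,2)$-classes $q_x$ and Lemma~\ref{lem:periodic-distance-class-shift}.

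\textbf{Step 1: reduction to small distances.} First I will check the hypotheses of the Distance Decreasing Lemma (Lemma~\ref{lem:DDL}(a)). Every non-null destination is $\mathsf{Midpoint}$ or $\mathsf{Other}$. A round in which both robots move to the other's position requires the observed-color configuration $\widehat{C}_{BB}$. In that round both robots write $A$, so in both the $\FS$ and the $\FC$ interpretation the next observed configuration is $\widehat{C}_{AA}$. Hence such rounds never occur twice in a row. By the lemma we may assume $dis(0)\le\delta$, so every subsequent move is effectively Rigid. Once $dis=0$, the priority rule $(\mathsf{Keep},\mathsf{Stay})$ keeps the robots together.

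\textbf{Step 2: same observed color.} From $\widehat{C}_{AA}$, both robots move to the common midpoint, whatever they write, so Rendezvous occurs in one round. From $\widehat{C}_{BB}$, the robots exchange positions, the distance stays positive, and the next configuration is $\widehat{C}_{AA}$. Rendezvous then follows in the next round.

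\textbf{Step 3: mixed observed colors; this is the main step.} Suppose one robot $x$ observes $A$ and the other robot $\bar x$ observes $B$ at distance $D>0$.
\begin{itemize}
\item $x$ moves to the midpoint and $\bar x$ moves to $x$'s old position. The new distance is $D/2>0$ and no Rendezvous occurs.
\item $\bar x$ writes $A$.
\item $x$ writes $A$ if $q_x(D)=1$ and writes $B$ if $q_x(D)=0$.
\end{itemize}
I will track the two interpretations separately.

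In $\FS$, if $q_x(D)=1$ both robots now hold $A$, giving $\widehat{C}_{AA}$. Otherwise the configuration is mixed with the roles swapped: $\bar x$ is now the $A$-observer at distance $D/2$. Either $q_{\bar x}(D/2)=1$, and we reach $\widehat{C}_{AA}$, or the roles swap back and $x$ is again the $A$-observer at distance $D/4$. By Lemma~\ref{lem:periodic-distance-class-shift} with $b=4$, $q_x(D/4)=1-q_x(D)=1$. So $x$ writes $A$, $\bar x$ writes $A$, and we reach $\widehat{C}_{AA}$.

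In $\FC$, the observer of $A$ stays the same robot $x$: $x$ sees $\bar x$'s new color $A$, and $\bar x$ sees $B$ as long as $x$ writes $B$. Within two mixed rounds the distance becomes $D/4$. Then $q_x$ has flipped to $1$, $x$ writes $A$, and we reach $\widehat{C}_{AA}$. This may happen already at $D/2$ if $q_x(D/2)=1$.

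In both interpretations, a mixed configuration therefore leads to $\widehat{C}_{AA}$ within at most three rounds. By Step~2, Rendezvous follows within one more round.

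The main obstacle is Step~3. The care needed there is in checking that the role alternation in $\FS$ still brings the same robot back at distance $D/4$, while in $\FC$ the same robot is the $A$-observer in consecutive rounds. In both cases the base-$4$ flip is what breaks the cycle. I also need to confirm that a robot's distance class is evaluated only at positive distance, which holds because every mixed round leaves the distance at $D/2>0$.
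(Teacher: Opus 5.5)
Your proposal is correct and follows essentially the same route as the paper. Both proofs invoke Distance Decreasing Lemma~(a), using the fact that a $\widehat{C}_{BB}$ round is always followed by $\widehat{C}_{AA}$, resolve the same-color cases directly, and use the base-$4$ class flip of Lemma~\ref{lem:periodic-distance-class-shift} to break the mixed-color cycle within two halvings. Your separate tracking of the role alternation in the $\FS$ interpretation, versus the fixed $A$-observer in $\FC$, only spells out what the paper compresses into the remark that after two mixed rounds the same robot again observes $A$ at distance $dis(0)/4$.
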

\begin{proof}
In SS-non-\ML-2-color,
the only situation in which both robots move to the current position
of the other robot is $\widehat{C}_{BB}(t)$,
and after that round the observed-color configuration becomes
$\widehat{C}_{AA}(t+1)$.
Hence this behavior never occurs twice consecutively.
By Distance Decreasing Lemma~(a),
it suffices to consider the case $dis(0)\le \delta$.

We consider the possible observed-color configurations.

\smallskip
\noindent
\emph{Case 1: $\widehat{C}_{AA}(0)$.}
Both robots move to the midpoint.
Hence Rendezvous is achieved in one round,
regardless of whether the distance satisfies an even or an odd quartered interval.

\smallskip
\noindent
\emph{Case 2: $\widehat{C}_{BB}(0)$.}
Both robots move to the current position of the other robot
and both change their colors to $A$.
Thus the next observed-color configuration is $\widehat{C}_{AA}(1)$,
and Rendezvous is achieved in the next round.

\smallskip
\noindent
\emph{Case 3: the initial observed-color configuration is mixed,
that is, $\widehat{C}_{AB}(0)$ or $\widehat{C}_{BA}(0)$.}
Suppose first that $dis(0)$ satisfies an odd quartered interval for the robot observing $A$.
Then the robot observing $A$ sets its color to $A$
and moves to the midpoint,
whereas the robot observing $B$ changes to $A$
and moves to the current position of the other robot.
Hence after one \FSY\ round
the observed-color configuration becomes $\widehat{C}_{AA}(1)$,
and Rendezvous is achieved in the following round.

Now suppose that $dis(0)$ satisfies an even quartered interval  for the robot, say $r$ observing $A$.
Then in the first \FSY\ round
the robot $r$ observing $A$ changes to $B$ while moving to the midpoint,
whereas the robot observing $B$ changes to $A$
while moving to the current position of the other robot.
Hence the resulting observed-color configuration is still mixed,
and the distance is halved.
If a same-color configuration is reached earlier, the
preceding cases apply.
Otherwise, the configuration remains mixed for two rounds,
the distance becomes \(dis(0)/4\), and \(r\) again observes
\(A\).
Its periodic class has therefore changed from EQI to OQI,
so the next round produces a same-color configuration.

Thus SS-non-\ML-2-color attains Rendezvous
from any initial configuration in \FSY.
\end{proof}

\begin{lemma}
\label{lem:SS-L-non-rigid-2-color-RR}
Under VD+FUD, SS-non-\ML-2-color is a \SSS\
non-\ML-Rendezvous algorithm with two colors in the
one-sided visibility model under \RR\ in the Non-Rigid
setting.
\end{lemma}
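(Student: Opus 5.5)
We follow the pattern of Lemma~\ref{lem:QS-L-non-rigid-2-color-RR}. The first step is to check that the hypotheses of the Distance Decreasing Lemma (Lemma~\ref{lem:DDL}(b)) hold. In \textsc{SS-non-\ML-2-color}, every non-zero-distance activation computes either $\mathsf{Midpoint}$ or $\mathsf{Other}$, so under \RR\ every execution either reaches Rendezvous or reaches some time $t'$ with $dis(t')\leq\delta$. From then on every move is completed (effective Rigid regime), so we may assume $dis(0)\leq\delta$ and let $r$ be the first activated robot. The periodic class $q_x$ only affects which color an $A$-observing robot writes. Both branches of the $A$-rule use $\mathsf{Midpoint}$, so the geometry is identical to \textsc{QS-\ML-2-color}. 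Only the color bookkeeping must be redone.

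\emph{\FC\ interpretation.} If $r$ observes $B$, it moves to $s$ and Rendezvous follows. Otherwise $r$ observes $A$ and moves to the midpoint, writing $B$ or $A$ depending on $q_r$.
\begin{itemize}
\item If $r$ wrote $B$, then $s$ observes $B$ at the next activation and reaches $r$.
\item If $r$ wrote $A$ (odd class), then $s$ observes $A$, moves to the midpoint, and writes $B$ or $A$ according to $q_s$. If it writes $B$, $r$ reaches $s$ next.
\end{itemize}
The remaining case is a sequence of $A$-observations in which each writer keeps $A$, with the distance halved at each activation. Such runs cannot persist for either robot: after two of its own activations (distance divided by $4$), Lemma~\ref{lem:periodic-distance-class-shift} flips the robot's $(4,2)$-class. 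Hence a robot writing $A$ at class $1$ has class $0$ at its next activation, because one intervening activation of the peer brings the distance to $D/4$ relative to its previous look. It then writes $B$, and the peer observes $B$ and reaches it.

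\emph{\FS\ interpretation.} Each robot observes its own color. A $B$-robot moves to the peer immediately. An $A$-robot moves to the midpoint, and the same alternation argument applies: between two consecutive activations of $r$ in \RR, exactly one activation of $s$ occurs. If both activations are midpoint moves, the distance seen by $r$ is divided by $4$ and its class flips. So an $A$-robot keeping $A$ (class $1$) writes $B$ at its next activation (class $0$), and then reaches the peer at the activation after that, unless the peer reached it earlier. If the peer performs an $\mathsf{Other}$ move instead, Rendezvous has already occurred. Thus Rendezvous is achieved within a bounded number of activations (about five) in both interpretations.

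The main obstacle is bookkeeping the class parity exactly. The flip requires exactly two halvings between consecutive looks of the same robot, and this holds only if every intervening activation is a completed midpoint move. We must rule out mixed cases in which a move-to-other action or a null action intervenes. We handle this by noting that in the regime $dis\leq\delta$ any move-to-other action immediately yields Rendezvous, and no null action occurs at positive distance. Hence every non-terminating prefix consists solely of midpoint moves, and the parity argument goes through.
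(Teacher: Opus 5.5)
Your proposal is correct and follows essentially the same route as the paper: reduce via part~(b) of the Distance Decreasing Lemma to the regime $0<dis\le\delta$, where every move completes, and then do the \FS/\FC\ case analysis. The key step in both is that an $A$-robot keeping $A$ in the odd $(4,2)$-class sees distance $D/4$ at its next look (its own midpoint move plus the peer's), so its class flips and it writes $B$. The only blemish is the phrase ``after two of its own activations,'' which should read ``at its next own activation, after one own and one peer midpoint move''; your very next sentence states this correctly.
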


\begin{proof}
Since part~(b) of the Distance Decreasing Lemma applies,
it suffices to consider an arbitrary configuration with
\(D=dis(0)\leq\delta\).
If \(D=0\), the zero-distance rule keeps the robots together.
Assume therefore that \(0<D\leq\delta\).
Under \RR, subsequent moves do not increase the distance,
so every move reaches its destination.
Without loss of generality, assume that robot \(r\) is
activated first.

We distinguish the two interpretations.

\smallskip
\noindent
\emph{\(\FS\) interpretation.}
If \(r\) observes color \(B\), it moves directly to the
stationary robot \(s\), and Rendezvous is achieved immediately.
Assume therefore that \(r\) observes color \(A\).

If \(D\) satisfies EQI for \(r\), then \(r\) changes its
color to \(B\) and moves to the midpoint.
Unless Rendezvous occurs earlier, \(r\) retains color \(B\)
until its next activation, when it moves to the stationary
peer and achieves Rendezvous.

Suppose instead that \(D\) satisfies OQI for \(r\).
Robot \(r\) keeps color \(A\) and moves to the midpoint.
If \(s\) has color \(B\), it reaches \(r\) at the next
activation.
Otherwise, \(s\) has color \(A\) and also moves to the
midpoint, setting its own color to either \(A\) or \(B\)
according to its distance class.

After these two midpoint moves, the physical distance is
\(D/4\), and \(r\) still has color \(A\).
By Lemma~\ref{lem:periodic-distance-class-shift},
\[
    \kappa_{r,4}^{(2)}(D/4)
    =
    1-\kappa_{r,4}^{(2)}(D)
    =
    0.
\]
Thus \(D/4\) satisfies EQI for \(r\).
At its next activation, \(r\) changes its color to \(B\)
and moves to the midpoint.
Unless Rendezvous occurs at the intervening activation of
\(s\), robot \(r\) reaches the stationary peer at its
following activation.

\smallskip
\noindent
\emph{\(\FC\) interpretation.}
If \(r\) observes color \(B\), it moves directly to the
stationary robot \(s\), and Rendezvous is achieved immediately.
Assume therefore that \(r\) observes color \(A\).

If \(D\) satisfies EQI for \(r\), then \(r\) sets its own
color to \(B\) and moves to the midpoint.
Robot \(s\) therefore observes \(B\) at the next activation
and moves directly to \(r\), achieving Rendezvous.

Suppose instead that \(D\) satisfies OQI for \(r\).
Robot \(r\) sets its own color to \(A\) and moves to the
midpoint.
At the next activation, \(s\) necessarily observes \(A\)
and also moves to the midpoint.
It sets its own color to either \(B\) or \(A\), according
to whether \(D/2\) satisfies EQI or OQI for \(s\).

After these two midpoint moves, the physical distance is
\(D/4\).
We distinguish the color observed by \(r\) at its next
activation.

If \(r\) observes \(B\), then \(s\) has set its color to
\(B\).
Robot \(r\) moves directly to the stationary robot \(s\),
and Rendezvous is achieved immediately.

Otherwise, \(r\) observes \(A\).
Since \(D\) satisfied OQI for \(r\), Lemma~\ref{lem:periodic-distance-class-shift}
gives
\[
    \kappa_{r,4}^{(2)}(D/4)
    =
    1-\kappa_{r,4}^{(2)}(D)
    =
    0.
\]
Hence \(D/4\) satisfies EQI for \(r\).
Robot \(r\) therefore sets its own color to \(B\) and moves
to the midpoint.
At the next activation, \(s\) observes \(B\) and moves
directly to \(r\), achieving Rendezvous.

Thus, in both interpretations, Rendezvous is achieved in
finitely many activations from any configuration with
distance at most \(\delta\).
The zero-distance rule then keeps the robots together.
Together with the Distance Decreasing Lemma, this proves
self-stabilization under \RR\ in the Non-Rigid setting.
\end{proof}

\section{Pseudocode for \textsc{NS-FSTA-3-color}}
\label{app:NS-FSTA-3color}

\begin{algorithm}[H]
\caption{\textsc{NS-FSTA-3-color}}
\label{alg:NS-FSTA-3color}
\begin{algorithmic}[1]
\Require snapshot \SS$_x(t)$ and own color $c_x(t)$
\If{$dis(t)=0$}
    \State \(\mathsf{Keep}\)
    \State stay
\ElsIf{$c_x(t)=A$}
    \State $i\gets\kappa_x^{(2)}(t)$
    \State set color to $B_i$
    \State move to \(\mathsf{Other}_x(t)\)
\ElsIf{$c_x(t)=B_i$ for some $i\in\{0,1\}$}
    \If{$\kappa_x^{(2)}(t)=i$}
        \State set color to $A$
        \State move to \(\mathsf{Midpoint}_x(t)\)
    \Else
         \State \(\mathsf{Keep}\)
        \State stay
    \EndIf
\EndIf
\end{algorithmic}
\end{algorithm}

\section{Correctness and Optimality of \textsc{NS-FSTA-3-color}}\label{ap:NS-FSTA-3-color-correctness}

\setcounterref{theoremduplicate}{th:NS-FSTA-SSY-3color-optimal}
\addtocounter{theoremduplicate}{-1}

\begin{theoremduplicate}
[Reprint of Theorem~\ref{th:NS-FSTA-SSY-3color-optimal} on 
page~\pageref{th:NS-FSTA-SSY-3color-optimal}]
\label{duplicate:th:NS-FSTA-SSY-3color-optimal}
Under VD+FUD and Rigid movement,
\textsc{NS-FSTA-3-color} solves Rendezvous for two $\FS$
robots under \SSY\ from the designated initial color \(A\).
Consequently, the non-quasi-self-stabilizing color complexity
of Rendezvous in this setting is exactly three.
\end{theoremduplicate}

\begin{proof}
Let the two robots be \(r\) and \(s\), and suppose first that
they occupy distinct positions.
Let their initial physical distance be \(D>0\).
Initially,
$c_r(0)=c_s(0)=A.$

Consider the first \SSY\ round.

Case 1: Exactly one robot is activated.

Without loss of generality, suppose that only \(r\) is activated.
Let
$
    i=\kappa_r^{(2)}(D).
$
Since \(r\) has color \(A\), it changes its color to \(B_i\)
and moves to the current position of \(s\).
Robot \(s\) does not move in this round.
Since movements are Rigid, \(r\) reaches \(s\), and Rendezvous
is achieved.

Case 2: Both robots are activated.

Let
$
    i=\kappa_r^{(2)}(D)
    \:\text{and}\:
    j=\kappa_s^{(2)}(D).
$
The values \(i\) and \(j\) need not be equal, since the two robots
need not agree on their unit distances.

Both robots have color \(A\), and hence they execute
$
    r:\quad A\longrightarrow B_i,\ \mathsf{Other},
$
and
$
    s:\quad A\longrightarrow B_j,\ \mathsf{Other}.
$
Thus, they exchange their positions.
The physical distance remains \(D\), and the resulting colors are
\(B_i\) and \(B_j\).

If both robots are activated in the next round, then neither
robot's distance class has changed:
$
    \kappa_r^{(2)}(D)=i,
    \:
    \kappa_s^{(2)}(D)=j.
$
Hence both robots move to the midpoint and change their colors
to \(A\).  Since the movement is Rigid, they reach the same
midpoint and achieve Rendezvous.

It remains to consider the case in which only one robot is
activated before the other one performs its next non-null move.
Without loss of generality, suppose that \(r\) is activated first.
Since
$
    \kappa_r^{(2)}(D)=i,
$
robot \(r\) changes its color from \(B_i\) to \(A\) and moves to
the midpoint.
Consequently, the physical distance becomes \(D/2\).

By Lemma~\ref{lem:periodic-distance-class-shift},
$
    \kappa_s^{(2)}(D/2)
    =
    1-\kappa_s^{(2)}(D)
    =
    1-j
    \neq j.
$
Robot \(s\) still has color \(B_j\).
Therefore, whenever \(s\) is activated while the distance remains
\(D/2\), it keeps color \(B_j\) and stays at its current position.

Robot \(r\), on the other hand, now has color \(A\).
By fairness, \(r\) is activated again after finitely many rounds.
When this happens, it changes its color to
$
    B_{\kappa_r^{(2)}(D/2)}
$
and moves to the current position of \(s\).
Since \(s\) stays at that position and the movement is Rigid,
Rendezvous is achieved.

The argument is symmetric if \(s\) is the first robot activated
after the position exchange.

Finally, if the two robots initially occupy the same position,
Rendezvous has already been achieved.
Therefore every fair \SSY\ execution starting with both robots
in color \(A\) achieves Rendezvous in finite time.

Moreover, the two-color impossibility of Theorem~\ref{thm:ssynch-known-lower-bounds}~(1) gives the
matching lower bound.
Therefore, three colors are both necessary and sufficient.
\end{proof}
%
%

%
\section{Applicability and Limitations of the \texorpdfstring{\FC}{FCOM} Interpretation
of \textsc{NS-FSTA-3-color}}\label{ap:NS-FSTA-3-color-on-FCOM}
%
The three-color transition graph also admits an \(\FC\)
interpretation, but it does not work under FUD alone.
The reason is that, in the $\FS$ interpretation, a color \(B_i\)
stores the distance class previously measured by the same robot,
whereas in the \FC interpretation, \(B_i\) communicates the
distance class measured by the other robot.
Without agreement on the unit distance, the two indices need
not coincide at the same physical distance.

Indeed, choose a physical distance \(D>0\) and fixed unit
distances such that
\[
    \kappa_r^{(2)}(D)=0,
    \qquad
    \kappa_s^{(2)}(D)=1.
\]
Start with both robots in color \(A\), and let both robots take
their initial snapshots before either of them performs its
corresponding move.
They then change their colors to \(B_0\) and \(B_1\),
respectively, and move to the initial position of the other
robot.
After both Rigid moves have been completed, the robots have
exchanged their positions, the physical distance is still \(D\),
and their colors are \(B_0\) and \(B_1\).

Robot \(r\) observes \(B_1\), but its current distance class is
\(0\), and hence it stays while keeping its current color.
Symmetrically, robot \(s\) observes \(B_0\), but its current
distance class is \(1\), and hence it also stays while keeping
its current color.
Therefore,
\[
    (B_0,B_1;D)
    \longrightarrow
    (B_0,B_1;D)
    \longrightarrow
    \cdots
\]
is a fair non-Rendezvous execution.
Thus the direct \(\FC\) interpretation is not correct under
FUD without agreement on the unit distance, even under
CM-atomic \ASY\ and Rigid movement.

If the robots agree on their unit distance, however, then
$
    \kappa_r^{(2)}(D)
    =
    \kappa_s^{(2)}(D)
$
for every physical distance \(D>0\).
The phase mismatch exhibited above can therefore never occur,
and the same three-color graph solves Rendezvous for
\(\FC\) robots under CM-atomic \ASY\ and Rigid movement.

After the initial swap, both visible indices still correspond to 
the unchanged physical distance.
If both robots react before the distance changes, they move to
the same midpoint.
If only one robot completes its midpoint move first, the
distance class is reversed: the midpoint robot stays, while
the peer observes color \(A\), publishes the new class, and
moves to the midpoint robot.
$CM$-atomicity prevents an incompatible intervening observation.

Notice that the common-unit assumption does not by itself make
the same protocol correct under $M$-atomic \ASY.
The $CM$-atomic assumption is still used to prevent a robot from
observing a newly published phase color together with the
position preceding the associated move.
A six-color non-quasi-self-stabilizing Rendezvous algorithm
for $\FS$ robots under \SSY, Rigid movement, Fixed
Disorientation, and FUD is known from~\cite{FSVY}.
Our algorithm strengthens this result in two respects.
First, it removes the Fixed Disorientation assumption and works
under VD+FUD, where the direction and orientation of the local
coordinate system may change arbitrarily from one \Look to the next.
Second, it reduces the number of colors from six to three.
The key idea is to replace the directional information used in
the previous algorithm by a one-bit periodic encoding of the
distance.  Since the unit distance of each robot is fixed under
FUD, halving the physical distance reverses this bit independently
of the current orientation of the local coordinate system.
Together with the matching two-color impossibility result, this
shows that three colors are optimal under VD+FUD.

\section{Pseudocode for
\texorpdfstring{\textsc{NS-FSTA-4-color}}
{NS-FSTA-4-color}}
\label{app:NS-FSTA-4-color}

In the \(\FS\) model, robot \(x\) observes its own color;
that is, \(\obs_x(t)=c_x(t)\).
All phase indices are taken modulo \(3\).

\begin{algorithm}[H]
\caption{\textsc{NS-FSTA-4-color}}
\label{alg:NS-FSTA-4-color}
\begin{algorithmic}[1]
\Require snapshot \SS\(_x(t)\) and own color \(c_x(t)\)

\If{\(dis(t)=0\)}
    \State \(\mathsf{Keep}\)
    \State stay
\Else
    \State \(i\gets\kappa_x^{(3)}(t)\)

    \If{\(c_x(t)=I\)}
        \State \(c_x\gets C_i\)
        \State move to \(\mathsf{Other}_x(t)\)

    \ElsIf{\(c_x(t)=C_j\) for some \(j\in\mathbb Z_3\)}
        \State \(\Delta\gets(i-j)\bmod 3\)

        \If{\(\Delta=0\)}
            \State \(c_x\gets C_{(j+1)\bmod 3}\)
            \State move to \(\mathsf{Midpoint}_x(t)\)

        \ElsIf{\(\Delta=1\)}
            \State \(\mathsf{Keep}\)
            \State stay

        \Else
            \Comment{\(\Delta=2\)}
            \State \(\mathsf{Keep}\)
            \State move to \(\mathsf{Other}_x(t)\)
        \EndIf
    \EndIf
\EndIf
\end{algorithmic}
\end{algorithm}

\section{Pseudocode for
\texorpdfstring{\textsc{NS-FCOM-4-color}}
{NS-FCOM-4-color}}
\label{app:NS-FCOM-4-color}

In the \(\FC\) model, robot \(x\) observes the color displayed
by the other robot; that is,
\(\obs_x(t)=c_{\bar{x}}(t)\).
Every assignment to \(c_x\) below updates the light of \(x\),
not the observed light of the peer.
All phase indices are taken modulo \(3\).

\begin{algorithm}[H]
\caption{\textsc{NS-FCOM-4-color}}
\label{alg:NS-FCOM-4-color}
\begin{algorithmic}[1]
\Require snapshot \SS\(_x(t)\) and observed color \(\obs_x(t)\)

\If{\(dis(t)=0\)}
    \State \(\mathsf{Keep}\)
    \State stay
\Else
    \State \(i\gets\kappa_x^{(3)}(t)\)

    \If{\(\obs_x(t)=I\)}
        \State \(c_x\gets C_i\)
        \State move to \(\mathsf{Other}_x(t)\)

    \ElsIf{\(\obs_x(t)=C_j\) for some \(j\in\mathbb Z_3\)}
        \State \(\Delta\gets(i-j)\bmod 3\)

        \If{\(\Delta=0\)}
            \State \(c_x\gets C_{(j+1)\bmod 3}\)
            \State move to \(\mathsf{Midpoint}_x(t)\)

        \ElsIf{\(\Delta=1\)}
            \State \(\mathsf{Keep}\)
            \State move to \(\mathsf{Other}_x(t)\)

        \Else
            \Comment{\(\Delta=2\)}
            \State \(\mathsf{Keep}\)
            \State stay
        \EndIf
    \EndIf
\EndIf
\end{algorithmic}
\end{algorithm}

\section{Correctness of the Four-Color Phase Algorithms under Atomic \ASY}
\label{aP:subsec:four-color-atomic}

We first present a four-color non-quasi-self-stabilizing
Rendezvous algorithm for \(\FS\) robots under $M$-atomic
\ASY.
We then adapt the same periodic-phase mechanism to the
\(\FC\) model under $CM$-atomic \ASY.

The two algorithms use the same color set, the same periodic
three-way distance classes, and the same swap initialization.
They do not, however, use exactly the same phase-to-action
mapping.
The two nonzero phase offsets play opposite roles because a
phase color represents private temporal information in the
\(\FS\) interpretation, but information communicated by the
peer in the \(\FC\) interpretation.

The \(\FS\) algorithm, denoted by \textsc{NS-FSTA-4-color}, uses
$    \{I,C_0,C_1,C_2\},$
where \(I\) is the designated initial color.
Since the robots are in the \(\FS\) model, each robot observes
its own color.

The algorithm uses the periodic three-way distance classes
$
    \kappa_x^{(3)}(t)
    =
    \kappa_{x,2}^{(3)}(t)
    \in\mathbb Z_3
$
introduced in Section~\ref{subsec:algorithmic-notation}.
Recall that
\begin{equation}
\label{eq:four-color-halving}
    \kappa_x^{(3)}(D/2)
    =
    \kappa_x^{(3)}(D)-1
    \pmod 3.
\end{equation}

If robot \(x\) has color \(I\), it sets
$
    i=\kappa_x^{(3)}(t),
$
changes its color to \(C_i\), and moves to the current position
of the other robot.

Suppose next that \(x\) has color \(C_j\), and define
$
    \Delta_x(t,j)
    =
    \kappa_x^{(3)}(t)-j
    \pmod 3.
$
Robot \(x\) acts according to
\[
\begin{array}{c|c|c}
\Delta_x(t,j)
    &\text{light update}&\text{destination}\\ \hline
0&C_{j+1}&\mathsf{Midpoint}_x(t)\\
1&\mathsf{Keep}&\mathsf{Stay}_x(t)\\
2&\mathsf{Keep}&\mathsf{Other}_x(t).
\end{array}
\tag{\(\mathcal A^{\FS}_4\)}
\label{tab:four-color-FSTA-rules}
\]
All indices are taken modulo \(3\).

The three cases have a temporal interpretation.
Color \(C_j\) records a distance class previously measured by
the robot itself.
If the current class still agrees with the recorded index, the
robot moves to the midpoint and advances the recorded phase.
After a one-sided midpoint move, the mover obtains offset \(1\)
and waits, whereas the robot that has not moved obtains offset
\(2\) and moves to the waiter.

The graph representation of \textsc{NS-FSTA-4-color} is shown
in Figure~\ref{fig:new-algorithm-graphs}~(b), and its
pseudocode is given in
Appendix~\ref{app:NS-FSTA-4-color}.

In the following correctness proofs, we have to consider the transient coincidence during the initial swap.
Suppose that both robots have taken their \(I\)-snapshots at
physical distance \(D\), so that both have incomplete cycles
based on the same initial geometry.
Each such cycle computes the peer's old position as its
destination.

If one robot completes its swap move first, the robots may
temporarily occupy the same position.
This need not yet be terminal Rendezvous, since the other robot
may subsequently execute its pending move to the first robot's
old position.

Any \Look performed while the physical distance is zero is
harmless: the zero-distance rule has priority, and hence the
activated robot neither changes its color nor moves.
Thus, the robot that has arrived first may complete arbitrarily
many null cycles without affecting the pending swap.

In the \(\FS\) proof, M-atomicity prevents any \Look during the
peer's pending \Move.
In the \(\FC\) proof, CM-atomicity prevents any \Look from the
peer's \Compute operation until the completion of its pending
\Move.
Consequently, once the second pending Rigid move is completed,
the robots have exchanged positions, their physical distance
is again \(D\), and their phase colors are exactly those
computed from the original \(I\)-snapshots.

\begin{lemma}
\label{lem:four-color-FSTA}
Under VD+FUD and Rigid movement,
\textsc{NS-FSTA-4-color} solves Rendezvous for two $\FS$ robots
under M-atomic \ASY\ from the designated initial color \(I\).
\end{lemma}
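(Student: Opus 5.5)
The argument follows the structure of the \SSY\ proof of Theorem~\ref{th:NS-FSTA-SSY-3color-optimal}, with an invariant that survives M-atomic asynchrony. Let \(D>0\) be the initial distance and split on the order of the two \(I\)-snapshots.

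\emph{Initialization.} Suppose one robot, say \(r\), completes its \(I\)-cycle before \(s\) takes its \(I\)-snapshot. Then \(s\) is stationary during \(r\)'s Rigid move to \(\mathsf{Other}\); in \FS\ a \Look by \(s\) during that move is excluded by M-atomicity. So \(r\) reaches \(s\), and from then on every \Look at distance zero is null. Otherwise both \(I\)-snapshots precede both moves. By the transient-coincidence discussion preceding the lemma, once both pending moves finish the robots have swapped positions at distance \(D\) with colors \(C_i\) and \(C_j\), where \(i=\kappa_r^{(3)}(D)\) and \(j=\kappa_s^{(3)}(D)\). Because robots are in \FS, each robot's offset depends only on its own stored index. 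Hence both offsets are \(0\): the configuration is \emph{phase-consistent}.

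\emph{Invariant and progress.} After a Rigid move in M-atomic \ASY\ completes, the next \Look sees a stable geometry, though the peer may hold a pending cycle computed from an earlier snapshot. I will track states described by the distance \(D'\), each robot's stored phase, and the possible pending action. From the phase-consistent state at distance \(D\) the cases are as follows.

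\(H/H\): both robots snapshot before either moves, so both compute the same midpoint. The first robot to arrive then sees distance \(D/2\) with stored index advanced by one, while its class has also dropped by one. So its offset is \(0\) again and it would move to the midpoint of the current segment, which is a problem. I therefore have to check carefully. After \(r\) moves, \(\kappa_r(D/2)=i-1\) and \(r\)'s stored index is \(i+1\), giving offset \(-2\equiv 1\), so \(r\) stays. The \(H/H\) case is thus fine: \(r\) waits, \(s\)'s pending move lands at the same midpoint, and Rendezvous follows.

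\(H\) alone: \(r\) moves to the midpoint while \(s\) has no pending cycle. Then \(r\) has offset \(1\) and waits, while \(s\) has offset \(j-1-j\equiv 2\) and moves to \(\mathsf{Other}\). M-atomicity ensures \(s\)'s snapshot is taken after \(r\)'s move ends. Since \(r\) keeps computing \(\mathsf{Stay}\) at distance \(D/2\), the Rigid move of \(s\) reaches \(r\).

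\emph{Main obstacle.} The delicate part is the stale-snapshot interleavings. One is \(s\) holding a pending midpoint computed at distance \(D\) while \(r\) has already moved. M-atomicity places \(s\)'s \Look either before \(r\)'s move began, which is the \(H/H\) case, or after it ended, which is the \(H\)-alone case; no \Look can fall inside a move. I must also confirm that the waiter never changes color or position until contact, because offset \(1\) is \(\mathsf{Keep}/\mathsf{Stay}\) and distance changes only through the approaching robot's move. Finally, the robot moving to \(\mathsf{Other}\) may still hold a color whose offset would become \(0\) only at a distance it never measures before arrival.

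Fairness then guarantees the needed activations, which completes the case enumeration.
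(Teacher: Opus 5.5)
Your proof is correct and follows essentially the same route as the paper. It has the same components: the $M$-atomic dichotomy on the initial $I$-snapshots, the swap into a phase-consistent configuration at distance $D$, and a split on whether the peer's \Look precedes the first midpoint move (the first arriver has offset $(i-1)-(i+1)\equiv 1$ and waits at $m$) or follows it (the peer has offset $(j-1)-j\equiv 2$ and moves to $m$), with $M$-atomicity excluding any \Look during a move. You should delete the momentary claim that the first arriver again has offset $0$, since your own corrected computation refutes it.
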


\begin{proof}
Let the robots be \(r\) and \(s\), initially at distinct
positions at physical distance \(D>0\), with
$
    c_r=c_s=I.
$

Since the model is $\FS$, each robot observes its own color.
Moreover, under $M$-atomic \ASY, no \Look can occur while the
other robot is moving.
Notice also that a color change of an $\FS$ robot is invisible
to the other robot.  Hence the interval between \Compute and
\Move-Begin creates no additional observation that must be
considered.

Consider first an activation in color \(I\).
If only one robot has taken its \(I\)-snapshot before the first
initial \Move begins, \(M\)-atomicity prevents the peer from
performing a \Look during that \Move.
Hence the mover reaches the stationary peer and Rendezvous is
achieved.

Otherwise, both robots have taken their \(I\)-snapshots before
either initial \Move begins.


Let
$
    i=\kappa_r^{(3)}(D),
    \:
    j=\kappa_s^{(3)}(D).
$
The two robots set their colors to \(C_i\) and \(C_j\),
respectively, and each has an incomplete swap cycle whose
destination is the old position of the other robot.
By the preceding observation, a temporary coincidence caused
by the first completed swap move is harmless, even if the
robot that arrives first is activated repeatedly at distance
zero.

After both pending swap moves have been completed, the robots
have exchanged positions and the physical distance is again
\(D\).

Thus,
\[
    \kappa_r^{(3)}(D)=i,
    \qquad
    \kappa_s^{(3)}(D)=j.
\]


Consequently, every action computed from a fresh snapshot at
physical distance \(D\) is a midpoint action.
Let \(r\) be the first robot to complete such a midpoint move,
and let \(m\) be the corresponding physical midpoint.
If the peer has already begun its pending midpoint move,
\(M\)-atomicity prevents any fresh Look of \(r\) until that
move is completed, at which point Rendezvous is achieved.
Otherwise, the peer is still at its old position and the
physical distance is \(D/2\).

We distinguish whether \(s\) has already performed a \Look of
the phase-consistent configuration at distance \(D\) before
\(r\) begins its midpoint move.

\noindent
{\bf Case 1: \(s\) has already performed such a \Look.}

The incomplete cycle of \(s\), if not yet completed, is based
on the same phase-consistent configuration at distance \(D\).
Hence \(s\) computes a midpoint move to the same physical point
\(m\) and changes its color from \(C_j\) to \(C_{j+1}\).

After \(r\) reaches \(m\), it may be activated arbitrarily many
times before \(s\) completes its pending midpoint move.
At this time,
$
    \kappa_r^{(3)}(D/2)=i-1,
$
whereas \(r\) has color \(C_{i+1}\).  Thus,
$
    (i-1)-(i+1)
    \equiv 1 \pmod 3.
$
Therefore, every such activation makes \(r\) keep color
\(C_{i+1}\) and stay at \(m\).

Once \(s\) begins its pending \Move, \(M\)-atomicity prevents
\(r\) from performing a \Look while \(s\) is moving.
Since the movement of \(s\) is Rigid, \(s\) eventually reaches
the same midpoint \(m\), and Rendezvous is achieved.

\noindent
{\bf Case 2: \(s\) has not performed such a \Look.}

By \(M\)-atomicity, \(s\) cannot perform a $Look$ while \(r\)
is executing its midpoint move.  Hence the next Look of \(s\)
is taken after \(r\) has reached \(m\), when the physical
distance is \(D/2\).

Before this $Look$ occurs, \(r\) may again be activated
arbitrarily many times.  However, as shown above, its phase
difference is always
$
    (i-1)-(i+1)
    \equiv1\pmod3;
$
hence every such activation is null: \(r\) keeps color
\(C_{i+1}\) and stays at \(m\).
Thus the configuration remains unchanged until \(s\) performs
its next \Look.

By fairness, \(s\) eventually performs this $Look$.
Since \(s\) still has color \(C_j\), and
$
    \kappa_s^{(3)}(D/2)=j-1,
$
its phase difference is
$
    (j-1)-j
    \equiv2\pmod3.
$
Therefore \(s\) keeps color \(C_j\) and computes a move to the
current position \(m\) of \(r\).

Even if the beginning of this \Move is delayed, every intervening
activation of \(r\) is again null, so \(r\) remains at \(m\).
Once \(s\) begins moving, \(M\)-atomicity prevents \(r\) from
performing a \Look during the \Move.
By rigidity, \(s\) reaches \(m\), and Rendezvous is achieved.

The argument is symmetric if \(s\) is the first robot to
complete a midpoint move.
Thus every fair \(M\)-atomic \ASY\ execution from the
designated initial color \(I\) attains Rendezvous in finite time.
\end{proof}



\noindent
{\bf Adaptation to the \(\FC\) model}

We now adapt the periodic-phase mechanism of
\textsc{NS-FSTA-4-color} to the \(\FC\) model.
The color set and the swap initialization remain unchanged,
but the actions associated with the two nonzero offsets are
reversed.

The reason is that a phase color has a different meaning in the
two models.
In the \(\FS\) interpretation, \(C_j\) is read by the robot
that previously stored it.
In the \(\FC\) interpretation, \(C_j\) is read by the peer and
therefore communicates the distance class measured by the robot
displaying it.

This difference becomes visible after a one-sided midpoint
move.
In the \(\FS\) algorithm, the midpoint mover compares the new
distance class with its own advanced phase color and obtains
offset \(1\).
In the \(\FC\) algorithm, the midpoint mover observes the
peer's old phase color and obtains offset \(2\).
Conversely, the peer observes the mover's advanced phase color
and obtains offset \(1\).
Thus, to make the midpoint mover stay and the peer approach it,
the \(\FC\) algorithm maps offsets \(1\) and \(2\) to
\(\mathsf{Other}\) and \(\mathsf{Stay}\), respectively.

We denote the resulting algorithm by
\textsc{NS-FCOM-4-color}.
When robot \(x\) observes \(I\), it sets
$
    i=\kappa_x^{(3)}(t),
$
changes its color to \(C_i\), and moves to the current position
of the other robot.

When \(x\) observes \(C_j\), it computes
$
    \Delta_x(t,j)
    =
    \kappa_x^{(3)}(t)-j
    \pmod 3
$
and applies the following rule:
\[
\begin{array}{c|c|c}
\Delta_x(t,j)
    &\text{light update}&\text{destination}\\ \hline
0&C_{j+1}&\mathsf{Midpoint}_x(t)\\
1&\mathsf{Keep}&\mathsf{Other}_x(t)\\
2&\mathsf{Keep}&\mathsf{Stay}_x(t).
\end{array}
\tag{\(\mathcal A^{\FC}_4\)}
\label{tab:four-color-FCOM-rules}
\]

Thus, the two four-color algorithms use the phase-to-action
mappings
\[
\begin{array}{c|ccc}
 &0&1&2\\ \hline
\textsc{NS-FSTA-4-color}&H&S&O\\
\textsc{NS-FCOM-4-color}&H&O&S.
\end{array}
\]
Although the two nonzero actions are reversed, phase
complementarity implies that both mappings generate the same
set of geometric action pairs,
\[
    H/H,\qquad O/S,\qquad S/O,
\]
up to symmetry.

By Lemma~\ref{lem:phase-complementarity}, whenever two visible
phase colors are consistent with the same physical distance,
the ordered pair of offsets is one of
$
    (0,0),\:(1,2),\:(2,1).
$
Applying the phase-to-action mapping of
\textsc{NS-FCOM-4-color} yields
$
    H/H,\: O/S,\: S/O,
$
respectively.
This is the geometric core of the following correctness result.

\begin{lemma}
\label{lem:four-color-FCOM-CM}
Under VD+FUD and Rigid movement,
\textsc{NS-FCOM-4-color} solves Rendezvous for two \(\FC\)
robots under CM-atomic \ASY\ from the designated initial
color \(I\).
\end{lemma}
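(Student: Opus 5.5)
I would follow the structure of the proof of Lemma~\ref{lem:four-color-FSTA}, replacing private phase memory by communicated phase colors and \(M\)-atomicity by \(CM\)-atomicity.

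\textbf{Initialization.} Start from \((I,I)\) at distance \(D>0\). If one robot's \(I\)-cycle reaches \Compute before the peer's \(I\)-\Look, \(CM\)-atomicity forbids any \Look of the peer until that \Move ends. The Rigid move to \(\mathsf{Other}\) therefore reaches the stationary peer, and Rendezvous holds.

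Otherwise both \(I\)-snapshots precede both \Compute operations. Both robots then carry pending swap moves to the peer's old position, and any transient coincidence yields only null cycles by the zero-distance rule. \(CM\)-atomicity also prevents any \Look between a robot's \Compute and the end of its swap \Move. So after both swaps complete, the distance is again \(D\) and the colors are \(C_i,C_j\) with \(i=\kappa_r^{(3)}(D)\) and \(j=\kappa_s^{(3)}(D)\).

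\textbf{Phase-consistent configuration at distance \(D\).} This configuration is exactly the hypothesis of Lemma~\ref{lem:phase-complementarity}. The offset pair is \((0,0),(1,2)\) or \((2,1)\), which under the \(\FC\) mapping gives \(H/H\), \(O/S\) or \(S/O\).

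In the \(O/S\) case the stayer is null and keeps its color, so every repeated \Look of the stayer sees the same configuration and remains null. The mover's Rigid move, protected by \(CM\)-atomicity, reaches it.

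In the \(H/H\) case, let \(r\) be the first robot to start its midpoint \Move.
\begin{itemize}
\item If \(s\) already took a \Look at distance \(D\), that cycle is also \(H\) to the same midpoint \(m\). Any \Look of \(r\) after its move but before \(s\) moves observes \(C_j\) at distance \(D/2\), giving offset \((i-1)-j\). Since \(i-j=0\), this offset is \(2\), so \(r\) stays. When \(s\) arrives, Rendezvous holds.
\item If \(s\) did not yet \Look, then \(CM\)-atomicity places its next \Look after \(r\)'s move. It sees \(C_{i+1}\) at distance \(D/2\), with class \(j-1\) and offset \(j-1-(i+1)\equiv 1\) (using \(i\equiv j\)). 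So \(s\) moves \(\mathsf{Other}\) to the waiting \(r\), and \(r\)'s intermediate activations stay null with offset \(2\).
\end{itemize}
This is the intended ``mover waits'' mechanism, and it uses Lemma~\ref{lem:periodic-distance-class-shift}.

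\textbf{Main obstacle.} The hard part is the interleaving bookkeeping: stale cycles whose \Look precedes a peer's color change, combined with repeated null cycles. The key invariant is that, under \(CM\)-atomicity, the displayed color and the position of each robot change together from any observer's viewpoint. So every fresh snapshot shows a color together with the position at which it was published.

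The delicate case is the \(H/H\) case with offsets \((0,0)\) but where the robots' \Look operations straddle \(r\)'s move. I would check it by enumerating, for each robot, whether its current pending cycle is based on the distance-\(D\) or the distance-\(D/2\) configuration. In each combination I would verify that the resulting action pair is \(H/H\) to a common point, \(O/S\), or null/\(O\).

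I would also note that the \(O/S\) outcome at distance \(D/2\) must not be disturbed by a stale \(H\) cycle of the waiter. This is excluded because a robot that has started its \Compute finishes its \Move before the peer's next \Look.
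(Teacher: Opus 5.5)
Your proposal is correct and follows the paper's proof essentially step for step. It uses the same initialization dichotomy and swap argument, phase complementarity yielding $H/H$, $O/S$, $S/O$ at distance $D$, and the same two subcases in the equal-phase case, where at distance $D/2$ the midpoint mover obtains offset $2$ and waits while the peer obtains offset $1$ and approaches. The differences are cosmetic: you choose $r$ as the first robot to \emph{start} rather than \emph{complete} its midpoint move, and you state explicitly the invariant, which the paper uses only implicitly, that under $CM$-atomicity every fresh snapshot shows each robot's color together with the position at which it was published.
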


\begin{proof}
Let the robots initially occupy distinct positions at physical
distance \(D>0\), and let both robots initially have color \(I\).

Suppose first that one robot completes its rule triggered by
\(I\) before the other robot takes an \(I\)-snapshot.
The moving robot reaches the stationary peer, and Rendezvous is
achieved.

Otherwise, both robots have taken their \(I\)-snapshots before
the first corresponding \Compute operation.
Let
$
    i=\kappa_r^{(3)}(D),
    \:
    j=\kappa_s^{(3)}(D).
$
Both have incomplete cycles based on their initial
\(I\)-snapshots.
These cycles set their colors to \(C_i\) and \(C_j\),
respectively, and compute move-to-other actions.
By the preceding observation, if one swap move is completed
first, any temporary coincidence and any permitted zero-distance
cycles of the robot that arrives first are harmless.
After both pending swap moves have been completed, the physical
distance is again \(D\), and the visible phase colors are
consistent with that distance:
$
    i=\kappa_r^{(3)}(D),
    \:
    j=\kappa_s^{(3)}(D).
$

By Lemma~\ref{lem:phase-complementarity}, the possible ordered
pairs of phase offsets are
$
    (0,0),\: (1,2),\: (2,1).
$
Since \textsc{NS-FCOM-4-color} maps offsets \(0,1,2\) to
\(H,O,S\), respectively, the corresponding geometric action
pairs are
$
    H/H,\: O/S,\: S/O.
$

Suppose first that the phase difference is nonzero.
Exactly one robot moves to the current position of the other
robot, while the other robot stays.
Repeated activations of the stationary robot do not change the
configuration.
By fairness and rigidity, the moving robot eventually reaches
the stationary one, and Rendezvous is achieved.

It remains to consider the case of equal phase indices.
Both robots then compute midpoint moves whenever their snapshots
are taken from the same equal-phase configuration.
Let \(r\) be the first robot to complete such a midpoint move,
and let \(m\) be the corresponding physical midpoint.
If \(s\) has already begun Compute for its pending midpoint
move, \(CM\)-atomicity prevents any fresh Look of \(r\)
until \(s\) reaches \(m\).
Otherwise, \(s\) is still at its old position and displays
\(C_i\), so the physical distance is \(D/2\).
And 
\(r\) has
color \(C_{i+1}\), and the peer still has color \(C_i\).
We distinguish whether \(s\) has an incomplete cycle whose \Look
was performed in the original equal-phase configuration.
If such a cycle exists, its destination is the same physical
midpoint \(m\).
If \(s\) has not yet performed its \Compute operation, every
fresh \Look of \(r\) gives
$
    \kappa_r^{(3)}(D/2)-i
    =
    (i-1)-i
    =
    2
    \pmod3,
$
and hence \(r\) stays at \(m\).
Once \(s\) performs \Compute, CM-atomicity prevents \(r\) from
performing another \Look until \(s\) completes its move.
Since the move of \(s\) is Rigid and has destination \(m\),
Rendezvous is achieved.

Otherwise, \(s\) takes its next \Look after \(r\) has reached
\(m\).
At distance \(D/2\), the two fresh offsets are
\(
\begin{aligned}
    \Delta_r
        &=(i-1)-i
          =2
          \pmod3,\\
    \Delta_s
        &=(i-1)-(i+1)
          =1
          \pmod3.
\end{aligned}
\)
Thus \(r\) stays at \(m\), whereas \(s\) moves to \(m\).
Any repeated activation of \(r\) before this move is completed
is again null.
By fairness and rigidity, \(s\) eventually reaches \(r\).

The case in which \(s\) is the first robot to complete the
midpoint move is symmetric.
Hence every fair $CM$-atomic \ASY\ execution from the
designated initial color \(I\) achieves Rendezvous in finite
time.
The zero-distance rule then makes the robots remain at their
common position.
\end{proof}

\section{Pseudocode for \textsc{NS-FCOM-6-color}}
\label{app:NS-FCOM-6-color}

\begin{algorithm}[H]
\caption{\textsc{NS-FCOM-6-color} for \(\FC\) under \ASY}
\label{alg:NS-6-color}
\begin{algorithmic}[1]
\Require snapshot \SS\(_x(t)\) and observed color \(\obs_x(t)\)

\If{\(dis(t)=0\)}
    \State \(\mathsf{Keep}\)
    \State stay
\Else
    \State \(i\gets\kappa_x^{(3)}(t)\)

    \If{\(\obs_x(t)=I\)}
        \State \(c_x\gets C_i\)
        \State stay

    \ElsIf{\(\obs_x(t)=C_j\) for some \(j\in\mathbb Z_3\)}
        \State \(\Delta\gets(i-j)\bmod 3\)

        \If{\(\Delta=0\)}
            \State \(c_x\gets M\) \Comment{guard a midpoint move}
            \State move to \(\mathsf{Midpoint}_x(t)\)

        \ElsIf{\(\Delta=1\)}
            \State \(c_x\gets M\) \Comment{guard a move-to-other}
            \State move to \(\mathsf{Other}_x(t)\)

        \Else
            \Comment{\(\Delta=2\)}
            \State \(c_x\gets N\) \Comment{become an anchor}
            \State stay
        \EndIf

    \ElsIf{\(\obs_x(t)=M\)}
       \State \(\mathsf{Keep}\) 
        \State stay

    \ElsIf{\(\obs_x(t)=N\)}
        \State \(c_x\gets M\)
        \State move to \(\mathsf{Other}_x(t)\)
    \EndIf
\EndIf
\end{algorithmic}
\end{algorithm}


\section{Correctness of \textsc{NS-FCOM-6-color} under Full \ASY}\label{app:NS-FCOM-6-color-correctness}
\setcounterref{lemmaduplicate}{lem:six-color-moving-guard}
\addtocounter{lemmaduplicate}{-1}

\begin{lemmaduplicate}
[Moving guard;
reprint of Lemma~\ref{lem:six-color-moving-guard} on 
page~\pageref{lem:six-color-moving-guard}]
\label{duplicate:lem:six-color-moving-guard}
Suppose that a robot \(r\) computes a non-null move in
\textsc{NS-FCOM-6-color}.
At the end of this \Compute operation, \(r\) displays color \(M\).

Until \(r\) completes the move, every fresh \Look performed by
the other robot \(s\) observes \(M\) and hence generates a
null move.
Consequently, while \(M\) remains visible, no fresh snapshot of
\(s\) can generate a new non-null move.
Robot \(s\) may have at most one incomplete cycle whose \Look
occurred before \(M\) became visible.
\end{lemmaduplicate}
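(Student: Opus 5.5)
The plan is to read the lemma directly off the transition table of \textsc{NS-FCOM-6-color} and the sequential structure of \LCM\ cycles. I would not do any geometric analysis, since the lemma concerns only which colors are visible and which destinations are computed.

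\textbf{Step 1: every non-null move is guarded by \(M\).} I would enumerate all rules whose destination is not \(\mathsf{Stay}\). There are exactly three:
\begin{itemize}
\item offset \(0\) on a phase color, with destination \(\mathsf{Midpoint}\);
\item offset \(1\) on a phase color, with destination \(\mathsf{Other}\);
\item observed color \(N\), with destination \(\mathsf{Other}\).
\end{itemize}
Each of these writes \(M\). By the model, the light update is committed atomically at the end of \Compute, before the corresponding \Move begins. Hence \(r\) displays \(M\) from the end of that \Compute onward.

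\textbf{Step 2: \(M\) stays visible until the move is complete.} Only \(r\) can change its own light, and \(r\) executes its cycles sequentially. So no new \Compute of \(r\) can occur before the current \Move finishes. The zero-distance rule and the \(M\)-observation rule are both \(\mathsf{Keep}\) anyway, but sequentiality alone already suffices here.

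\textbf{Step 3: fresh snapshots of \(s\) are null.} In \FC, any \Look of \(s\) taken in this interval observes \(c_r=M\). There are two sub-cases.
\begin{itemize}
\item If the distance is positive, the \(M\)-rule gives \((\mathsf{Keep},\mathsf{Stay})\).
\item If the distance is zero, which can happen transiently since \(r\) may pass through or arrive at \(s\), the zero-distance priority rule gives \((\mathsf{Keep},\mathsf{Stay})\) as well.
\end{itemize}
In both sub-cases, no fresh snapshot of \(s\) produces a non-null move.

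\textbf{Step 4: at most one stale cycle of \(s\).} Because \(s\) executes its \LCM\ cycles sequentially, at the instant \(M\) becomes visible \(s\) is inside at most one cycle. Any cycle of \(s\) whose \Look precedes that instant must be this current one. All later cycles of \(s\) begin with a fresh \Look and fall under Step 3.

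The only delicate point, which I regard as the main obstacle, is the boundary case of Step 4. A \Look of \(s\) occurring concurrently with the end of \(r\)'s \Compute has to be classified as either before or after \(M\) becomes visible. I would resolve this by the atomic-commit convention, so that each \Look sees a well-defined light value. If it sees the old color, it is counted as the single stale cycle; if it sees \(M\), Step 3 applies.
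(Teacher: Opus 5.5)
Your proposal is correct and follows the same route as the paper's proof. Both read off the rule table that every non-null rule writes $M$ and that observing $M$ yields $(\mathsf{Keep},\mathsf{Stay})$, and both bound the stale cycles of $s$ by one using sequential \LCM\ execution; your explicit argument that $M$ persists until the move completes, and your handling of the zero-distance case, fill in details the paper leaves implicit.
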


\begin{proof}
Every rule producing a non-null destination sets the moving
robot's color to \(M\) during \Compute.
By the rule for observed color \(M\), every subsequent fresh
\Look of the peer generates a null move.

Since a robot executes its \LCM cycles sequentially, it can have
at most one incomplete cycle whose \Look occurred before \(M\)
became visible.
\end{proof}

\begin{lemma}
\label{ap:lem:six-color-stale-compatible}
Assume that the execution starts from \((I,I)\) at positive
physical distance.
Then a first occurrence of color \(M\) arises after finitely
many activations.

Let \(r\) be the first robot that changes its color to \(M\)
and computes a non-null move.
If the other robot \(s\) has an incomplete cycle whose \Look
occurred before this first occurrence of \(M\), then the action
determined by that snapshot is geometrically compatible with
the move of \(r\).

More precisely, treating the absence of such an action as a
null action, the possible geometric action pairs are, up to
symmetry,
$
    O/S,\: H/H,\: H/S.
$
\end{lemma}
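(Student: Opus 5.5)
The plan is to prove the lemma in three stages: first show that, before any \(M\) appears, the distance is frozen; then show that some \(M\) must eventually appear; and finally enumerate the possible pairs of actions at the moment the first \(M\) is written.

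\textbf{Frozen distance before the first \(M\).} Inspect the rule table of \textsc{NS-FCOM-6-color}: the only rules producing non-null destinations are the offset-\(0\) and offset-\(1\) phase rules and the \(N\)-rule, and each of them writes \(M\). Hence, before the first \Compute that writes \(M\), every completed action is null: the \(I\)-rule, the offset-\(2\) rule, and the \(M\)-rule (which cannot yet fire) all prescribe \(\mathsf{Stay}\). So the physical distance stays equal to \(D\). Consequently, by FUD, every phase color \(C_k\) published before the first \(M\) satisfies \(k=\kappa_x^{(3)}(D)\) for its publisher \(x\), and every snapshot taken before the first \(M\) is a snapshot at distance \(D\).

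\textbf{Eventual appearance of \(M\).} Argue by fairness. Each robot eventually looks while observing \(I\) and publishes its phase color. After that, the visible colors lie in \(\{I,C_i,C_j,N\}\), with \(i=\kappa_r^{(3)}(D)\) and \(j=\kappa_s^{(3)}(D)\).
\begin{itemize}
\item If some robot observes a phase color with offset \(0\) or \(1\), it writes \(M\) immediately.
\item If a robot obtains offset \(2\), it writes \(N\). The peer eventually observes \(N\), unless it first writes \(M\) through a phase rule, and then writes \(M\).
\item A robot observing \(I\) merely publishes its phase color, so the peer's \(I\) eventually disappears.
\end{itemize}
If both robots observe each other's phase colors at distance \(D\), Lemma~\ref{lem:phase-complementarity} excludes the pair \((2,2)\). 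So at most one of them writes \(N\), and the other eventually produces \(M\). One subtle point is a robot that has written \(N\) and is then re-activated while seeing the peer's phase color. It recomputes the same offset \(2\), rewrites \(N\), and stays, so this case is harmless.

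\textbf{Classifying the action pairs.} Let the first \(M\)-writing \Compute belong to \(r\). By Lemma~\ref{lem:six-color-moving-guard}, the only action of \(s\) that can still matter is a single stale cycle whose \Look precedes this \(M\). That \Look was taken at distance \(D\) and observed one of \(I\), \(C_i\), or \(N\); it cannot have observed \(M\), since no \(M\) existed yet.
\begin{itemize}
\item If \(r\) is triggered by observing \(N\), then \(s\) displayed \(N\). The stale cycle of \(s\) therefore either produced that \(N\) (an offset-\(2\) stay) or is a later cycle. A later cycle observes \(I\) (a stay) or \(C_i\) with offset \(2\) (again a stay, since the offset is fixed by the frozen distance). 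This gives \(O/S\).
\item If \(r\) is triggered by \(C_j\) with offset \(0\) or \(1\), then \(r\) performs \(H\) or \(O\). A stale \(I\)-observation by \(s\) gives \(H/S\) or \(O/S\). A stale \(N\)-observation would require that \(r\) had already written \(N\) earlier, i.e.\ had offset \(2\) against \(C_j\); but the offset is fixed at distance \(D\), contradicting offset \(0\) or \(1\). So this case is impossible. A stale \(C_i\)-observation by \(s\) has offsets related to \(r\)'s by Lemma~\ref{lem:phase-complementarity}, since both are computed at distance \(D\) from colors consistent with \(D\). Mapping \((0,0),(1,2),(2,1)\) through \(0\mapsto H\), \(1\mapsto O\), \(2\mapsto S\) yields \(H/H\), \(O/S\), \(S/O\).
\end{itemize}
Up to symmetry, these exhaust \(O/S\), \(H/H\), and \(H/S\).

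The main obstacle I expect is the bookkeeping of stale cycles. One must check that the offset a robot computes is an invariant of the frozen-distance period, so that a stale \(N\) or a stale offset can never mismatch the current one. One must also rule out an \(O/O\) pair or an \(H/O\) pair. \(O/O\) would require both offsets to equal \(1\), which phase complementarity excludes. \(H/O\) would require offsets \(0\) and \(1\), which complementarity also excludes.
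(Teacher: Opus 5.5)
Your proposal is correct and follows essentially the same route as the paper. Both arguments establish the frozen distance before the first $M$, then the eventual appearance of $M$ by fairness, with phase complementarity excluding the offset pair $(2,2)$. Both then split on whether the first $M$-move is triggered by $N$ or by a phase color, mapping $(0,0),(1,2),(2,1)$ to $H/H$, $O/S$, $S/O$. Your explicit exclusion of a stale $N$-observation in the phase-triggered case fills in a step that the paper's ``Otherwise'' leaves implicit.

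One small inaccuracy: it is not true that each robot eventually observes $I$ and publishes its phase color. The second robot may first see the peer's $C_i$ and go directly to $M$ or $N$. Your subsequent bullet cases already cover this situation, so the argument is unaffected.
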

\begin{proof}
Before the first occurrence of \(M\), no non-null move has
changed the physical distance: the rules triggered by \(I\)
and the offset-\(2\) phase rule are both null.
Consequently, every phase color \(C_i\) generated before the
first \(M\) is consistent with the same unchanged physical
distance.

We first observe that \(M\) must occur after finitely many
activations.
Eventually a robot publishes a phase color.
If the peer reacts with offset \(0\) or \(1\), it enters \(M\).
If it reacts with offset \(2\), it enters \(N\), after which
the other robot eventually observes \(N\) and enters \(M\).
If both robots first publish phase colors, phase
complementarity
(Lemma~\ref{lem:phase-complementarity}) again implies that at
least one of them has offset \(0\) or \(1\).

Now consider the first move entering \(M\).

If \(r\) observes \(N\), it computes \(O\).
Let \(C_h\) be the phase color that \(s\) observed with offset
\(2\) when it entered \(N\).
Since the distance is unchanged, every subsequent
\(I\)-snapshot of \(r\) republishes the same color \(C_h\),
and before the first \(M\), no rule can make \(r\) publish a
different phase color.
Moreover, both robots cannot enter \(N\) from phase snapshots,
because this would require both complementary offsets to be
\(2\).
Hence every pre-\(M\) snapshot of \(s\) is either an
\(I\)-snapshot or another offset-\(2\) observation of \(C_h\);
in either case, the resulting action is null.
Thus the action pair is \(O/S\).

Suppose that \(r\) observes a phase color.
If the pre-\(M\) snapshot of \(s\) observes \(I\), the action
of \(s\) is the null initialization action, yielding \(H/S\)
or \(O/S\).

Otherwise, both relevant snapshots involve phase colors
consistent with the unchanged distance.
By Lemma~\ref{lem:phase-complementarity}, the two offsets are
\(
    (0,0),\:(1,2),\:(2,1).
\)
Since the six-color algorithm maps offsets \(0,1,2\) to
\(H,O,S\), respectively, the resulting action pairs are
\(H/H\), \(O/S\), or \(S/O\).

Thus, up to symmetry, the only possible pairs are
\(
    O/S,\: H/H,\: H/S.
\)
\end{proof}

\begin{lemma}
\label{ap:lem:six-color-anchor}
Suppose that the first guarded move is a midpoint move by
robot \(r\), while the compatible pre-\(M\) action of the other
robot \(s\) is null.
Then \textsc{NS-FCOM-6-color} reaches Rendezvous in finite time.
\end{lemma}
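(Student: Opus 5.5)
We follow the $H/S$ case of the proof sketch of Theorem~\ref{th:six-color-FCOM-ASYNCH} and turn it into an event-level argument. The setting is as follows: $r$ is the first robot to enter $M$, and its guarded move is a midpoint move from a snapshot at the unchanged distance $D$. That snapshot showed $C_j$ with $\Delta_r=\kappa_r^{(3)}(D)-j=0$. The pre-$M$ action of $s$ is null, so $s$ remains at its position throughout $r$'s move. We first pin down the color of $s$. Before the first $M$, the light of $s$ can only be $I$, a phase color $C_j$, or $N$. Since $r$'s triggering snapshot observed $C_j$, and $s$ has at most one pending null cycle (Lemma~\ref{lem:six-color-moving-guard}), that pending cycle can only republish $C_j$ (an $I$-snapshot at the unchanged distance $D$) or write $N$ (an offset-$2$ snapshot). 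The other candidates are $\mathsf{Keep}$ or null, and a Keep from observing $M$ is impossible before $M$ is visible. So after that cycle completes, $s$ displays $C_j$ or $N$.

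Phase one covers the time while $r$ is moving. By Lemma~\ref{lem:six-color-moving-guard}, every fresh \Look of $s$ observes $M$ and therefore yields $(\mathsf{Keep},\mathsf{Stay})$. Hence $s$ stays put and its color stays in $\{C_j,N\}$. By rigidity, $r$ reaches the midpoint $m$, and the distance becomes $D/2>0$.

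Phase two is the analysis after $r$ arrives at $m$, split on what $r$ observes. If $r$ observes $C_j$, then Lemma~\ref{lem:periodic-distance-class-shift} gives $\kappa_r^{(3)}(D/2)=j-1$, so the offset is $2$: $r$ writes $N$ and stays. If $r$ observes $N$, it writes $M$ and moves to $s$; $s$ is stationary and sees only $M$ during that move, so rigidity gives Rendezvous. In the first subcase $r$ is now a stationary anchor displaying $N$, and $r$'s own later snapshots still show $C_j$ or $N$, so every one of its later actions is null or a move toward $s$. By fairness $s$ takes a fresh \Look that observes $N$, writes $M$, and moves to $r$'s current position.

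The main obstacle is excluding interference during this final move. If $r$ is itself moving toward $s$, $s$ must not compute a stale target. Since $r$ shows $N$ or $M$ from this point on, $s$ never sees a phase color again, so $s$ computes only $\mathsf{Other}$ or $\mathsf{Stay}$. Likewise $r$, which sees $M$ while $s$ moves, stays. At most one stale cycle of $r$ can remain, namely one whose snapshot saw $N$ of $s$. That would make both robots target each other, which requires both to display $N$; but $r$ displays $N$ only after observing $C_j$, at which point $s$ displays $C_j$, not $N$. I would verify this ordering carefully to rule out the swap pattern. Given it, the $O/S$ conclusion together with rigidity yields Rendezvous, and the zero-distance rule keeps the robots together afterwards.
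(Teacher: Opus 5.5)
Your overall route matches the paper's: $r$ arrives at the midpoint, gets offset $2$ against the still-visible $C_j$ and becomes an $N$-anchor, and $s$ eventually observes $N$ and moves under the $M$ guard. There is, however, a genuine gap in how you pin down the color of $s$.

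You allow the pending pre-$M$ cycle of $s$ to be an offset-$2$ snapshot that writes $N$. This branch is not harmless. Suppose $r$ observes $C_j$ at $m$ and writes $N$, and only afterwards the stale cycle of $s$ completes and writes $N$. Now both robots display $N$. Let both take a \Look before either \Compute. Each writes $M$ and moves to the other's old position, so they swap at distance $D/2$, and from then on every \Look observes $M$ and is null forever.

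Your ordering argument (``when $r$ writes $N$, $s$ displays $C_j$'') does not exclude this. The stale cycle of $s$ may complete after $r$'s \Look. You yourself leave this step unverified, and the rest of your proof depends on it.

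The missing idea is that the offset-$2$ branch for $s$ is vacuous. The argument runs as follows.

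\begin{enumerate}
\item Since $s$ published $C_j$ from an $I$-snapshot at the unchanged distance $D$, we have $j=\kappa_s^{(3)}(D)$.
\item Since $r$'s triggering offset is $0$, also $\kappa_r^{(3)}(D)=j$.
\item Before the first $M$, $r$ can display only $I$ or $C_{\kappa_r^{(3)}(D)}=C_j$. It never enters $N$, because its offset against $s$'s only possible phase color $C_j$ is always $0$.
\item Hence any phase snapshot of $s$ has offset $j-j=0$ (Lemma~\ref{lem:phase-complementarity}) and yields a midpoint move. That is the $H/H$ case, which the hypothesis excludes.
\item So the null pre-$M$ action of $s$ is the $I$-rule republishing $C_j$, and $s$ displays $C_j$ until its first fresh \Look of $N$.
\end{enumerate}

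This is exactly the paper's claim that $s$ ``continues to display the phase color $C_j$.'' It makes your branch ``$r$ observes $N$'' vacuous. It also forces every activation of $r$ after arrival to be null: it has offset $2$ against $C_j$, or it observes $M$. This eliminates the swap pattern, after which your final-move argument under Lemma~\ref{lem:six-color-moving-guard} goes through.
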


\begin{proof}
By the compatibility argument, the \(H/S\) case arises only
when \(s\) has no pre-\(M\) action or when its pre-\(M\)
action is the null initialization rule.
In either case, \(s\) remains at its current position and
continues to display the phase color \(C_j\) observed by \(r\).
Let \(D\)
be the physical distance before the midpoint move.
After \(r\) reaches the midpoint, the distance is \(D/2\).

Since \(r\) selected a midpoint move, its distance class before
the move was \(j\).
By Lemma~\ref{lem:periodic-distance-class-shift}, its new class
is \(j-1\), and hence its offset from the still-visible color
\(C_j\) is
$
    (j-1)-j=2\pmod3.
$
Thus every activation of \(r\) before the peer changes to \(M\)
makes \(r\) display \(N\) and stay at the midpoint.

While \(r\) displays \(M\), every fresh cycle of \(s\) is
null.
Once \(r\) has reached the midpoint and changed to \(N\),
any earlier null cycle of \(s\) finishes in finite time.
Fairness then ensures that \(s\) observes \(N\), writes
\(M\), and moves to \(r\).
%

Thereafter every fresh \Look of \(r\) observes \(M\), so \(r\)
remains stationary.
By rigidity, \(s\) reaches \(r\), and Rendezvous is achieved.
\end{proof}

\setcounterref{theoremduplicate}{th:six-color-FCOM-ASYNCH}
\addtocounter{theoremduplicate}{-1}

\begin{theoremduplicate}
[Reprint of Theorem~\ref{th:six-color-FCOM-ASYNCH} on 
page~\pageref{th:six-color-FCOM-ASYNCH}]
\label{duplicate:th:six-color-FCOM-ASYNCH}
Under VD+FUD and Rigid movement,
\textsc{NS-FCOM-6-color} solves Rendezvous for two \(\FC\) robots
under \ASY from the designated initial color \(I\).
Consequently, there exists a six-color
non-quasi-self-stabilizing Rendezvous algorithm in this setting.
\end{theoremduplicate}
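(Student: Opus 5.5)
The plan is to assemble the theorem from the two appendix lemmas just established, Lemma~\ref{ap:lem:six-color-stale-compatible} and Lemma~\ref{ap:lem:six-color-anchor}, together with the moving-guard Lemma~\ref{lem:six-color-moving-guard}.

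Fix a fair \ASY\ execution from \((I,I)\) at physical distance \(D>0\); if \(D=0\), the zero-distance rule already keeps the robots together. First I invoke Lemma~\ref{ap:lem:six-color-stale-compatible}. It says that a first guarded non-null move by some robot \(r\) occurs after finitely many activations. It also says that, together with the at most one stale cycle of \(s\) whose \Look precedes the first \(M\), the geometric action pair is \(O/S\), \(H/H\), or \(H/S\) up to symmetry. Lemma~\ref{lem:six-color-moving-guard} guarantees that no other non-null action of \(s\) can interleave while \(r\) displays \(M\), since every fresh \Look of \(s\) sees \(M\) and stays. I would then split into the three cases.

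\emph{Case \(O/S\).} Robot \(s\) stays, and its stale cycle (if any) is null. Every fresh \Look of \(s\) during \(r\)'s move observes \(M\) and is null. By rigidity, \(r\) reaches \(s\). The \(S/O\) case is symmetric.

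\emph{Case \(H/S\).} This is exactly Lemma~\ref{ap:lem:six-color-anchor}. The mover lands at the midpoint with offset \(2\) from the still-visible \(C_j\), so it becomes an \(N\)-anchor and stays. The peer then eventually enters \(M\) and moves onto it.

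\emph{Case \(H/H\).} Both destinations are the same physical midpoint \(m\), computed from the same configuration at distance \(D\). Suppose \(r\) arrives first, while \(s\)'s stale midpoint move is still pending. I must show that every activation of \(r\) before \(s\) arrives is null. If \(r\) looks while \(s\) displays \(M\), it stays. If \(s\) still displays its phase color \(C_j\), with \(j=\kappa_r^{(3)}(D)\) because the offset was \(0\), then the current distance is some \(D'\le D\).

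This is the main obstacle. Under full \ASY, \(s\) may be mid-move, so \(D'\) is not necessarily \(D/2\), and the offset need not be \(2\). I would first try to argue that \(s\) commits \(M\) at the end of its \Compute. By the time \(s\) moves, it therefore already displays \(M\), and \(r\) can only see \(C_j\) while \(s\) is still at its original position, i.e.\ at distance exactly \(D/2\). There the offset is \((j-1)-j=2\), so \(r\) writes \(N\) and stays. Having \(r\) turn to \(N\) is harmless: \(s\)'s pending move still ends at \(m=p_r\), and once \(s\) arrives the distance is zero.

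Finally, after Rendezvous the zero-distance priority rule keeps both robots fixed. Any remaining stale cycle is null, because every pending non-null move has been accounted for in the case analysis. This yields finite-time Rendezvous for every fair execution. Counting the colors \(I,C_0,C_1,C_2,M,N\) gives the six-color non-quasi-self-stabilizing consequence.
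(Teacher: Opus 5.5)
Your proposal is correct and follows essentially the same route as the paper. Like the paper, you combine the stale-compatibility and moving-guard lemmas to reduce to the action pairs \(O/S\), \(H/H\), \(H/S\), settle \(O/S\) by rigidity, and delegate \(H/S\) to the anchor lemma. In the \(H/H\) case you argue explicitly that the light update is committed before the \Move begins, so the first arriver can see the peer's phase color only while the peer is still at its original position, i.e.\ at distance exactly \(D/2\) with offset \(2\). This is exactly what underlies the paper's remark that the first arriver ``either observes \(M\), or observes the still-visible phase color at the halved distance and enters \(N\)''. The case where \(s\) arrives first, which you leave implicit, is even simpler, since \(r\) still displays \(M\).
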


\begin{proof}
Consider an arbitrary fair execution starting from \((I,I)\) at
positive physical distance.

By Lemma~\ref{ap:lem:six-color-stale-compatible}, a first robot
\(r\) enters \(M\) after finitely many activations and computes
a non-null move.
By Lemma~\ref{duplicate:lem:six-color-moving-guard},
until \(r\) completes this move, every action of \(s\)
generated from a fresh snapshot is null;
at most one cycle whose \Look occurred before the first \(M\)
may remain incomplete.

By Lemma~\ref{ap:lem:six-color-stale-compatible}, the possible
geometric action pairs are, up to symmetry,
$
    O/S,\: H/H,\: H/S.
$

In the \(O/S\) case, the moving robot rigidly reaches the
stationary one.

In the \(H/H\) case, both actions have the same physical
midpoint as destination, even if one of the \Compute operations
is delayed.
After the first robot reaches that midpoint, every one of its
subsequent activations is null: it either observes \(M\), or
observes the still-visible phase color at the halved distance
and enters \(N\).
Thus it remains at the midpoint until the other Rigid midpoint
move arrives.

In the \(H/S\) case,
Lemma~\ref{ap:lem:six-color-anchor} guarantees that the midpoint
robot becomes an \(N\)-anchor and the other robot eventually
reaches it as an \(M\)-chaser.

Hence Rendezvous is achieved in finite time in every case.
The zero-distance rule then makes the robots remain at the
common position.
\end{proof}
\DE



\end{document}